\documentclass[letterpaper,12pt]{article}

\usepackage{graphicx}                
\usepackage{float}                   
\usepackage[hypertexnames=false]{hyperref}               
\usepackage[margin=1in]{geometry}  
\usepackage[short]{datetime}         
\usepackage[labelfont=bf]{caption}   
\usepackage[bottom]{footmisc}        
\usepackage{lineno}
\usepackage{ragged2e}
\usepackage[protrusion=true,expansion=true]{microtype} 
\usepackage[T1]{fontenc}                               
\usepackage{lmodern}

\usepackage[compact]{titlesec}
\titleformat*{\section}{\large\bfseries}
\titlespacing*{\section}{0pt}{1ex}{1ex}
\titleformat*{\subsection}{\normalsize\bfseries}
\titlespacing*{\subsection}{0pt}{0ex}{0ex}
\titleformat*{\subsubsection}{\normalsize\bfseries}
\titlespacing*{\subsubsection}{0pt}{0ex}{0ex}

\usepackage{amsfonts,amssymb,amsmath, mathtools}
\usepackage{xcolor, color, soul}
\usepackage{graphicx}
\usepackage{multicol}
\usepackage{url}
\usepackage{faktor}
\usepackage{ulem}
\usepackage{pdfpages}
\usepackage{soul}
\usepackage{relsize}
\usepackage{enumitem}
\usepackage{setspace}
\DeclareMathAlphabet\mathbfcal{OMS}{cmsy}{b}{n}
\usepackage{bm}

\usepackage{amsthm}
\usepackage{natbib}
\usepackage{tikz}
\usepackage{circledsteps}
\usepackage{bibunits}
\defaultbibliographystyle{apalike} 
\defaultbibliography{refs}

\def\indep{{\,\perp \!\!\! \perp\,}}
\newcommand{\Var}[0]{\text{Var}}

\newcommand{\Cov}[0]{\text{Cov}}

\newtheorem{theorem}{Theorem}

\newtheorem{lemma}{Lemma}
\newtheorem{supptheorem}[lemma]{Theorem}
\newtheorem{supplemma}[lemma]{Lemma}
\newtheorem{suppcoro}[lemma]{Corollary}

\newtheorem*{assumption*}{\assumptionnumber}
\providecommand{\assumptionnumber}{}


\begin{document}

\center{
\textbf{\Large Causal Inference for Functional Treatments with Stochastic Policies} \\}
\vspace{-.05in}
\center{\large Martha Barnard, Jared D. Huling, Julian Wolfson}
\vspace{-.1in}
\center{Division of Biostatistics and Health Data Science, School of Public Health, University of Minnesota}
\begin{abstract}
    Wearable devices can accurately measure human behavior, providing a unique opportunity to understand how behavior impacts health. Recent studies leveraging functional regression methods have found a strong relationship between accelerometer-collected physical activity and mortality. However, to determine if physical activity patterns impact mortality it is necessary to understand the \textit{causal} effects of policies for physical activity, i.e., a function-valued treatment. Functional treatments present several challenges for causal effect estimation: 1) defining a scientifically meaningful estimand that reflects real-world policies and satisfies positivity is nontrivial; and 2) the potential for temporal confounding over continuous time.
To address these, we propose stochastic policies for functional treatments that allow estimation of causal effects of changing the treatment distribution without requiring a positivity assumption. 
We develop a novel method for such that modifies the treatment through a single basis function chosen by the analyst, allowing for clear control over treatment modification and temporal confounding feedback. 
We show asymptotic normality of our estimators and that they exhibit rate double robustness. We apply our methods to the National Health and Nutrition Examination Survey to determine the causal effect of increasing physical activity over three-hour periods on mortality.
\end{abstract}
\justifying
\noindent%
{\it Keywords:} Functional data analysis, time-varying confounding, observational data, positivity violations
\vfill
\noindent\rule{2in}{0.4pt} \\
{\small * barna126@umn.edu}

\setstretch{1.2}
\begin{bibunit}
\section{Introduction}
The ubiquity of mobile and wearable devices has made accurate and densely measured human behavior data more accessible than ever before, providing an unprecedented opportunity to understand how human behavior affects health.
One example of this data is 
24-hour, minute-level physical activity data collected with accelerometers as part of the National Health and Nutrition Examination Survey (NHANES) \citep{chen_national_2018}.  This minute-level data can be considered a realization of a true continuous, infinite-dimensional function over the 24-hour time period and analyzed using functional data methods \citep{crainiceanu_functional_2024}.
However, there is an interest in understanding the causal, rather than associative or predictive, relationship between this data and health outcomes to identify behavior changes that improve population health. 
Specifically, we want to estimate the causal effect of changes in physical activity on all-cause mortality where:
a) physical activity is viewed as a functional treatment
b) the set of ``treatments" (i.e., potential physical activity trajectories) is both individually plausible and easily modifiable by the analyst
c) assumptions about time-varying confounding of the effect of physical activity on mortality are as minimal as possible


There has been recent methodological development in estimating the causal effect of a functional treatment, which presents unique challenges due to its infinite dimension \citep{delaigle_defining_2010}. \cite{zhang_covariate_2021}, \cite{tan_causal_2025}, and  \cite{wang_flexible_2026} develop methods for estimating the average dose response functional (ADRF), i.e., the expectation of the counterfactual outcome given that every individual has been assigned a specific, fixed functional treatment.
While these methods do address the technical challenges presented by an infinite dimensional treatment, there are a variety of limitations to estimating the ADRF for a functional treatment. Positivity is often violated for the ADRF, especially with physical activity data, as many physical activity trajectories are implausible or impossible for some individuals. For example, older individuals may not be able to achieve a high level of physical activity that younger individuals exhibit regularly. Furthermore, since the ADRF is often unrealistic for some subset of individuals, it also tends to lack a meaningful scientific interpretation. \cite{jiang_estimating_2026} address the limitations of the ADRF through modified functional treatment policies (MFTPs) in which the estimand of interest is the expected counterfactual outcome given a modification to the natural value of treatment according to a specified deterministic rule. However, it can be challenging to \textit{a priori} specify a deterministic modification rule that is scientifically relevant and satisfies positivity for all combinations of covariate values.

Another key limitation of the available methods for functional treatment causal effect estimation is that they rely on the assumption of no time-varying confounding across the functional domain. However, this assumption is unlikely to be satisfied for physical activity data. For example,  physical activity levels in the morning may impact physical activity levels in the evening as well as the health outcome of interest. 
In contrast, there are a variety of methods for adjusting for time-varying confounding for longitudinal treatments received at discrete time points, such as g-computation and marginal structural models \citep{robins_new_1986, robins_marginal_2000}. 
However, for functional treatments it is often unclear for which time periods it is necessary to adjust for time-varying confounding as there is the potential for an infinite number of treatment-confounder ``feedbacks''. \cite{ying_causality_2024-1} and  \cite{ying_causality_2024} causally identify the effect of continuous-time dynamic treatment regimes while accounting for potentially infinite treatment-confounder feedback, however they do not propose estimators for their identified estimand.

To address these challenges, we develop a method for estimating the causal effect of a modification to the functional treatment distribution over a subset of the temporal domain (e.g, three hours) with stochastic dynamic policies. By considering a policy on a subset of the temporal domain, we can evaluate the impact of behavior changes over realistic time periods and systematically address temporal confounding.
We propose representing the functional treatment through a novel basis expansion and then modifying the treatment distribution through a single basis function selected by the analyst; this allows for a clear control and interpretation of the treatment policy and for temporal confounding feedback to occur across a majority of the functional treatment. We then develop stochastic policies and corresponding estimands for functional treatments that adjust for temporal confounding across the entire temporal domain. These policies 1) do not require a positivity assumption; 2) vary with covariates implicitly in a realistic manner; and 3) reflect real-world policies by modifying the \textit{distribution} of treatment or behavior rather than the treatment itself \citep{kennedy_nonparametric_2019, diaz_causal_2020, diaz_nonparametric_2023,schindl_incremental_2026}. We propose double machine learning-style estimators for these estimands and show they are asymptotically normal and rate double robust. We validate the performance of our proposed estimators through simulation studies and apply our methods to NHANES to estimate the effect of increases of physical activity over different time periods on 5-year all-cause mortality.

\section{Background and rationale}
\label{sec:more_lit_review}

\subsection{Notation}
Let $\{\bm{X}_i, Y_i, A_i(\cdot)\}_{i=1}^n$ be an identically distributed and independent sample from a population where $\bm{X}_i$ is a $p$ length vector of pre-treatment covariates, $Y_i$ is an outcome of interest and $A_i(t)$ is a stochastic process over $t \in [0, T]$, representing a functional treatment of interest. 
 Let $Y(a(\cdot))$ be the potential outcome of $Y$ given that $a(\cdot)$ had been observed.  Let $||\cdot||_2$ be the Euclidean norm. For our application of interest, $T$ is 24 hours. However, it is unrealistic for individuals to modify behavior over a 24-hour period so we consider policies enacted on $A(t)$ over $[t_1, t_2] \subset [0, T]$.
 Thus, we partition $A(\cdot)$ into the following three functions: $A^{(1)}(t) = \{A(t); t\in T_1 = [0, t_1)\}$, $A^{(2)}(t)= \{A(t); t \in T_2 = [t_1, t_2]\}$, and $A^{(3)}(t) = \{A(t); t \in T_3 = (t_2, T]\}$. With slight abuse of notation, we also sometimes consider $A^{(k)}(t) = A(t)I(t \in T_k)$ such that these functions share the same domain, $[0, T]$.

\subsection{Time-varying confounding}
As far as we are aware, there are no available estimators that account for the potentially infinite number of treatment-confounder ``feedbacks''between a functional treatment and scalar outcome. However, by making selective assumptions about time-varying confounding, the complexity of adjusting confounding over the entire, continuous temporal domain can be reduced. While assuming no time-varying confounding is often unrealistic for treatments measured over months or years, it may be reasonable over shorter periods, such as a couple hours (e.g., subsets of the 24-hour physical activity data). 
 Therefore, to leverage finite, discrete-time longitudinal causal inference techniques, 
we consider partitioning the functional temporal domain into three time periods: $[0, t_1)$, $[t_1, t_2]$, and $(t_2, T]$.  Our approach is to construct policies that only modify the treatment over $[t_1, t_2]$ so that it is not necessary to adjust for time-varying confounding within the intervals $[0, t_1)$ and $(t_2, T]$.  However, we do place some restrictions on the time-varying confounding within the time period of interest, $[t_1, t_2]$.
For our proposed estimands, these restrictions still allow for some temporal confounding to occur within this interval such that we do not require the strong assumption of no temporal confounding over $[t_1, t_2]$ (see Section \ref{sec:estimand_id}). 
Given these restrictions, it is only necessary to adjust for the time-varying confounding that occurs \textit{between} the three time periods; this is equivalent to the discrete time longitudinal treatment setting with three time points. 


\subsection{Deterministic MFTPs and stochastic policies}
Modified functional treatment policies (MFTPs) define a deterministic policy or rule, $d(\bm{X}, A(\cdot))$, that modifies the observed or natural value of the functional treatment \citep{jiang_estimating_2026}. One common example of such a policy is $d(\bm{X}, A(\cdot)) = cA(\cdot)$; the corresponding estimand,  $E[Y(d(\bm{X}, A(\cdot))]$ is the expected potential outcome if all individuals modify their physical activity by $100(c-1)$\%. 
The effect of a deterministic policy is most interpretable when it reflects a practical, real-world policy that could be implemented (e.g., increasing the dose of a drug by $c$). For our scenario, however, it is unrealistic to envision all individuals exactly adopting a deterministic modification to their physical activity, and so the causal effects of such policies can be difficult to interpret.
Furthermore, the goal of a policy that could be implemented in practice (e.g., recommending individuals increase their physical activity) is to modify an individuals general pattern of behavior (i.e., distribution of behavior), rather than achieve an exact, modified behavior. 
We therefore propose adopting stochastic, rather than deterministic, policies for functional treatments. Stochastic dynamic policies define a modified conditional distribution of treatment such that the policy modifies the treatment distribution rather than the treatment itself \citep{kennedy_nonparametric_2019,diaz_causal_2020,diaz_nonparametric_2023,schindl_incremental_2026}.
These policies can both implicitly satisfy positivity \citep{kennedy_nonparametric_2019,schindl_incremental_2026} and vary by covariates in a manner that reflects the observed joint treatment distribution; in contrast, the analyst has to \textit{a priori} specify a potentially complex deterministic function of the observed treatment and covariates for MFTPs. Since we are only interested in policies that vary with covariates, we refer to stochastic dynamic policies as stochastic policies.
Figure \ref{fig:toy_example} provides an illustrative example.
While functions do not have a probability density, for illustrative purposes, we consider defining a distribution $q(\cdot|\bm{X}, A^{(1)}(\cdot))$ from which to select the modified function over $[t_1, t_2]$ conditional on covariates and the function over $[0, t_1)$. 


\begin{figure}[h!]
 \centering
\begin{tikzpicture}
    \node[anchor=south west, inner sep=0pt] (base_image) at (0,0) {\includegraphics[width=1\textwidth]{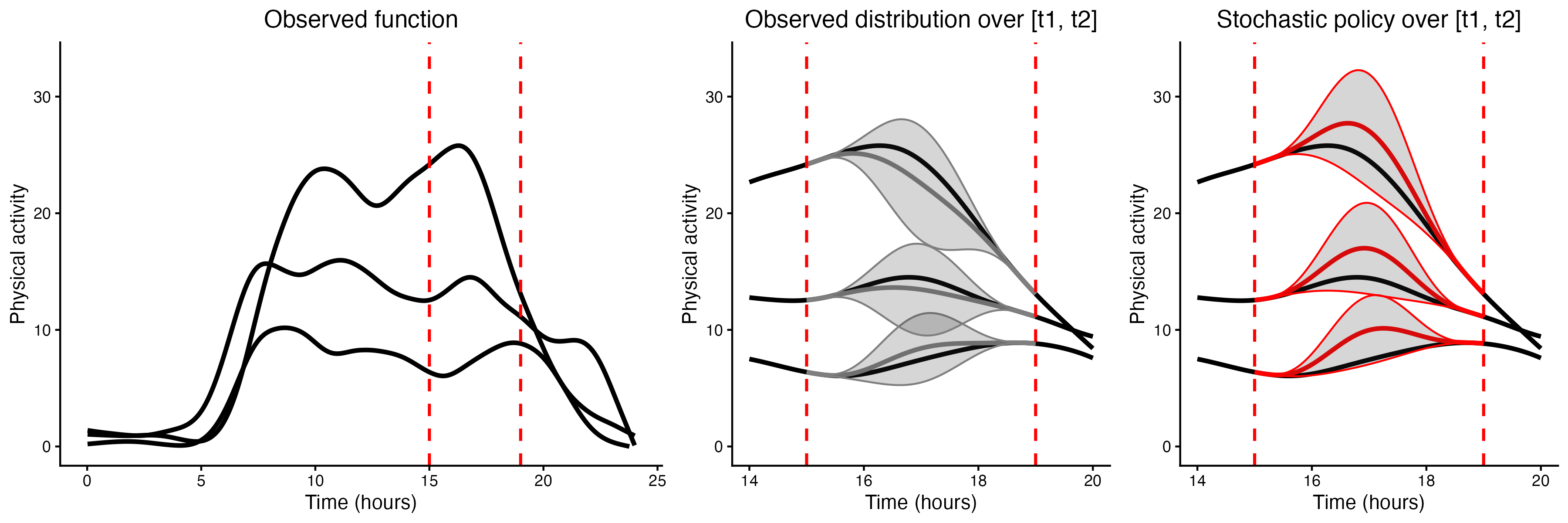}};
    \node[font=\tiny, anchor=north west, inner sep=0pt, xshift=3cm, yshift=-1.15cm] at (base_image.north west) {$A_1^{(1)}(\cdot)$};
    \node[font=\tiny, anchor=north west, inner sep=0pt, xshift=3cm, yshift=-2.1cm] at (base_image.north west) {$A_2^{(1)}(\cdot)$};
    \node[font=\tiny, anchor=north west, inner sep=0pt, xshift=3cm, yshift=-3.5cm] at (base_image.north west) {$A_3^{(1)}(\cdot)$};
    \node[font=\tiny, anchor=north west, inner sep=0pt, xshift=4.2cm, yshift=-0.9cm] at (base_image.north west) {$A_1^{(2)}(\cdot)$};
    \node[font=\tiny, anchor=north west, inner sep=0pt, xshift=4.2cm, yshift=-2.2cm] at (base_image.north west) {$A_2^{(2)}(\cdot)$};
    \node[font=\tiny, anchor=north west, inner sep=0pt, xshift=4.2cm, yshift=-3.5cm] at (base_image.north west) {$A_3^{(2)}(\cdot)$};
    \node[font=\tiny, anchor=north west, inner sep=0pt, xshift=5.5cm, yshift=-2.9cm] at (base_image.north west) {$A_1^{(3)}(\cdot)$};
    \node[font=\tiny, anchor=north west, inner sep=0pt, xshift=5.5cm, yshift=-3.5cm] at (base_image.north west) {$A_2^{(3)}(\cdot)$};
    \node[font=\tiny, anchor=north west, inner sep=0pt, xshift=5.1cm, yshift=-4.1cm] at (base_image.north west) {$A_3^{(3)}(\cdot)$};
    \node[font=\tiny, anchor=north west, inner sep=0pt, xshift=7.8cm, yshift=-3.9cm] at (base_image.north west) {$f\left(A_i^{(2)}(\cdot) \middle\vert A_i^{(1)}(\cdot)\right)$};
    \node[font=\tiny, anchor=north west, inner sep=0pt, xshift=12.1cm, yshift=-3.9cm] at (base_image.north west) {$q\left(A_i^{(2)}(\cdot) \middle\vert A_i^{(1)}(\cdot)\right)$};
\end{tikzpicture}
\caption{\footnotesize An illustrative example of stochastic policies for functional treatments. The left panel shows three observed physical activity function over 24-hours, where the red dotted line shows the time period for policy implementation (ie., $[t_1, t_2]$). The middle panel shows the observed function (black) and the expected value and 95\% bounds of the observed conditional distribution of treatment over $[t_1, t_2]$ (gray). The right panel shows the observed function (black) and the expected value and 95\% bounds of a stochastic policy distribution of treatment over $[t_1, t_2]$ (red). }
    \label{fig:toy_example}
\end{figure}

\subsection{Defining functional treatment estimands through basis approximation}
Common causal estimand identification strategies (e.g., inverse weighting by the conditional probability of receiving the observed treatment) cannot be directly applied to functional treatment estimands because they lack probability densities \citep{delaigle_defining_2010}.  It is further unclear how to construct well-defined stochastic policies for functional treatments. However, while infinite-dimensional functions do not have probability densities, their finite basis approximations do. 
Functional principal component analysis (FPCA) yields one such basis decomposition, $A(t) = a_0(t) +\sum_{j=1}^{\infty} \theta_j^{1/2}A_j\psi_j(t)$. In this decomposition,
$a_0(t)$ is the point-wise expectation of $A(t)$,  $\theta_j$ and $\psi_j(t)$ are the eigenvalues and eigenfunctions, and $A_j$
are random variables with $E[A_j] = 0$, $\Var(A_j) = 1$, and $\Cov(A_j, A_k) = 0$ for $j = 1, \ldots, \infty$, $j \neq k$. The finite approximation of $A(\cdot)$, $A_J(\cdot) = a_0(\cdot) + \sum_{j=1}^{J} \theta_j^{1/2}A_j\psi_j(\cdot)$, minimizes the total mean squared error (MSE) among all approximations of size $J$ and explains $\sum_{j=1}^J \theta_j / \sum_{j=1}^{\infty} \theta_j \%$ of the total variance in $A(\cdot)$. 
\cite{jiang_estimating_2026} demonstrate that functional treatment estimands can be defined in the limit as $J \to \infty$ of the joint distribution of $\{A_1, A_2, \ldots, A_J\}$ such that 
we could define a stochastic policy by modifying the conditional distribution of $\{A^{(2)}_1, A^{(2)}_2, \ldots, A^{(2)}_{J_2}\}$. However, it is unclear how to directly modify this distribution to yield interpretable changes to the functional treatment. Thus, we propose methods for defining the stochastic policy through a specific \textit{univariate} conditional distribution in the following sections.

\section{Causal estimands for functional treatments with stochastic policies}
\label{sec:methods}
We develop causal estimands corresponding to modifications to the functional treatment distribution. We first propose a novel procedure for functional basis construction in which one basis function can be deliberately chosen and modified by the analyst. 
We then define and identify two different classes of stochastic policies and estimands that account for temporal confounding. 
Finally, we discuss how to choose the basis function and stochastic policy distribution in practice.

\subsection{Novel functional basis construction for defining stochastic policies for functional treatments}
\label{sec:novel_basis}
We propose the following procedure which allows the analyst to explicitly choose a single basis function such that the stochastic policy can be define through only this selected function:  
\begin{enumerate}
    \item Compute the eigenfunctions $\{\psi_j(\cdot)\}_{j=1}^\infty$ from FPCA
    \item Define a function,  $\gamma_1(\cdot)$, that is continuous over $[0, T]$ and satisfies $\Vert\gamma_1\Vert_{2}= 1$
    \item Implement the Gram-Schmidt process on the set of functions $\{\gamma_1(\cdot), \psi_1(\cdot), \psi_2(\cdot), \ldots\}$ and obtain the orthonormal basis $\{\gamma_1(\cdot), \gamma_2(\cdot), \ldots \}$ 
\end{enumerate}
By Hilbert space theory for countably infinite spaces, $A(t) = a_0(t) + \sum_{j=1}^{\infty}B_j\gamma_j(t)$,
where $B_j = \int_{0}^{T} A(t)\gamma_j(t) dt$ is a random variable and $E[B_j] = 0$. In general, $\Cov(B_j, B_k) \neq 0$ and we do not obtain $\Var(B_j)$ or $\Cov(B_j, B_k)$ from the decomposition process as with FPCA.
To obtain a truncated basis from this procedure, we first select a $J$ such that $\{\psi_j(\cdot)\}_{j=1}^{J-1}$ explains a substantial proportion of the total variance of $A(\cdot) - a_0(\cdot)$ (e.g., $99\%$). Then, in step 3) we implement the Gram-Schmidt process on $\{\gamma_1(\cdot), \psi_1(\cdot), \ldots, \psi_{J-1}(\cdot)\}$, resulting in the orthonormal basis $\{\gamma_j(\cdot)\}_{j=1}^J$ which spans the same space as the $J-1$ eigenfunctions and $\gamma_1(\cdot)$. While intuitively, this seems to imply that the $\{\gamma_j(\cdot)\}_{j=1}^J$ approximation of $A(\cdot)$ is at least as good as the $J-1$ FPCA approximation this is result is not immediately clear since $\{B_j\}_{j=1}^\infty$ are correlated unlike the basis coefficients from FPCA. However, we show that the proportion of variance explained by $\{\gamma_j(\cdot)\}_{j=1}^J$ is greater than or equal to the proportion of variance explained by $\{\psi_j(\cdot)\}_{j=1}^{J-1}$ as stated in Theorem \ref{thm:novel_basis}.


\begin{theorem}
    Let $\bar{A}_{J-1}(\cdot)$ and $\tilde{A}_J(\cdot)$ be, respectively, the $\{\psi_j(\cdot)\}_{j=1}^{J - 1}$ basis (FPCA) and $\{\gamma_j(\cdot)\}_{j=1}^J$ basis approximations of $A(\cdot) - a_0(t)$. For any choice of $\gamma_1(\cdot)$, $\int_0^T \Var\{\tilde{A}_J(t)\}dt \geq \int_0^T \Var\{\bar{A}_{J-1}(t)\}dt$. 
    \label{thm:novel_basis}
\end{theorem}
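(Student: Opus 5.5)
The plan is to rewrite both sides as orthogonal projections and then compare the projections through the subspaces they map onto. Write $Z(\cdot) = A(\cdot) - a_0(\cdot)$, a mean-zero random element of $L^2[0,T]$. Let $P_V$ denote the orthogonal projection onto a closed subspace $V$. Set $V_{J-1} = \mathrm{span}\{\psi_1,\ldots,\psi_{J-1}\}$ and $W_J = \mathrm{span}\{\gamma_1,\ldots,\gamma_J\}$. Then $\bar{A}_{J-1} = P_{V_{J-1}} Z$ and $\tilde{A}_J = P_{W_J} Z$, because both are truncated expansions in orthonormal bases of these subspaces. The Gram--Schmidt construction in Section \ref{sec:novel_basis} gives
\[
W_J = \mathrm{span}\{\gamma_1,\psi_1,\ldots,\psi_{J-1}\} \supseteq V_{J-1}.
\]
(If $\gamma_1 \in V_{J-1}$, then Gram--Schmidt produces a zero vector that is dropped, $W_J = V_{J-1}$, and equality holds trivially.)

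Next, express the integrated variance of any such projection as an expected squared norm. Fix an orthonormal basis $\{e_k\}$ of $V$. Since $E[Z(t)]=0$, Fubini and orthonormality give
\[
\int_0^T \Var\{P_V Z(t)\}\,dt = E\Vert P_V Z\Vert_2^2 = \sum_k E\langle Z, e_k\rangle^2 = \sum_k \langle \mathcal{C} e_k, e_k\rangle = \mathrm{tr}(P_V \mathcal{C} P_V),
\]
where $\mathcal{C}$ is the covariance operator of $Z$. The correlation among the $B_j$ plays no role here: the cross terms $E[B_jB_k]\int\gamma_j\gamma_k\,dt$ vanish because the basis functions are orthonormal, not because the coefficients are uncorrelated. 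This is precisely the concern the paper raises, and this identity resolves it.

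The main step is monotonicity under nesting of subspaces. Because $V_{J-1}\subseteq W_J$, we can write $P_{W_J} = P_{V_{J-1}} + P_U$, where $U = W_J \ominus V_{J-1}$. Then $P_{W_J}Z = P_{V_{J-1}}Z + P_U Z$ with orthogonal summands, so by Pythagoras
\[
E\Vert P_{W_J}Z\Vert_2^2 = E\Vert P_{V_{J-1}}Z\Vert_2^2 + E\Vert P_U Z\Vert_2^2 \ge E\Vert P_{V_{J-1}}Z\Vert_2^2 .
\]
This is exactly the claimed inequality. The argument holds for any unit-norm $\gamma_1$, and it does not use the optimality of FPCA. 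It only needs the eigenfunctions to span a subspace of $W_J$.

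The expected obstacle is bookkeeping rather than substance. Two things need care. First, the interchange of expectation, integral and sum requires $E\Vert Z\Vert_2^2 = \sum_j\theta_j < \infty$, which is implicit in the FPCA setup. Second, the degenerate Gram--Schmidt case has to be handled, and it is covered by the equality remark above.
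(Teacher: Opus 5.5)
Your proof is correct and is essentially the paper's argument in coordinate-free form. The paper also uses the nesting $\mathrm{span}\{\psi_1,\dots,\psi_{J-1}\}\subseteq\mathrm{span}\{\gamma_1,\dots,\gamma_J\}$, written as $\psi_h=\sum_{j\le h+1}\langle\gamma_j,\psi_h\rangle\gamma_j$, to derive $\Vert\tilde A_J\Vert_2^2=\Vert\bar A_{J-1}\Vert_2^2+\sum_{j=1}^J\langle Z-\bar A_{J-1},\gamma_j\rangle^2$ (whose last term is exactly your $\Vert P_U Z\Vert_2^2$) and then takes expectations; your projection version replaces its index manipulations and auxiliary Bessel-inequality lemma with a single Pythagoras step, and your remark on the degenerate case $\gamma_1\in V_{J-1}$ covers a case the paper leaves implicit.
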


 The proof of Theorem \ref{thm:novel_basis} is in supplementary Section \ref{sec:supp_theory}. With our method, the analyst can define the stochastic policy over $[t_1, t_2]$ in terms of a single function $\gamma^{(2)}_1$, which we refer to as the \textit{policy basis}; let $B^{(2)}_1$ be the basis coefficient corresponding to this function.  Our approach resolves several key technical and interpretation challenges in defining stochastic policies for functional treatments: 1) $\gamma^{(2)}_1(\cdot)$ can be explicitly selected such that modifications to this function represent realistic changes to the treatment; 2) continuity can be maintained over the entire temporal domain by choosing a $\gamma_1^{(2)}(\cdot)$ that satisfies $\gamma_1^{(2)}(t_1) = \gamma_1^{(2)}(t_2) = 0$; and 3) estimands and estimators can be defined through the univariate density of $B_1^{(2)}$. 

\subsection{Defining stochastic policies}
A majority of longitudinal causal treatment effect methods, which evaluate treatments measured over months or years, define a stochastic policy that conditions only on past treatment. 
However, when treatments are measured over 24 hours, conditioning on both the past and future treatments may be justifiable (e.g., one's evening plans may influence their physical activity in the morning). In what follows, we present two stochastic policies and estimands that reflect these different conceptual ways of accounting for time-varying confounding.
We define the stochastic policy through the basis coefficient associated with $\gamma_1^{(2)}(\cdot)$, $B_1^{(2)}$. 
For simplicity in notation, let $A^{-}(\cdot) = A(\cdot) - B_1^{(2)}\gamma_1^{(2)}(\cdot) = a_0(\cdot) + \sum_{j=1}^\infty \theta_j^{(1)^{1/2}}A_j^{(1)}\psi_j^{(1)}(\cdot) + \sum_{j=2}^\infty B^{(2)}_j\gamma_j^{(2)}(\cdot) + \sum_{j=1}^\infty \theta_j^{(3)^{1/2}}A_j^{(3)}\psi_j^{(3)}(\cdot)$ and $A^{-[0, t_2]}(\cdot) = a_0(\cdot) + \sum_{j=1}^\infty \theta_j^{(1)^{1/2}}A_j^{(1)}\psi_j^{(1)}(\cdot) + \sum_{j=2}^\infty B^{(2)}_j\gamma_j^{(2)}(\cdot)$. Let $A^{-}_{\bm{J}}(\cdot)$ and $A^{-[0, t_2]}_{\bm{J}}(\cdot)$ be the finite basis approximations of $A^{-}(\cdot)$ and $A^{-[0, t_2]}(\cdot)$ for $\bm{J} = (J_1, J_2, J_3)$. Let $Q_1^{(2)} \sim q(\cdot|\bm{X}, A^{-}(\cdot))$ and $Q^{(2)*}_1 \sim q^*(\cdot|\bm{X},  A^{-[0, t_2]}(\cdot))$ where $q$ and $q^*$ are user-specified distributions. We define the following treatments: corresponding to $q$ and $q^*$:
\begin{align*}
    A^Q(\cdot) &= A^{-}(\cdot) + Q^{(2)}_1\gamma_1^{(2)}(\cdot), \\
    A^{Q^*}(\cdot) &= A^{-[0, t_2]}(\cdot) + Q^{(2)*}_1\gamma_1^{(2)}(\cdot) + \sum_{j=1}^\infty \theta_j^{(3)^{1/2}}A_j^{(3)}\left(Q^{(2)*}_1\right)\psi_j^{(3)}(\cdot) = A^{-}(\cdot) + Q^{(2)*}_1\gamma_1^{(2)}(\cdot). 
\end{align*}
$A^Q(\cdot)$ accounts for time-varying confounding through conditioning on $A^-(\cdot)$ in $q$. $A^{Q^*}(\cdot)$ accounts for time-varying confounding through conditioning on $A^{-[0, t_2]}(\cdot)$ in $q^*$ and via $\left\{A_j^{(3)}\left(Q^{(2)*}_1\right)\right\}_{j=1}^\infty$, the counterfactual FPCA basis coefficients for the functional treatment over $(t_2, T]$; the second equality for $A^{Q^*}(\cdot)$ holds since by definition,  $Q^{(2)*}_1 \indep \{A_j^{(3)}\}_{j=1}^\infty$ and thus  $A_j^{(3)}(Q^{(2)*}_1) = A_j^{(3)}$.
This is a specific property of stochastic policies in general; under a deterministic policy, the counterfactual future treatment would not necessarily be equal to the observed future treatment. 
Thus, the only difference between $A^Q(\cdot)$ and $A^{Q^*}(\cdot)$ is the conditioning set of the distributions $q$ and $q^*$. 
Since  $A^Q(\cdot)$ and $A^{Q^*}(\cdot)$ take similar forms, we simplify notation by considering an arbitrary stochastic policy distribution, $\tilde{q}$, where  $\tilde{Q}_1^{(2)} \sim \tilde{q}(\cdot | \bm{X}, A^{-\tilde{Q}})$ and $A^{\tilde{Q}}(\cdot) = A^{-}(\cdot) + \tilde{Q}_1^{(2)}\gamma_1^{(2)}(\cdot)$. 


\subsection{Estimand definition and identification}
\label{sec:estimand_id}
Due to infinite-dimensional functions lacking probability densities, it is necessary to show that functional treatment estimands are well-defined \citep{delaigle_defining_2010}.
Thus, prior to defining and identifying stochastic policy estimands, we show that the population average over a functional treatment can be defined in terms of the basis approximations over $[0, t_1)$, $[t_1, t_2]$, and $(t_2, T]$ for both the FPCA basis and our novel basis. We specifically focus on the functional $m(\bm{x}, a(\cdot)) = E(Y|\bm{X} = \bm{x}, A(\cdot) = a(\cdot))$. Let the population average given the finite FPCA approximation over each of $[0, t_1)$, $[t_1, t_2]$, and $(t_2, T]$ be defined as $  \mu_{\bm{J}} = E\left\{m\left(\bm{X}, a_0(\cdot) +  \sum_{k=1}^3\sum_{j=1}^{J_k} \theta^{(k)^{1/2}}_jA_j^{(k)}\psi_j^{(k)}\right)\right\}$.
Then, let $E\{m\left(\bm{X}, A(\cdot)\right)\} := \lim_{\bm{J} \to \infty} \mu_{\bm{J}}$ be the population average and  $\epsilon := Y - m\left(\bm{X}, A(\cdot)\right)$. We can make the comparable definitions for our proposed basis approximation; let $\tilde{\mu}_{\bm{J}} = E\left\{m\left(\bm{X}, a_0(\cdot) +  \sum_{k=1}^3\sum_{j=1}^{J_k} B_j^{(k)}\psi_j^{(k)}\right)\right\}$.
For the population average to be well-defined through these approximations, $\lim_{\bm{J} \to \infty} \mu_{\bm{J}}$  and $\lim_{\bm{J} \to \infty} \tilde{\mu}_{\bm{J}}$ must exist. When these limits are exist, $E[Y]$, $\lim_{\bm{J} \to \infty} \mu_{\bm{J}}$, and $\lim_{\bm{J} \to \infty} \tilde{\mu}_{\bm{J}}$ are each representations of the population average that naturally should align such that $E[Y] = \lim_{\bm{J} \to \infty} \mu_{\bm{J}} = \lim_{\bm{J} \to \infty} \tilde{\mu}_{\bm{J}}$. We require the following conditions to guarantee that these limits are well-defined:
\begin{itemize}
    \item (\textbf{C1}) The eigenvalues of $A(\cdot)$ over $[0, t_1)$, $[t_1, t_2]$, and $(t_2, T]$  converge to $0$ such that $\int \Var\{A^{(k)} (t)\}dt = \sum_{j=1}^\infty \theta_j^{(k)} < \infty$ for $k = 1, 2, 3$
    \item (\textbf{C2}) There exists a constant $C$ such that for two square-integrable functions over $[0, T]$, $a_1(\cdot)$ and $a_2(\cdot)$, the function $m$ satisfies $|m\{\bm{x}, a_1(\cdot)\} -  m\{\bm{x}, a_2(\cdot)\}| \leq C\Vert a_1(\cdot) - a_2(\cdot)\Vert_2$ for all $\bm{x}$
\end{itemize}
The assumption that $\sum_{j=1}^\infty \theta_j < \infty$ is commonly made when decomposing $A(\cdot)$ over the entire temporal domain and it in general holds when $A(\cdot)$ has a square-integrable kernel over $[0, T]$ \citep{hall_properties_2006}. In contrast, condition \textbf{C1} is a slightly weaker assumption as it generally holds when $A^{(k)}(\cdot)$ has a square-integrable kernel for $k=1,2,3$, i.e., when $A(\cdot)$ is locally square integrable over these three time periods. Condition \textbf{C2} is a Lipschitz condition on $m(\cdot)$.
Given these conditions, we show that the limits of both $\mu_{\bm{J}}$ and $\tilde{\mu}_{\bm{J}}$ are well defined and equivalent to $E[Y]$ as stated in Lemma \ref{thm:pop_avg_limit} and Corollary \ref{thm:corollary_pop_avg}.


Next, we define and identify the stochastic policy estimand  $\mu^{\tilde{Q}} = E[Y(A^{\tilde{Q}}(\cdot))]$ for $\tilde{q} = q^*, q$. Similar to \cite{schindl_incremental_2026}, we assume that $B_1^{(2)}$ has bounded support, $\mathcal{B}_1^{(2)} = [-M, M]$, for some $M < \infty$. When possible, we define conditions and assumptions in terms of $\tilde{q} = q, q^*$.
To identify this estimand with the observed data we make the following causal assumptions:
\begin{itemize}
    \item (\textbf{A1}: Consistency) If $A(\cdot) = a(\cdot)$, then $Y = Y\{a(\cdot)\}$
    \item  (\textbf{A$\tilde{\bm{2}}$}: Positivity) $B_1^{(2)}|\bm{x}, a^{-\tilde{Q}}(\cdot)$ and $Q_1^{(2)}|\bm{x}, a^{-\tilde{Q}}(\cdot)$ have the same support for all $\bm{x}, a^{-\tilde{Q}}(\cdot)$
    \item (\textbf{A3}: Ignorability) $Y\{a(\cdot)\} \indep B_1^{(2)} | \bm{X}, A^{-}(\cdot)$
    \item (\textbf{A$\bm{3}^*$}: Ignorability) $(Y\{a(\cdot)\} , A^{(3)}(\cdot)) \indep B_1^{(2)}| \bm{X}, A^{-[0, t_2]}(\cdot)$
\end{itemize}
Assumption \textbf{A1} is a standard causal assumption which ensures that $Y(A(\cdot)) = Y$.
Assumption \textbf{A$\tilde{\bm{2}}$} requires the modification to the observed function, $\tilde{Q}_1^{(2)}$, to be realistic conditional on $\bm{X}, A^{-\tilde{Q}}(\cdot)$. We can choose $\tilde{q} = q, q^*$ such that \textbf{A$\tilde{\bm{2}}$} is guaranteed to hold \citep{schindl_incremental_2026}; we discuss this further in Section \ref{sec:choosing_q}. Since, the conditioning set of $q$ is larger than that of $q^*$, there will generally be a smaller set of distributional shifts that satisfy positivity for $q$ compared to $q^*$.
For $q$, Assumption \textbf{A3} states that the potential outcome is independent of $B_1^{(2)}$ given covariates $\bm{X}$ and $A^{-}(\cdot)$.
In contrast, for $q^*$, assumption \textbf{A$\bm{3}^*$} states that the joint distribution of the potential outcome and the function over $(t_2, T]$ is independent of $B_1^{(2)}$ conditional on $\bm{X}$ and $A^{-[0, t_2]}(\cdot)$. This is a substantially stronger assumption than \textbf{A3} as the conditioning set is smaller and further requires $B_1^{(2)}$ to be conditionally independent of  $A^{(3)}(\cdot)$.

In Section \ref{sec:more_lit_review}, we discuss making restrictions on the type of time-varying confounding that occurs over $[t_1, t_2]$. 
Given that the stochastic policy only applies to $B_1^{(2)}$, we can allow for time-varying confounding within $[t_1, t_2]$ for $A^{-}(\cdot)$. Thus, we only require the assumption that there is no time-varying confounding within $[t_1, t_2]$ for  $B_1^{(2)}\gamma_1^{(2)}(\cdot)$ and between $B_1^{(2)}\gamma_1^{(2)}(\cdot)$ and $A^{-}(\cdot)$. The choice of $\gamma_1^{(2)}(\cdot)$ changes the strength of this assumption; in general, it is difficult to ascertain to what extent this assumption holds for different choices in $\gamma_1^{(2)}(\cdot)$. However, for any $\gamma_1^{(2)}(\cdot)$, this is a substantial relaxation of no time-varying confounding within $[t_1, t_2]$. 
To simplify notation, let $E_{\tilde{Q}}\{m(\bm{X}, A(\cdot))|\bm{X},  A^{-}(\cdot)\} = \int_{\mathcal{B}_1^{(2)}} m(\bm{X}, b_1^{(2)}\gamma_1^{(2)}(\cdot) + A^{-}(\cdot))d\tilde{Q}(b_1^{(2)}|\bm{X}, A^{-\tilde{Q}}(\cdot))$ and $E_{\tilde{Q}}\{m(\bm{X}, A_{\bm{J}}(\cdot))|\bm{X},  A_{\bm{J}}^{-}(\cdot)\} $ \\ $= \int_{\mathcal{B}_1^{(2)}} m(\bm{X}, b_1^{(2)}\gamma_1^{(2)}(\cdot) + A_{\bm{J}}^{-}(\cdot))d\tilde{Q}(b_1^{(2)}|\bm{X}, A_{\bm{J}}^{-\tilde{Q}}(\cdot))$. This expectation is marginalized over $B_1^{(2)}$ for $\tilde{q} = q, q^*$ such that this expectation is conditional on $A^{-}(\cdot)$, rather than $A^{-\tilde{Q}}(\cdot)$. 
To ensure the causal identification of $E[Y(A^{\tilde{Q}}(\cdot))]$ is well defined, we further require the following conditions:
\begin{itemize}
    \item (\textbf{C$\tilde{\bm{3}}$}) There exists a constant $C_{\tilde{Q}}$ such that for two square-integrable functions over $[0, T]$, $a_1(\cdot)$ and $a_2(\cdot)$, the function $m$ satisfies $\left\vert E_{\tilde{Q}}\left\{m\left(\bm{x}, a(\cdot) + B_1^{(2)}\gamma_1^{(2)}(\cdot) \right)|\bm{x}, a_1(\cdot)\right\} \right. - \left. E_{\tilde{Q}}\left\{m\left(\bm{x}, a(\cdot) + B_1^{(2)}\gamma_1^{(2)}(\cdot) \right)|\bm{x}, a_2(\cdot)\right\} \right\vert \leq C_{\tilde{Q}}||a_1(\cdot) - a_2(\cdot)||_2 $ for all $\bm{x}$ and $a(\cdot)$
    \item (\textbf{C$\tilde{\bm{4}}$}) There exist constants $C_{\tilde{Q}1}$ and $C_{\tilde{Q}2}$ such that $E\left\{ \frac{\tilde{q}(B_1^{(2)}|\bm{X}, A_{\bm{J}}^{-\tilde{Q}}(\cdot))}{f(B_1^{(2)}| \bm{X}, A_{\bm{J}}^{-\tilde{Q}}(\cdot))}\right\} \leq C_{\tilde{Q}1}$ and $\Var\left\{ \frac{\tilde{q}(B_1^{(2)}|\bm{X}, A_{\bm{J}}^{-\tilde{Q}}(\cdot))}{f(B_1^{(2)}| \bm{X}, A_{\bm{J}}^{-\tilde{Q}}(\cdot))}\right\} \leq C_{\tilde{Q}2}$ for all $J_1, J_2, J_3 \in \mathbb{N}$
\end{itemize}
Condition \textbf{C$\tilde{\bm{3}}$} is similar to condition \textbf{C2}; it is a Lipschitz condition on $m(\cdot)$ marginalized over $\tilde{q}$. Condition \textbf{C$\tilde{\bm{4}}$} states that the expectation and variance of the conditional density ratios for $\tilde{q}$ are bounded for all $\bm{J}$. 
Outcome model and weighting identifications are presented in supplementary Theorems \ref{thm:outcome_reg} and \ref{thm:weighting}; in Theorem \ref{thm:dr_identification}, we present the augmented identification that aligns with the efficient influence function for continuous treatment stochastic policies derived in \cite{diaz_causal_2020} and \cite{schindl_incremental_2026}.
\begin{theorem}
\textnormal{(Augmented identification)} Given conditions \textbf{C1}-\textbf{C2}, \textbf{C$\tilde{\bm{3}}$}-\textbf{C$\tilde{\bm{4}}$} and assumptions \textbf{A1}, \textbf{A$\tilde{\bm{2}}$}, and assumption \textbf{A3} (for $\tilde{q} = q$)  or \textbf{A$\bm{3}^*$} (for $\tilde{q} = q^*$) 
\begin{align*}
    \mu^{\tilde{Q}}  &= \lim_{\bm{J} \to \infty} E\left(\frac{\tilde{q}(B_1^{(2)}|  \bm{X}, A_{\bm{J}}^{-\tilde{Q}}(\cdot))}{f(B_1^{(2)}|  \bm{X}, A_{\bm{J}}^{-\tilde{Q}}(\cdot))}\left[Y - E_{\tilde{Q}}\{m(\bm{X}, A_{\bm{J}}(\cdot))|\bm{X},  A_{\bm{J}}^{-}(\cdot)\}\right] \right.\\
    &\quad+ \left. E_{\tilde{Q}}\{m(\bm{X}, A_{\bm{J}}(\cdot))|\bm{X},  A_{\bm{J}}^{-}(\cdot)\}\right),\:\:\: \text{for}\:\: \tilde{q} = q, q^*.
\end{align*}
\label{thm:dr_identification}
\end{theorem}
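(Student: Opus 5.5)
We prove the display by splitting it into two terms. The second term, $E[E_{\tilde Q}\{m(\bm X, A_{\bm J}(\cdot))\mid \bm X, A^-_{\bm J}(\cdot)\}]$, is the outcome-regression identification of supplementary Theorem \ref{thm:outcome_reg}. The first term is a weighted residual whose limit we show is zero. The plan is therefore:
\begin{enumerate}
\item[(i)] use supplementary Theorem \ref{thm:outcome_reg}, which we take as given, to get that the limit of the second term is $\mu^{\tilde Q}$;
\item[(ii)] show that the weighted residual term has limit zero;
\item[(iii)] conclude by additivity of limits, since both limits exist.
\end{enumerate}

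For step (ii), fix $\bm J$ and write $w_{\bm J}=\tilde q(B_1^{(2)}\mid \bm X, A^{-\tilde Q}_{\bm J})/f(B_1^{(2)}\mid \bm X, A^{-\tilde Q}_{\bm J})$. Decompose the residual as
\[
Y - E_{\tilde Q}\{m(\bm X, A_{\bm J})\mid \cdot\} = \epsilon + \{m(\bm X, A(\cdot)) - m(\bm X, A_{\bm J}(\cdot))\} + \{m(\bm X, A_{\bm J}(\cdot)) - E_{\tilde Q}\{m(\bm X, A_{\bm J})\mid \cdot\}\}.
\]
\begin{itemize}
\item \emph{The $\epsilon$ piece.} Here $E[\epsilon\mid \bm X, A(\cdot)]=0$ and $w_{\bm J}$ is a function of $(\bm X, A(\cdot))$, so $E[w_{\bm J}\epsilon]=0$ by iterated expectations.
\item \emph{The truncation piece.} By Cauchy--Schwarz and \textbf{C2},
\[
|E[w_{\bm J}\{m(\bm X,A)-m(\bm X,A_{\bm J})\}]| \le \{E w_{\bm J}^2\}^{1/2}\, C\,\{E\|A-A_{\bm J}\|_2^2\}^{1/2}.
\]
Condition \textbf{C$\tilde{\bm 4}$} gives $E w_{\bm J}^2\le C_{\tilde Q2}+C_{\tilde Q1}^2$ uniformly in $\bm J$. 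The remaining factor $E\|A-A_{\bm J}\|_2^2$ equals the sum of the tail eigenvalues $\sum_k\sum_{j>J_k}\theta_j^{(k)}$ for the FPCA basis, or is bounded by the analogous tail for our basis via Theorem \ref{thm:novel_basis}. This tail tends to zero by \textbf{C1}.
\item \emph{The centered piece.} Condition on $(\bm X, A^{-}_{\bm J}(\cdot))$, or on $A^{-\tilde Q}_{\bm J}$, which is a sub-$\sigma$-field when $\tilde q=q^*$. Integrate over $B_1^{(2)}$ with density $f$ and weight $w_{\bm J}$; this is a change of measure from $f$ to $\tilde q$, legitimate because \textbf{A$\tilde{\bm 2}$} gives equal supports. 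The result is
\[
\int m(\bm X, b\gamma_1^{(2)}+A^-_{\bm J})\,d\tilde Q(b\mid\cdot) - E_{\tilde Q}\{m\mid\cdot\}\int \tilde q(b\mid\cdot)\,db = 0.
\]
\end{itemize}

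The main obstacle is the centered piece when $\tilde q=q^*$. There the weight conditions only on $A^{-[0,t_2]}_{\bm J}$, while $E_{\tilde Q}\{\cdot\}$ conditions on the full $A^{-}_{\bm J}$, including $A^{(3)}$. To handle this, I would first integrate over $B_1^{(2)}$ conditional on $(\bm X, A^{-[0,t_2]}_{\bm J}, A^{(3)}_{\bm J})$. Assumption \textbf{A$\bm 3^*$} gives $B_1^{(2)}\indep A^{(3)}\mid \bm X, A^{-[0,t_2]}$, and I would pass this to the truncated approximations, which are measurable functions of the full process. Then the conditional density of $B_1^{(2)}$ given the larger set equals $f(\cdot\mid \bm X, A^{-[0,t_2]}_{\bm J})$, and the change-of-measure cancellation goes through.

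If the independence does not transfer exactly at finite $\bm J$, the fallback is to bound the discrepancy using \textbf{C$\tilde{\bm 3}$}, the Lipschitz condition in the conditioning function. Combined with the same Cauchy--Schwarz and tail-eigenvalue argument, this forces the discrepancy to vanish as $\bm J\to\infty$.
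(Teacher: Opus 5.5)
Your overall route is the paper's. The paper obtains $\lim_{\bm J}E[w_{\bm J}\{Y-m(\bm X,A_{\bm J}(\cdot))\}]=0$ by citing Theorems~\ref{thm:outcome_reg} and~\ref{thm:weighting}. The proof of the latter treats exactly your $\epsilon$ piece and truncation piece, using \textbf{C1}, \textbf{C2} and \textbf{C$\tilde{\bm{4}}$}. The paper then kills the centered piece by conditioning on $(\bm X,A^-_{\bm J}(\cdot))$ and changing measure from $f$ to $\tilde q$. Inlining that argument, as you do, is fine, and for $\tilde q=q$ your proof is complete.

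The step that fails is your handling of the centered piece for $q^*$.

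\emph{Why the independence does not transfer.} Assumption \textbf{A$\bm{3}^*$} gives $B_1^{(2)}\indep A^{(3)}(\cdot)\mid \bm X, A^{-[0,t_2]}(\cdot)$, with the \emph{full} past in the conditioning set. Measurability of $A^{(3)}_{J_3}$ lets you shrink the variable on the left. It does not let you coarsen the conditioning $\sigma$-field to $(\bm X, A^{-[0,t_2]}_{\bm J}(\cdot))$. Suppose a dropped coefficient $A^{(1)}_j$ with $j>J_1$ drives both $B_1^{(2)}$ and $A^{(3)}_1$. Then the two can be independent given the full past yet dependent given the truncated one.

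\emph{What remains of the centered piece.} Condition on $(\bm X, A^{-[0,t_2]}_{\bm J})$, where $w_{\bm J}$ is the correct density ratio. Let $g_b(a_3)$ be $m(\bm X,b\gamma_1^{(2)}+A^{-[0,t_2]}_{\bm J}+a_3)$ minus its $q^*$-average over $b$. The centered piece reduces to the $q^*$-average over $b$ of $E\{g_b(A^{(3)}_{J_3})\mid B_1^{(2)}=b,\bm X,A^{-[0,t_2]}_{\bm J}\}-E\{g_b(A^{(3)}_{J_3})\mid \bm X,A^{-[0,t_2]}_{\bm J}\}$. This is pure residual dependence between $B_1^{(2)}$ and $A^{(3)}_{J_3}$ given the truncated past, and it need not vanish at finite $\bm J$.

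\emph{Why your fallback does not reach it.} \textbf{C$\tilde{\bm{3}}$} only controls how $E_{\tilde Q}$ moves when the argument of $\tilde q$ is perturbed. A limiting argument here instead needs $w_{\bm J}$, and the law of $B_1^{(2)}$ given the truncated past, to converge to their full-past counterparts; for those, the term vanishes by \textbf{A$\bm{3}^*$}. Neither \textbf{C$\tilde{\bm{3}}$} nor the moment bounds in \textbf{C$\tilde{\bm{4}}$} supply that convergence.

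The paper's own proof does not close this either. It handles both policies with the same one-line conditioning on $(\bm X,A^-_{\bm J}(\cdot))$, which tacitly uses exactly the truncation-level independence you tried to derive. So you have two options. Either state that independence explicitly as an assumption, as the paper implicitly does. Or add a convergence condition on $w_{\bm J}$ and carry out the limit argument properly.
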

The proof of Theorem \ref{thm:dr_identification} is in supplementary Section \ref{sec:supp_theory}. We note that the conditional expectation term, $E_{\tilde{Q}}$, is a function of both $m(\cdot)$ and $\tilde{q}$; this differs from standard outcome model identifications for deterministic treatments which are only a function of $m(\cdot)$. This identification also contains the conditional density ratio between $\tilde{q}$ and the observed conditional distribution of $B_1^{(2)}$, $f$. However, since the stochastic policy applies to a single basis coefficient this conditional density is univariate rather than multivariate as in  \cite{jiang_estimating_2026}, making nuisance parameter estimation more tractable.

\subsection{Choosing the stochastic policy distribution and policy basis}
\label{sec:choosing_q}
We have focused on identifying the estimand $\mu^{\tilde{Q}}$ for an arbitrary stochastic policy, $\tilde{q}$; however, we need to choose this distribution in practice. 
\cite{diaz_causal_2020} and \cite{schindl_incremental_2026} propose  stochastic policies defined as exponential tilts of the observed treatment distribution. 
Specifically, for $\delta \in \mathbb{R}$, the exponential tilt of the observed conditional treatment distribution, $f(b_1^{(2)}|\bm{x}, a^{-\tilde{Q}}(\cdot))$, is
\begin{align}
    \tilde{q}_\delta(b_1^{(2)}|\bm{x}, a^{-\tilde{Q}}(\cdot)) &= \frac{\exp(\delta b_1^{(2)})f(b_1^{(2)}|\bm{x},a^{-\tilde{Q}}(\cdot))}{E\left\{\exp(\delta B_1^{(2)})|\bm{X} = \bm{x},A^{-\tilde{Q}}(\cdot) = a^{-\tilde{Q}}(\cdot))\right\}},  \label{eq:exp_tilt_q}
\end{align}
where $\delta = 0$ yields $\tilde{q}_\delta(b_1^{(2)}|\bm{x},a^{-\tilde{Q}}(\cdot)) = f(b_1^{(2)}|\bm{x},a^{-\tilde{Q}}(\cdot))$. 
A positive $\delta$ corresponds to an increase in the likelihood of receiving higher values of $B_1^{(2)}$, while a negative $\delta$ corresponds to a decrease in the likelihood of receiving higher values of $B_1^{(2)}$. The size of the shift in the likelihood is determined both by $\delta$ and the conditional variance of $B_1^{(2)}$; the larger the conditional variance, the smaller the $\delta$ values that result in moderate shifts in the likelihood. 
Positivity is satisfied for any $\delta$ because $f(b_1^{(2)}|\bm{x},a^{-\tilde{Q}}(\cdot)) = 0$ implies $\tilde{q}_\delta(b_1^{(2)}|\bm{x},a^{-\tilde{Q}}(\cdot)) = 0$ by definition.
Thus, we can estimate the causal effect of a variety of stochastic policies simply through smoothly changing $\delta$, though there may be practical estimation challenges for some $\delta$. 
Further properties of exponentially tilted distributions are discussed by \cite{schindl_incremental_2026}.

We also need to choose the policy basis through which the function is modified under the policy. For some research questions, there may be functional form of specific interest.
However, we propose choosing the policy basis from $\gamma_{c,d}(t) = \left(\frac{t-t_1}{t_2-t_1}\right)^c\left(\frac{t_2 -t}{t_2-t_1}\right)^d/\left\Vert \left(\frac{t-t_1}{t_2-t_1}\right)^c\left(\frac{t_2 -t}{t_2-t_1}\right)^d \right\Vert_{2} $ for $ c,d > 0$, 
where $\gamma_{c,d}(t_1) = \gamma_{c,d}(t_2) = 0$ for all $c,d > 0$. 
Since $\gamma_{c,d}(\cdot)$ is convex, $\delta > 0$ corresponds to an increase in the likelihood of higher values of the function over $[t_1, t_2]$.
As $c,d$ increase, $\gamma_{c,d}(\cdot)$ becomes more convex. The relationship between $c,d$ defines where $\gamma_{c,d}(\cdot)$ is maximized; when $c=d$, $\gamma_{c,d}(\cdot)$ is maximized at $(t_1+t_2)/2$. While we can choose a single policy basis from this set, as we do in Sections \ref{sec:simulation4}-\ref{sec:nhanes_application}, 
we could also perform a sensitivity analysis for the choice of policy basis through varying $c,d$ values.
Figure \ref{fig:nhanes_walkthrough} presents an example of a single physical activity trajectory from NHANES and the corresponding policy basis, $q_\delta$, and resulting expected modified function for three $\delta$ values.

\begin{figure}[h]
    \centering
    \includegraphics[width=0.9\linewidth]{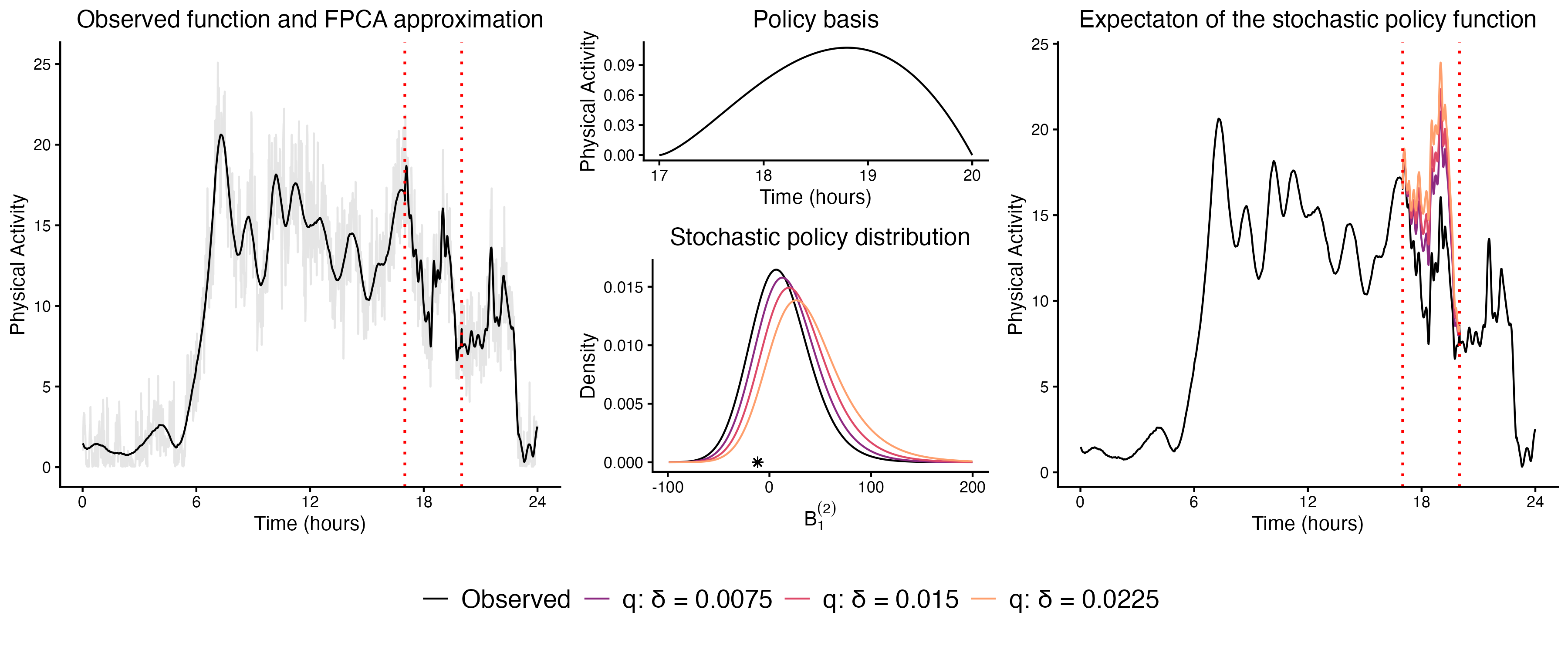}
    \caption{\footnotesize A single physical activity trajectory from NHANES and the corresponding policy basis, stochastic policy distribution, and expected functions under three stochastic policies. }
    \label{fig:nhanes_walkthrough}
\end{figure}


\section{Estimation}
\label{sec:estimation}
We propose the following plug-in estimator of the augmented estimand identification presented in Theorem \ref{thm:dr_identification}:
\begin{align*}
    \hat{\mu}^{\tilde{Q}(\delta)}_{\bm{J}}&= \frac1n\sum_{i=1}^n \frac{\hat{\tilde{q}}_\delta(B_{1,i}^{(2)}|\bm{X}_i, A_{\bm{J},i}^{-\tilde{Q}}(\cdot))}{\hat{f}(B_{1,i}^{(2)}|\bm{X}_i, A_{\bm{J},i}^{-\tilde{Q}}(\cdot))}\left\{Y_i \right.\\
    &\quad \left.- \int_{\mathcal{B}_{1}^{(2)}}\hat{m}\left(\bm{X}_i, b_1^{(2)}\gamma_1^{(2)}(\cdot) + A_{\bm{J},i}^{-}(\cdot)\right)\hat{\tilde{q}}_\delta\left(b_1^{(2)}|\bm{X}_i, A_{\bm{J},i}^{-\tilde{Q}}(\cdot))\right)db_1^{(2)}\right\} \\
    &\quad + \int_{\mathcal{B}_{1}^{(2)}}\hat{m}\left(\bm{X}_i, b_1^{(2)}\gamma_1^{(2)}(\cdot) + A_{\bm{J},i}^{-}(\cdot)\right)\hat{\tilde{q}}_\delta\left(b_1^{(2)}|\bm{X}_i, A_{\bm{J},i}^{-\tilde{Q}}(\cdot))\right)db_1^{(2)}. 
\end{align*}
In this estimator, $\hat{f}$ and $\hat{\tilde{q}}_{\delta}$ are estimates of the conditional densities and $\hat{m}$ is an estimate of the conditional expectation function. Due to the form of $\hat{\tilde{q}}_{\delta}$, it suffices to estimate $\hat{f}$ and then use numerical integration to obtain an estimate of the denominator of Equation \eqref{eq:exp_tilt_q}. The nuisance functions $m$ and $f$ must be cross-fit for both our theoretical results and estimator performance \citep{chernozhukov_doubledebiased_2018}; see supplementary Section \ref{sec:supp_cross_fit} and Algorithm 1 of \cite{schindl_incremental_2026} for details on computing the estimator with cross-fitting. We additionally recommend normalizing the conditional density ratio to reduce finite-sample estimator MSE.

\cite{schindl_incremental_2026} show that their estimator converges at a rate of $(n/\delta)^{-1/2}$
such that the effect sample size is $n/\delta$. 
For our application of interest, $\delta << 1$ (see Section \ref{sec:nhanes_application}) and thus the effective sample size is in fact larger than $n$. We derive that our proposed estimator, $\hat{\mu}^{\tilde{Q}(\delta)}_{\bm{J}}$ has the same convergence rate given some additional conditions to those required by \cite{schindl_incremental_2026} for $\tilde{q} = q, q^*$. Let $\epsilon_{\bm{J}}^f =  \hat{f}(b_{1}^{(2)}|\bm{X}, A_{\bm{J}}^{-\tilde{Q}}(\cdot)) - f(b_{1}^{(2)}|\bm{X}, A_{\bm{J}}^{-\tilde{Q}}(\cdot))$ and $\epsilon_{\bm{J}}^m = \hat{m}(\bm{X}, b_1^{(2)}\gamma_j^{(2)}(\cdot) + A_{\bm{J}}^{-}(\cdot)) - m(\bm{X}, b_1^{(2)}\gamma_j^{(2)}(\cdot) + A_{\bm{J}}^{-}(\cdot))$ be the nuisance function estimation errors where $\hat{f}$ and $\hat{m}$ are cross-fit. First, we outline the analogous conditions and assumptions to those in \cite{schindl_incremental_2026} for $\bm{J}$: 
\begin{itemize}
       \item (\textbf{A$\tilde{\bm{4}}$}: Weak positivity) Assume $\min\{\hat{f}(b_{1}^{(2)}| \bm{x},a_{\bm{J}}^{-\tilde{Q}}(\cdot)), f(b_{1}^{(2)}| \bm{x},a_{\bm{J}}^{-\tilde{Q}}(\cdot))\} \geq \pi_{\min}$ for $\pi_{\min} > 0$ for all $\bm{x},a_{\bm{J}}^{-\tilde{Q}}(\cdot)$, and  $b_1^{(2)} \in \cup_{h=1}^H [c_h, d_h]$ and $d_h - c_h \geq L > 0$ for $h = 1, \ldots H$ 
       \item (\textbf{A$\tilde{\bm{5}}$}: Consistency of models) $\left\Vert \epsilon^f_{\bm{J}} \right\Vert_{L_{b^{(2)}_1}^\infty, L^2} = o_p(1)$ and $\left\Vert \epsilon^m_{\bm{J}} \right\Vert_{L_{b^{(2)}_1}^\infty, L^2} = o_p(1)$ 
       \item (\textbf{A$\tilde{\bm{6}}$}: Rate of models) $\left\Vert \epsilon^f_{\bm{J}} \right\Vert_{L_{b^{(2)}_1}^\infty, L^2} $  $\times \left\Vert \epsilon^m_{\bm{J}} \right\Vert_{L_{b^{(2)}_1}^\infty, L^2} $ $+ \left\Vert \epsilon^f_{\bm{J}} \right\Vert^2_{L_{b^{(2)}_1}^\infty, L^2} = o_p((n/\delta)^{-1/2})$
       \item (\textbf{C$\tilde{\bm{5}}$}: Bounded treatment density) $f(b_{1}^{(2)}|\bm{x}, a_{\bm{J}}^{-\tilde{Q}}(\cdot))) \leq \pi_{\max}$ for all $\bm{x}, a_{\bm{J}}^{-\tilde{Q}}(\cdot),b_{1}^{(2)}$ 
       \item (\textbf{C$\bm{6}$}: Outcome noise) $0 < \sigma^2_{\min} \leq \Var(Y|\bm{X} = \bm{x}, A_{\bm{J}}(\cdot) =  a_{\bm{J}}(\cdot))$ for all $\bm{x}, a_{\bm{J}}(\cdot)$ 
       \item (\textbf{C$\bm{7}$}: Bounded outcome) $P(|Y| \leq B) = 1$ for some $B < \infty$
       \item (\textbf{C$\bm{8}$}): $\sqrt{n/\delta} \to \infty$
   \end{itemize}
where $\Vert h_{\bm{J}}\Vert^2_{L_{b^{(2)}_1}^\infty, L^2} = \int \left(\sup_{b_{1}^{(2)}} \left\vert h(b_{1}^{(2)}, \bm{x}, a_{\bm{J}}^{-\tilde{Q}}(\cdot))\right\vert\right)^2dF(\bm{x}, a_{\bm{J}}^{-\tilde{Q}}(\cdot))$  is a mixed $L_2$-sup norm for an arbitrary function $h$. While positivity is not required for identification, we require a weak positivity assumption  (assumption \textbf{A$\tilde{\bm{4}}$}) for the asymptotic results as in \cite{schindl_incremental_2026}; under this assumption, there can exist any finite number of places in the density where $f(b_{1}^{(2)}| \bm{x}, a_{\bm{J}}^{-\tilde{Q}}(\cdot))) = 0$. Assumptions \textbf{A$\tilde{\bm{5}}$} and \textbf{A$\tilde{\bm{6}}$} assume that $\hat{f}$ and $\hat{m}$ are consistent and that $\hat{f}$ and $\hat{m}$ converge at a rate of $o_p((n/\delta)^{-1/4})$ or faster. Since we assume both $\hat{f}$ and $\hat{m}$ are consistent, our proposed estimator is not doubly robust in terms of consistency. However, assumption \textbf{A$\tilde{\bm{6}}$} allows for flexible machine learning models to be used for estimating $\hat{f}$ and $\hat{m}$; universally consistent methods that achieve these rates satisfy \textbf{A$\tilde{\bm{5}}$} regardless of the complexity of $f$ or $m$. 
Conditions \textbf{C$\tilde{\bm{5}}$} and \textbf{C$\bm{6}$}-\textbf{C$\bm{8}$} are technical conditions that ensure that the treatment density and outcome are bounded, there is non-zero conditional outcome variance, and $n/\delta$ goes to infinity.

A key difference between our estimation problem that of \cite{schindl_incremental_2026} is that the nuisance functions in our estimator are defined in terms of an approximation of the treatment, $A_{\bm{J}}(\cdot)$; the residual variance of this approximation is upper bounded by $\sum_{k=1}^3\sum_{j=J_k}^\infty \theta_j^{(k)}$.  To extend the results of \cite{schindl_incremental_2026},
we consider that $\bm{J}$ increases with $n$ through the relationship $\bm{J}(n)$. Then, we can define the bound on the residual variance as a function of $n$, $\Delta_{\bm{J}(n)} = \sum_{k=1}^3\sum_{j=J_k(n)}^\infty \theta_j^{(k)}$. Given the relationship $\bm{J}(n)$, we have an additional assumption on the rate of $\Delta_{\bm{J}(n)}$: 
\begin{itemize}
    \item (\textbf{A7}: Rate of residual variance) $\Delta_{\bm{J}(n)} = o(\delta/n)$
\end{itemize}
See \cite{jiang_estimating_2026} for details on which $\bm{J}(n)$ satisfy this assumption for a variety of function classes. To simplify notation, let $w^{\tilde{Q}(\delta)}(\bm{X}, A_{\bm{J}(n)}^{-\tilde{Q}(\delta)}(\cdot)) = \frac{\tilde{q}_\delta(B_{1}^{(2)}|\bm{X}, A_{\bm{J}(n)}^{-\tilde{Q}(\delta)}(\cdot))}{f(B_{1}^{(2)}|\bm{X}, A_{\bm{J}(n)}^{-\tilde{Q}(\delta)}(\cdot))}$,  $v(\bm{X}, A_{\bm{J}(n)}(\cdot)) = \Var(Y|\bm{X}, A_{\bm{J}(n)}(\cdot))$ and $d^{\tilde{Q}(\delta)}(\bm{X}, A_{\bm{J}(n)}(\cdot)) = m(\bm{X}, A_{\bm{J}(n)}(\cdot)) $ $- E_{\tilde{Q}(\delta)}\{m(\bm{X}, A_{\bm{J}(n)}(\cdot))|\bm{X}, A_{\bm{J}(n)}^{-}(\cdot)\}$. Then, by the Lindeberg-Feller Central Limit Theorem our proposed estimators achieve asymptotic normality and root-$n/\delta$ consistency as stated in Theorem \ref{thm:estimator_asymp}.



\begin{theorem}
   Assume the conditions and assumptions of Theorem \ref{thm:dr_identification}, assumption \textbf{A7}, and conditions  \textbf{C7}-\textbf{C8} hold. Given that assumptions \textbf{A$\tilde{\bm{4}}$}-\textbf{A$\tilde{\bm{6}}$} and conditions \textbf{C$\tilde{\bm{5}}$}-\textbf{C$\bm{6}$} hold for all $\bm{J}$, $
   \frac{\sqrt{n}}{\sigma^{\tilde{Q}(\delta)}_{\bm{J}(n)}}( \hat{\mu}^{\tilde{Q}(\delta)}_{\bm{J}(n)} - \mu^{\tilde{Q}(\delta)}) \to N(0,1)$,
    where $\sigma^{\tilde{Q}(\delta)^2}_{\bm{J}(n)} = E\left[w^{\tilde{Q}(\delta)}(\bm{X}, A_{\bm{J}(n)}^{-\tilde{Q}(\delta)}(\cdot))^2\left\{v(\bm{X}, A_{\bm{J}(n)}(\cdot)) \right.\right.$ $\left.\left.+  d^{\tilde{Q}(\delta)}(\bm{X}, A_{\bm{J}(n)}(\cdot))^2 \right\}\right] $ $+ \Var\left[E_{\tilde{Q}(\delta)}\{m(\bm{X}, A_{\bm{J}(n)}(\cdot))|\bm{X}, A_{\bm{J}(n)}^{-}(\cdot)\}\right]$ for $\tilde{q} = q, q^*$.
    \label{thm:estimator_asymp}
\end{theorem}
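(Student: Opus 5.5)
The plan is to decompose the estimation error into an approximation (truncation) error and the usual one-step estimator error at a fixed truncation level $\bm{J}(n)$, and then use the argument of \cite{schindl_incremental_2026} on the second piece. Define the truncated target
\[
\mu^{\tilde{Q}(\delta)}_{\bm{J}} = E\left(w^{\tilde{Q}(\delta)}(\bm{X}, A_{\bm{J}}^{-\tilde{Q}(\delta)}(\cdot))\left[Y - E_{\tilde{Q}(\delta)}\{m(\bm{X}, A_{\bm{J}}(\cdot))|\bm{X}, A_{\bm{J}}^{-}(\cdot)\}\right] + E_{\tilde{Q}(\delta)}\{m(\bm{X}, A_{\bm{J}}(\cdot))|\bm{X}, A_{\bm{J}}^{-}(\cdot)\}\right).
\]
By Theorem \ref{thm:dr_identification}, $\mu^{\tilde{Q}(\delta)} = \lim_{\bm{J}\to\infty}\mu^{\tilde{Q}(\delta)}_{\bm{J}}$. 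We then write
\[
\hat{\mu}^{\tilde{Q}(\delta)}_{\bm{J}(n)} - \mu^{\tilde{Q}(\delta)} = \bigl(\hat{\mu}^{\tilde{Q}(\delta)}_{\bm{J}(n)} - \mu^{\tilde{Q}(\delta)}_{\bm{J}(n)}\bigr) + \bigl(\mu^{\tilde{Q}(\delta)}_{\bm{J}(n)} - \mu^{\tilde{Q}(\delta)}\bigr),
\]
and show that the second term is $o((n/\delta)^{-1/2})$ while the first, scaled by $\sqrt{n}/\sigma^{\tilde{Q}(\delta)}_{\bm{J}(n)}$, is asymptotically standard normal. Slutsky's lemma then combines the two. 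This uses the fact that $\sigma^{\tilde{Q}(\delta)}_{\bm{J}(n)}$ is of order $\delta^{1/2}$ or larger, so that $\sqrt{n}/\sigma$ is at most of order $\sqrt{n/\delta}$, as in \cite{schindl_incremental_2026}.

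\textbf{Truncation term.} The two expressions differ only through replacing $A(\cdot)$ by $A_{\bm{J}}(\cdot)$, both in $m$ and in the conditioning sets. First, the Lipschitz conditions \textbf{C2} and \textbf{C$\tilde{\bm{3}}$} bound the outcome-model part by $(C + C_{\tilde{Q}})\,E\Vert A(\cdot) - A_{\bm{J}}(\cdot)\Vert_2$. Second, by Cauchy--Schwarz and \textbf{C1}, this quantity is at most $(C + C_{\tilde{Q}})\,\Delta_{\bm{J}(n)}^{1/2}$. Third, the weighted residual part is handled with Cauchy--Schwarz together with the bounded weight moments in \textbf{C$\tilde{\bm{4}}$} and the bounded outcome \textbf{C7}. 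This is the same argument used in the proof of Theorem \ref{thm:dr_identification}, now made quantitative in $\bm{J}$. Fourth, \textbf{A7} gives $\Delta_{\bm{J}(n)}^{1/2} = o((\delta/n)^{1/2})$, which is exactly the required rate.

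\textbf{Estimation term at fixed $\bm{J}(n)$.} Here we follow the standard cross-fitting decomposition:
\[
\hat{\mu}_{\bm{J}} - \mu_{\bm{J}} = (\mathbb{P}_n - P)\varphi_{\bm{J}} + (\mathbb{P}_n - P)(\hat{\varphi}_{\bm{J}} - \varphi_{\bm{J}}) + P(\hat{\varphi}_{\bm{J}} - \varphi_{\bm{J}}).
\]
Here $\varphi_{\bm{J}}$ is the uncentered augmented summand, treated exactly as the continuous-treatment incremental estimator with the scalar treatment $B_1^{(2)}$ and the covariates $(\bm{X}, A_{\bm{J}}^{-\tilde{Q}}(\cdot))$.
\begin{enumerate}
\item The empirical-process term is $o_p$ of the leading rate by sample splitting. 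Its conditional variance is controlled by \textbf{A$\tilde{\bm{5}}$}, combined with the bounds \textbf{A$\tilde{\bm{4}}$}, \textbf{C$\tilde{\bm{5}}$} and \textbf{C7}, which bound the estimated weights and the plug-in exponential tilt (the numerically integrated normalizer is bounded away from zero).
\item The remainder term $P(\hat{\varphi}-\varphi)$ is bounded, as in \cite{schindl_incremental_2026}, by a constant times $\Vert\epsilon^f_{\bm{J}}\Vert\,\Vert\epsilon^m_{\bm{J}}\Vert + \Vert\epsilon^f_{\bm{J}}\Vert^2$ in the mixed norm. This bound arises because the error in $\hat{\tilde{q}}_\delta$ is Lipschitz in $\hat f$ under \textbf{A$\tilde{\bm{4}}$}. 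By \textbf{A$\tilde{\bm{6}}$} the remainder is $o_p((n/\delta)^{-1/2})$.
\end{enumerate}

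\textbf{Main term and main obstacle.} The leading term is $n^{-1}\sum_i\{\varphi_{\bm{J}(n)}(O_i) - \mu_{\bm{J}(n)}\}$, a triangular array because $\bm{J}(n)$ and possibly $\delta$ vary with $n$. Its variance is computed by conditioning on $(\bm{X}, A_{\bm{J}}(\cdot))$, and equals $\sigma^{\tilde{Q}(\delta)^2}_{\bm{J}(n)}/n$: the $v$ term comes from the outcome noise, the $d^2$ term from the weighted variation of $m$ around its $\tilde q$-average, and the cross terms vanish because the weight integrates $m$ to $E_{\tilde Q}$. I expect the main obstacle to be verifying the Lindeberg condition for this array, since the weights and variance change with $n$. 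For this I would bound $|\varphi_{\bm{J}}|$ uniformly, using the bounded $Y$ (\textbf{C7}) and the ratio $\tilde{q}_\delta/f \le \pi_{\max}/\pi_{\min}$-type bounds (\textbf{A$\tilde{\bm{4}}$}, \textbf{C$\tilde{\bm{5}}$}). I would also lower-bound $\sigma^2_{\bm{J}(n)}$ using \textbf{C6}, since $E[w^2 v] \ge \sigma^2_{\min}E[w^2] \ge \sigma^2_{\min}(E w)^2 = \sigma^2_{\min}$. Together these make the Lindeberg sum vanish as $n\to\infty$ under \textbf{C8}, and the Lindeberg--Feller theorem then gives the normal limit.
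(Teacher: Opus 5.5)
Your plan follows the paper's proof. It uses the same three-way split: a centered triangular-array term handled by Lindeberg--Feller, a truncation bias controlled by the Lipschitz conditions and \textbf{A7}, and an empirical-process-plus-remainder term handed to the cross-fitting argument of \cite{schindl_incremental_2026} under \textbf{A$\tilde{\bm{5}}$}--\textbf{A$\tilde{\bm{6}}$}. For fixed $\delta$ the argument goes through. Your truncation step is also more explicit than the paper's. The paper bounds that bias through \textbf{C$\tilde{\bm{3}}$} alone and never separately controls the weighted piece $E[w\{m(\bm{X},A(\cdot))-m(\bm{X},A_{\bm{J}(n)}(\cdot))\}]$, where $w$ is the density ratio; you treat it with \textbf{C$\tilde{\bm{4}}$} and Cauchy--Schwarz.

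The Lindeberg step, however, does not deliver the theorem under \textbf{C8} once $\delta$ varies with $n$. That is the only regime in which \textbf{C8} has content, and it is the regime you invoke when you scale by $\sqrt{n/\delta}$.

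\begin{itemize}
\item The weight has no $\delta$-free ``$\pi_{\max}/\pi_{\min}$-type'' bound. Since $\tilde{q}_\delta/f=\exp(\delta b)/E\{\exp(\delta B_1^{(2)})\mid\cdot\}$, under \textbf{A$\tilde{\bm{4}}$} and \textbf{C$\tilde{\bm{5}}$} its supremum grows linearly in $\delta$; the paper's bound is $\delta e^{\delta}/\{\pi_{\min}(e^{\delta}-1)\}$.
\item Writing $\sigma$ for $\sigma^{\tilde{Q}(\delta)}_{\bm{J}(n)}$, pairing this with your lower bound $\sigma^2\ge\sigma^2_{\min}$ gives a Lindeberg ratio $\sup|\varphi_{\bm{J}(n)}-\mu_{\bm{J}(n)}|/(\sqrt{n}\,\sigma)=O(\delta/\sqrt{n})$.
\item That ratio vanishes only if $\delta=o(\sqrt{n})$, which is strictly stronger than \textbf{C8}.
\end{itemize}

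The repair is what the paper does. In your chain $E(w^2v)\ge\sigma^2_{\min}E(w^2)$, replace Jensen's $E(w^2)\ge 1$ by the order-$\delta$ lower bound on $E(w^2)$ that the paper takes from Lemma 1 of \cite{schindl_incremental_2026}. Keep your Jensen bound for bounded $\delta$. Together they make the ratio $O(\{\max(1,\delta)/n\}^{1/2})$, which tends to $0$ by \textbf{C8}.

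The same $\delta$-bookkeeping affects your truncation step. The Cauchy--Schwarz constant $\{E(w^2)\}^{1/2}$ grows like $\delta^{1/2}$. Since $\sigma^2\ge\sigma^2_{\min}E(w^2)$, the weighted piece is $O\{(n\Delta_{\bm{J}(n)})^{1/2}\}$ after scaling. \textbf{A7} makes this $o(1)$ only when $\delta$ stays bounded. The paper's proof is silent on this piece.
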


All proofs of results in this section are in supplementary Section \ref{sec:supp_theory}. The asymptotic variance of our proposed estimator aligns with the nonparametric efficiency bound derived in \cite{schindl_incremental_2026}. In this asymptotic variance equation, the term that varies the most with $\delta$ is $d^{\tilde{Q}(\delta)}(\bm{X}, A_{\bm{J}(n)}(\cdot))^2$; this term gets larger and dominates the asymptotic variance as $\delta$ moves away from zero. 
We can estimate $\hat{f}$ and $\hat{m}$ with any flexible model (e.g, random forest, KNN, etc.) with the basis coefficients and covariates as model predictors. While functional data methods could also be used to estimate these nuisance functions, there are fewer flexible methods for functional data that can accurately model complex nuisance functions. 
However, note that since the conditional density $f$, is univariate, $f(B_1^{(2)} |A^{-\tilde{Q}}(\cdot))$ is well-defined and we can also consider the equivalent estimator to $\hat{\mu}^{\tilde{Q}(\delta)}_{\bm{J}(n)}$ which conditions on the true functional treatment, rather than an approximation, $\hat{\mu}^{\tilde{Q}(\delta)}$. For this estimator, it is necessarily to estimate $\hat{f}$ and $\hat{m}$ with functional data methods. However, there are no assumptions required on $\Delta_{\bm{J}(n)}$ for these estimators to achieve root-$n/\delta$ consistency as stated Corollary \ref{thm:coro_estimator_asymp}.
Thus, while we identify and define the estimands in terms of a basis approximation, we can propose estimators and derive asymptotic results that require no assumptions on the basis approximation used. 
Either of $\hat{\mu}^{\tilde{Q}(\delta)}_{\bm{J}(n)}$ or $\hat{\mu}^{\tilde{Q}(\delta)}$ could be used; $\hat{\mu}^{\tilde{Q}(\delta)}_{\bm{J}(n)}$ requires more assumptions for root-$n/\delta$ consistency but allows more options for flexible modeling of the nuisance functions than $\hat{\mu}^{\tilde{Q}(\delta)}$.

In addition to estimating $\mu^{\tilde{Q}(\delta)}$, we are interested in the treatment \textit{effect} of the stochastic policy, $\tau^{\tilde{Q}(\delta)} = \mu^{\tilde{Q}(\delta)} - \mu$. We propose the estimator $\hat{\tau}^{\tilde{Q}(\delta)}_{\bm{J}(n)} = \hat{\mu}^{\tilde{Q}(\delta)}_{\bm{J}(n)} - \frac1n\sum_{i=1}^n Y_i$. Since $\frac1n\sum_{i=1}^n Y_i$ is a root$-n$ consistent estimator of $\mu$, $\hat{\tau}^{\tilde{Q}(\delta)}_{\bm{J}(n)}$ is a root$-\min(n, n/\delta)$ consistent estimator as stated in Theorem \ref{thm:contrast_asymp}. 

\begin{theorem}
   Assume the conditions and assumptions of Theorem \ref{thm:estimator_asymp} hold. Then, $\frac{\sqrt{n}}{\sigma^{\tau^{\tilde{Q}(\delta)}}_{\bm{J}(n)}}( \hat{\tau}^{\tilde{Q}(\delta)}_{\bm{J}(n)} - \tau^{\tilde{Q}(\delta)}) \to N(0,1)$,
    where $\sigma^{\tau^{\tilde{Q}(\delta)^2}}_{\bm{J}(n)} = E\left[\{w^{\tilde{Q}(\delta)}(\bm{X}, A_{\bm{J}(n)}^{-\tilde{Q}(\delta)}(\cdot)) - 1\}^2v(\bm{X}, A_{\bm{J}(n)}(\cdot))\right.$ \\ $\left. +  w^{\tilde{Q}(\delta)}(\bm{X}, A_{\bm{J}(n)}^{-\tilde{Q}(\delta)}(\cdot))^2d^{\tilde{Q}(\delta)}(\bm{X}, A_{\bm{J}(n)}(\cdot))^2 \right.\left.-2\Var_{\tilde{Q}(\delta)}\{m(\bm{X}, A_{\bm{J}(n)}(\cdot))| \bm{X}, A_{\bm{J}(n)}^{-}(\cdot)\}\right] $ \\ $+ \Var\{d^{\tilde{Q}(\delta)}(\bm{X}, A_{\bm{J}(n)}(\cdot))\}$ for $\tilde{q} = q, q^*$.
    \label{thm:contrast_asymp}
\end{theorem}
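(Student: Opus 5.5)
The plan is to reuse the linearization from the proof of Theorem \ref{thm:estimator_asymp} and subtract the trivially linear sample mean. Write $\hat{\tau}^{\tilde{Q}(\delta)}_{\bm{J}(n)} - \tau^{\tilde{Q}(\delta)} = (\hat{\mu}^{\tilde{Q}(\delta)}_{\bm{J}(n)} - \mu^{\tilde{Q}(\delta)}) - (\bar{Y} - \mu)$.

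The proof of Theorem \ref{thm:estimator_asymp} gives the expansion
\begin{align*}
\hat{\mu}^{\tilde{Q}(\delta)}_{\bm{J}(n)} - \mu^{\tilde{Q}(\delta)} = \frac1n\sum_{i=1}^n \varphi_i + R_n,
\end{align*}
with
\begin{align*}
\varphi = w\{Y - E_{\tilde{Q}(\delta)}(m\mid \cdot)\} + E_{\tilde{Q}(\delta)}(m\mid \cdot) - \mu^{\tilde{Q}(\delta)}_{\bm{J}(n)}.
\end{align*}
Here $R_n$ collects three pieces:
\begin{itemize}
\item the empirical-process term, handled by cross-fitting and \textbf{A$\tilde{\bm{5}}$};
\item the second-order bias, handled by \textbf{A$\tilde{\bm{6}}$} and \textbf{A$\tilde{\bm{4}}$};
\item the approximation bias $\mu^{\tilde{Q}(\delta)}_{\bm{J}(n)} - \mu^{\tilde{Q}(\delta)}$, handled by \textbf{A7} with \textbf{C2}, \textbf{C$\tilde{\bm{3}}$} and \textbf{C$\tilde{\bm{4}}$}.
\end{itemize}
That proof shows $R_n = o_p(\sigma^{\tilde{Q}(\delta)}_{\bm{J}(n)}/\sqrt{n})$. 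Since $\bar{Y}-\mu = \frac1n\sum_i (Y_i - \mu)$ exactly, we get
\begin{align*}
\hat{\tau}^{\tilde{Q}(\delta)}_{\bm{J}(n)} - \tau^{\tilde{Q}(\delta)} = \frac1n\sum_{i=1}^n \{\varphi_i - (Y_i-\mu)\} + R_n .
\end{align*}

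Next I compute the variance of $\psi = \varphi - (Y-\mu)$. Decompose $Y = m(\bm{X},A_{\bm{J}(n)}) + \epsilon_{\bm{J}}$, with $E(\epsilon_{\bm{J}}\mid \bm{X},A_{\bm{J}(n)})=0$ and conditional variance $v$. Then $Y - E_{\tilde{Q}}(m\mid\cdot) = \epsilon_{\bm{J}} + d$. Collecting terms,
\begin{align*}
\psi = (w-1)\epsilon_{\bm{J}} + w d - d + \text{const}.
\end{align*}
This uses $E_{\tilde{Q}}(m\mid\cdot) - m = -d$, together with $\mu = E(m)$ from Lemma \ref{thm:pop_avg_limit} in the limit. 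Conditioning on $(\bm{X},A_{\bm{J}(n)})$ kills the cross terms involving $\epsilon_{\bm{J}}$, so the first part of the variance is $E[(w-1)^2 v]$. The remaining part is $\Var\{(w-1)d\}$, which expands to $E[w^2 d^2] - 2E[w d^2] + E[d^2] - (E[(w-1)d])^2$.

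The key identity uses the change of measure $E[w\, g(B_1^{(2)},\cdot)\mid \bm{X},A^{-}] = E_{\tilde{Q}}[g\mid \bm{X},A^{-}]$. With it:
\begin{itemize}
\item $E[wd\mid\cdot] = 0$, because $d$ is centered under $\tilde{q}$. Hence $E[(w-1)d] = -E[d]$.
\item $E[w d^2\mid\cdot] = \Var_{\tilde{Q}}(m\mid \bm{X},A^{-}_{\bm{J}(n)})$.
\end{itemize}
Substituting gives
\begin{align*}
E[w^2d^2] - 2E[\Var_{\tilde{Q}}(m\mid\cdot)] + E[d^2] - (E d)^2 ,
\end{align*}
and $E[d^2] - (Ed)^2 = \Var(d)$. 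This matches $\sigma^{\tau^{\tilde{Q}(\delta)^2}}_{\bm{J}(n)}$. A small subtlety is that $w$ depends on $A^{-\tilde{Q}}$, which for $q^*$ is coarser than $A^{-}$. I handle this by conditioning first on $(\bm{X},A^{-})$ and using \textbf{A$\bm{3}^*$}-type independence, exactly as in Theorem \ref{thm:dr_identification}.

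Finally I apply the Lindeberg--Feller CLT to the triangular array $\psi_{i,n}/\sigma^{\tau}_{\bm{J}(n)}$, since $\bm{J}$ and $\delta$ vary with $n$. The Lindeberg condition follows because $|Y|\le B$ (\textbf{C7}), $w$ is bounded through \textbf{A$\tilde{\bm{4}}$} and \textbf{C$\tilde{\bm{5}}$}, and $m$ and $d$ are bounded by $B$. Moreover $\sigma^{\tau}_{\bm{J}(n)}$ is bounded below, by \textbf{C6} and $(w-1)^2 v$ whenever $w\neq1$, plus the $v$ contribution near $\delta=0$.

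The main obstacle is showing that $R_n$ is negligible relative to the new normalizer $\sigma^{\tau}_{\bm{J}(n)}$ rather than $\sigma^{\tilde{Q}(\delta)}_{\bm{J}(n)}$. For small $\delta$, $\sigma^\tau$ can shrink like $\delta$ while $\sigma^{\tilde Q}$ stays order one. I would argue that the remainder terms scale with $w-1$, which is $O(\delta)$ for exponential tilts, so that the $(n/\delta)^{-1/2}$ rates in \textbf{A$\tilde{\bm{6}}$}, \textbf{A7} and \textbf{C8} are exactly what is needed. This gives $R_n = o_p(\sigma^{\tau}_{\bm{J}(n)}/\sqrt{n})$, and Slutsky's lemma completes the proof.
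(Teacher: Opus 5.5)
Your skeleton matches the paper's: subtract the exactly linear $\bar Y-\mu$ from the expansion of Theorem~\ref{thm:estimator_asymp}, reuse its remainder bounds, and apply Lindeberg--Feller. Your variance calculation is the cleaner route. The paper expands four pieces $D_{Y-},D_{\tilde q,\mu},D_{\mu^{\tilde Q}},D_\mu$, imports cross terms from Schindl et al., and recombines $\Var(E_{\tilde Q}m)+\Var(m)-2\Cov(E_{\tilde Q}m,m)$ into $\Var(d)$. Your grouping $\psi=(w-1)(\epsilon_{\bm{J}}+d)+\text{const}$ instead gives directly, with $\sigma^\tau:=\sigma^{\tau^{\tilde Q(\delta)}}_{\bm{J}(n)}$, the nonnegative form $(\sigma^\tau)^2=E[(w-1)^2v]+\Var\{(w-1)d\}$. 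Your handling of $q^*$ through \textbf{A$\bm{3}^*$} is also more explicit than the paper's.

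\textbf{The gap is in the step you flag yourself.} You are right that the remainder must be $o_p(\sigma^\tau/\sqrt n)$, not $o_p(\sigma^{\tilde Q(\delta)}_{\bm{J}(n)}/\sqrt n)$. You are also right that $\sigma^\tau\asymp\delta$ as $\delta\to0$; it vanishes at $\delta=0$, where $w\equiv1$. The paper never confronts this. It reuses Theorem~\ref{thm:estimator_asymp}'s bounds and lower-bounds $\sigma^\tau$ by $\sigma^{\tilde Q(\delta)}_{\bm{J}(n)}-\sqrt{\Var(Y)}$, which is informative only for large $\delta$.

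But your claim that ``the remainder scales with $w-1$'' holds for only part of $R_n$.
\begin{itemize}
\item For the nuisance remainder it is true, but you should prove it rather than assert it. Since $\hat{\tilde q}_\delta$ is the tilt of a proper density $\hat f$, you have $\hat w=e^{\delta b}/\hat c$ with $\hat c=\int e^{\delta s}\hat f(s\mid\cdot)\,ds$. Let $\theta=E_{\tilde Q}\{m\mid\bm X,A^-_{\bm{J}}\}$ and let $\hat\theta$ be its plug-in. The conditional bias is then exactly $(c/\hat c-1)(\theta-\hat\theta)$, and $c-\hat c=\int(e^{\delta s}-1)(f-\hat f)\,ds$ is $O(\delta)$ times the error of $\hat f$. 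Likewise, the empirical-process integrand is $(\hat w-w)(Y-\hat\theta)+(1-w)(\hat\theta-\theta)$, where $\hat w-w=w(c/\hat c-1)$ and $1-w$ are both $O(\delta)$.
\item For the truncation bias it fails. \textbf{C$\tilde{\bm{3}}$} and \textbf{A7} give only $|E[\mu_{\bm{J}(n),i}]-\mu^{\tilde Q(\delta)}|\le C_{\tilde Q}\Delta_{\bm{J}(n)}^{1/2}=o(\sqrt{\delta/n})$, with no factor of $\delta$ in either condition. When $\delta_n\to0$, $\sqrt{\delta/n}$ is not $o(\delta/\sqrt n)$.
\end{itemize}
You can close the argument in one of three ways: take $\delta\neq0$ fixed, so that $\sigma^\tau$ is bounded away from zero uniformly in $\bm{J}$ (see below) and Theorem~\ref{thm:estimator_asymp}'s bounds suffice; strengthen \textbf{A7} to $\Delta_{\bm{J}(n)}=o(\delta^2/n)$; or add a condition making the truncation bias $O(\delta\,\Delta_{\bm{J}(n)}^{1/2})$.

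\textbf{Your lower bound on $\sigma^\tau$ is also misstated.} No $v$ contribution survives near $\delta=0$, because its coefficient is $(w-1)^2$. Use your own decomposition instead: by \textbf{C6}, $(\sigma^\tau)^2\ge\sigma_{\min}^2E[(w-1)^2]$. Here $E[(w-1)^2\mid\cdot]=\Var(e^{\delta B_1^{(2)}}\mid\cdot)/E(e^{\delta B_1^{(2)}}\mid\cdot)^2$ is of order $\delta^2$ for small $\delta$, since \textbf{A$\tilde{\bm{4}}$} bounds $\Var(B_1^{(2)}\mid\cdot)$ below. It is at least of order $\delta$ for large $\delta$, by \textbf{A$\tilde{\bm{4}}$} and \textbf{C$\tilde{\bm{5}}$}. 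Because $\psi=(w-1)(Y-E_{\tilde Q}m)-E[(w-1)(Y-E_{\tilde Q}m)]$ satisfies $|\psi|\le 4B\sup|w-1|$, the ratio $|\psi|/\sigma^\tau$ is $O(1)$ for small $\delta$ and $O(\sqrt\delta)$ for large $\delta$. The Lindeberg condition then follows from \textbf{C8} in every regime, which is a sturdier argument than the paper's reverse-triangle bound.
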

Similar to the asymptotic variance of $\hat{\mu}^{\tilde{Q}(\delta)}_{\bm{J}(n)}$, the terms of $\sigma^{\tau^{\tilde{Q}(\delta)^2}}_{\bm{J}(n)}$ containing $d^{\tilde{Q}(\delta)}(\bm{X}, A_{\bm{J}(n)}(\cdot))$ vary the most with $\delta$ and dominate the asymptotic variance as $\delta$ moves farther away from zero. Furthermore, we have an equivalent result to Corollary \ref{thm:coro_estimator_asymp}, for the estimator $\hat{\tau}^{\tilde{Q}(\delta)}$ whose nuisance functions condition on the true functional treatment, as stated in Corollary \ref{thm:coro_contrast_asymp}. For all estimators, we obtain confidence intervals through plug-in estimators of the asymptotic variance. 

\section{Simulation studies}
\label{sec:simulation4}
We focus our simulations on evaluating the performance of $\hat{\mu}^{\tilde{Q}(\delta)}_{\bm{J}(n)}$ and $\hat{\tau}^{\tilde{Q}(\delta)}_{\bm{J}(n)}$ for $\tilde{q} = q, q^*$ because we can estimate their nuisance parameters with a wide variety of flexible models. We aim to validate the asymptotic results of Section \ref{sec:estimation}, evaluate estimator performance across various data generating scenarios, and explore which factors influence the bias of $\hat{\mu}^{Q^*(\delta)}_{\bm{J}(n)}$ when assumption \textbf{A$\bm{3}^*$} does not hold.

\subsection{Simulation methods}
We evaluate the rate of our estimators, consistency of the variance estimators, and coverage by computing these estimators for $n=1000,2000,4000,8000,1600,32000$.
We compare our asymptotic variance estimate to the empirical variance of the estimator since all terms in the asymptotic variance must be estimated to get a Monte Carlo approximation of the true asymptotic variance. For $n=1000$, we compare the absolute percent bias, average estimated variance, and coverage of our proposed estimators to evaluate how finite-sample estimator performance varies by data generating scenario. 
We generate scenarios that vary by the number of true basis coefficients over $[0, t_1)$ ($J_1=2,5,8$) and $(t_2, T]$ ($J_3=2,5,8$), the standard deviation of the observed conditional distribution of $B_1^{(2)}$ ($\sigma = 10, 12$), and the proportion of variance explained ($99\%$ and $99.9\%$) by the selected FPCA basis functions. The full data generation process is described in supplementary Section \ref{sec:supp_sim_methods}. We explore $\delta = 0.03, 0.045, 0.06, 0.09, 0.12, 0.16$ for $\sigma = 10$ and $\delta = 0.02, 0.03, 0.04, 0.06, 0.08, 0.1$ for $\sigma = 12$. These $\delta$ ranges where chosen such that this range encapsulates minimal to substantial practical positivity violations for a sample size of $n=1000$.  We use a cross-fitting with two folds to compute the proposed estimators. We estimate the conditional expectation with random forests and the conditional density with LinCDE
\citep{gao_lincde_2022}. We generate 1,000 independent datasets for all scenarios and compute the true estimand values using 1 million Monte Carlo samples. 

\subsection{Simulation results}
As sample size increases, the root mean squared error (RMSE) of $\hat{\mu}^{Q(\delta)}_{\bm{J}(n)}$ and $\hat{\tau}^{Q(\delta)}_{\bm{J}(n)}$ decreases (Figure \ref{fig:sim_rate} and supplementary Figure \ref{fig:sim_rate_tau}). Specifically,  $\log_2(\text{RMSE})$ decreases linearly with a slope $\frac12$; this confirms that estimator error decreases at a scaled root-$n$ rate as derived in Theorems \ref{thm:estimator_asymp} and \ref{thm:contrast_asymp}. Across all sample sizes, as $\delta$ increases so does estimator RMSE. Furthermore, as sample size increases the asymptotic variance estimate approaches the scaled empirical variance of the estimator, which is a close approximation of the true asymptotic variance. At lower sample sizes, we are substantially underestimating the asymptotic variance for $\delta= 0.09,0.12, 0.16$; this is likely due to practical, finite-sample positivity violations. Coverage remains consistent at the nominal level or increases towards the nominal level as sample size increases and variance estimation improves.

\begin{figure}[h]
    \centering
    \includegraphics[width=0.9\linewidth]{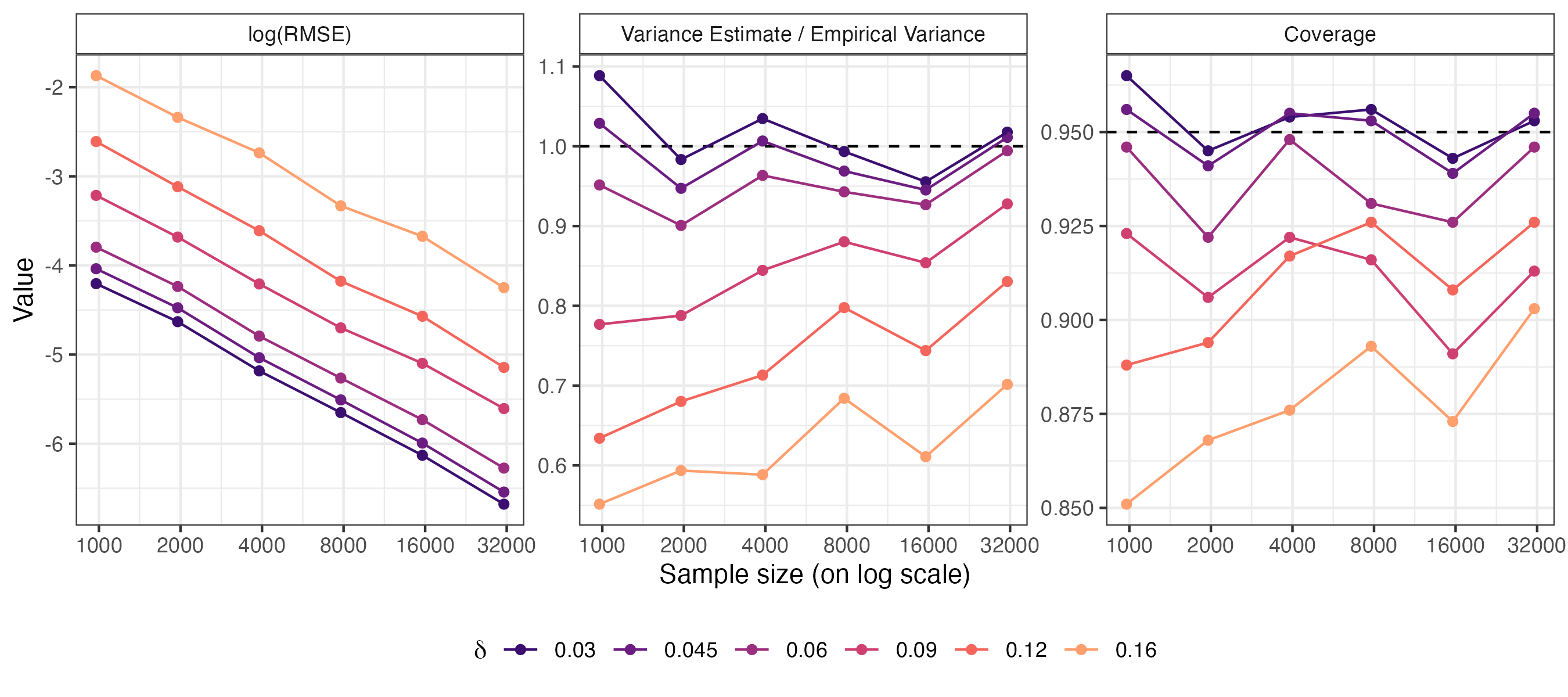}
    \caption{\footnotesize Estimator log(RMSE), the ratio of the average variance estimate and empirical estimator variance, and coverage for $\hat{\mu}^{Q(\delta)}_{\bm{J}(n)}$ by sample size and $\delta$. Dashed lines indicate the desired or nominal values for the right two panels.}
    \label{fig:sim_rate}
\end{figure}

For lower $\delta$ values and $J_3 = 2$, the absolute percent bias of $\hat{\mu}^{Q(\delta)}_{\bm{J}(n)}$ tends to be low ($<1.5\%$), the asymptotic variance tends to be estimated well, and coverage is at the nominal level (Figure \ref{fig:sim_q}). However, as $J_3$ increases the conditional density estimation problem becomes more challenging, resulting in increased bias and lower coverage. In addition, as $\delta$ increases, there is increased estimator bias and poor variance estimation likely due to finite-sample positivity violations. When the number of true of basis functions for $[0, t1)$ increases from two (Figure \ref{fig:sim_q}) to eight (supplementary Figure \ref{fig:sim_q10_b1_8}), estimator performance slightly poorer as estimating conditional expectation function is more challenging. These results emphasize the importance of the nuisance parameter estimation at smaller sample sizes. 
We see similar trends for $\hat{\tau}^{Q(\delta)}_{\bm{J}(n)}$, although absolute percent bias tends to be higher as the true treatment effects are small (supplementary Figure \ref{fig:sim_q_q10_tau}). 
When there is higher conditional density variance, $\sigma = 12$, estimator performance is slightly better for all $\delta$ values (supplementary Figure \ref{fig:sim_q_q12}).
Finally, selecting FPCA bases that explain $99.9\%$ rather than $99\%$ of the total variance does not substantially change estimation performance, though coverage is slightly improved (supplementary Figure \ref{fig:sim_q_q10_fpca999}).

\begin{figure}[h]
    \centering
    \includegraphics[width=0.9\linewidth]{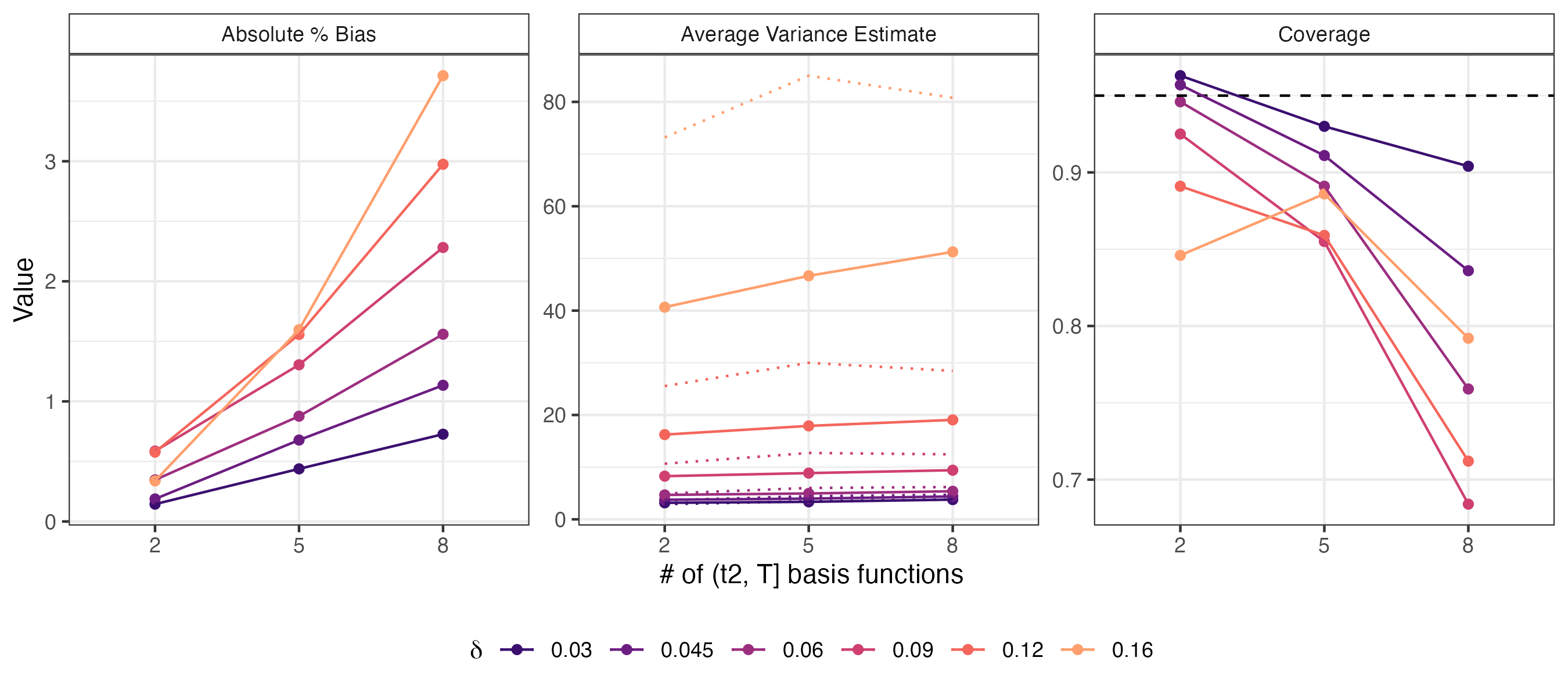}
    \caption{\footnotesize Estimator absolute percent bias, average variance estimate, and coverage by $J_3$ and $\delta$ for $\sigma = 10$, $J_1 = 2$, and 99\% of variance explained by the FPCA approximation for $\hat{\mu}^{Q(\delta)}_{\bm{J}(n)}$. The dotted line in the middle panel is the empirical estimator variance and the dashed line in the right panel is the nominal coverage level.}
    \label{fig:sim_q}
\end{figure}

When assumption \textbf{A}$\bm{3}^*$ holds, $\hat{\mu}^{Q^*(\delta)}_{\bm{J}(n)}$ has similar performance to $\hat{\mu}^{Q(\delta)}_{\bm{J}(n)}$ across simulation scenarios (supplementary Figure \ref{fig:sim_qstar_b3_0}). 
However, when assumption \textbf{A}$\bm{3}^*$ does not hold, there can be extreme bias and severe undercoverage (supplementary Figure \ref{fig:sim_qstar}). Specifically, when the function over $[0, t_1)$ is not highly predictive of the function over $(t_2, T]$, absolute percent bias can be as high as 25\%. Yet, when the function over $[1, t_1)$ is highly predictive of the function over $(t_2, T]$, bias is low and coverage remains high. Thus, when assumption \textbf{A}$\bm{3}^*$ is violated, the bias of $\hat{\mu}^{Q^*(\delta)}_{\bm{J}(n)}$ is a function of both 1) how related the function over $(t_2, T]$ is to the distribution of $B_1^{(2)}$; and 2) how predictive the function over $[0, t_1)$ is of the function over $(t_2, T]$.


\section{NHANES physical activity data application}
\label{sec:nhanes_application}
The National Health and Nutrition Examination Survey (NHANES), conducted by the U.S. Center for Disease Control, annually samples and collects demographic, health, and nutrition data on approximately 5,000 individuals in the United States through questionnaires, medical exams, and lab tests. We focus on data from the 2011-2012 and 2013-14 NHANES surveys \citep{chen_national_2018}, where participants were asked to wear a tri-axial wrist-worn accelerometer for 7 days to collect objective, 24-hour physical activity data. 
We are interested determining the causal relationship between realistic, population-level changes in physical activity and 5-year all-cause mortality \citep{smirnova_predictive_2020,leroux_quantifying_2021,leroux_nhanes_2024}.
NHANES transforms the raw accelerometer data to minute-level physical activity data in the Monitor Independent Movement Summary (MIMS) units \citep{john_open-source_2019}, which is an open-source, device-agnostic measure of physical activity. We aggregate the physical activity across the 7 collection days for these individuals to obtain an average summary of each individual's weekly physical activity at the minute-level.
We use 14 traditional predictors of mortality as covariates \citep{smirnova_predictive_2020,leroux_quantifying_2021,leroux_nhanes_2024}: age, gender, race, educational level, BMI, diabetes, coronary heart
disease, congestive heart failure, heart attack, stroke,
cancer, alcohol use, cigarette smoking, and mobility problems. For the resulting sample of 7,504 observations, the estimated 5-year all-cause mortality rate is 6.10\%

Previous studies primarily relate aggregates of the minute-level physical activity (e.g., total daily MIMS) to all-cause mortality with standard regression methods such as logistic and Cox regression. However, we can also use scalar-on-function logistic regression (SoFR) \citep{goldsmith_penalized_2011}, from which we obtain a functional coefficient estimate for the relationship between physical activity and the odds of mortality (supplementary Figure \ref{fig:nhanes_sofr}).
This coefficient and its corresponding 95\% confidence interval vary over the 24 hours; a one-unit increase physical activity is significantly related a reduction in mortality from around 9AM-9PM and the coefficient is the most negative around 1PM-4PM. A natural question that this analysis prompts is: does the smooth coefficient illustrate a causal relationship between physical activity and mortality at different times of the day?
To address this question, we compute our proposed treatment effect estimator ($\hat{\tau}_{\bm{J}(n)}^{Q(\delta)}$) for stochastic policies that reflect increases in physical activity over three time periods: 7AM-10AM, 1PM-4PM, and 5PM-8PM. Further details on data cleaning, method implementation, connecting the MIMS units to more interpretable step counts \citep{PhysioNet-minute-level-step-count-nhanes-1.0.0} are described in supplementary Section \ref{sec:supp_real_data}.


We first explore how the expected step count increase under $q$ for a single $\delta$ value differs by key demographic factors (supplementary Figure \ref{fig:nhanes_cov}). 
In general, the median step count increase under the policy decreases and the step count distribution becomes more narrow as age increases. Furthermore, for almost every age category and time period, the median step count increase under the policy is lower and the distribution is narrower for those with mobility problems. Thus, the policy implicitly varies as expected along these two demographic factors; for a deterministic MFTP this variation would have to be deliberately specified. 
Treatment effect point estimates for all time periods indicate that an increase in physical activity is related to a decrease in the mortality rate (Figure \ref{fig:nhanes_te} and  supplementary Figure \ref{fig:nhanes_te_full_delta}). While the treatment effects become more negative as $\delta$ increases across all time periods, the confidence intervals are large for $\delta > 0.0225$ such that these treatment effect estimates are not significant at the $\alpha = 0.05$ level for any time period (supplementary Figure \ref{fig:nhanes_te_full_delta}). 
For 1PM-4PM, the treatment effect estimate is only significant at the smallest $\delta$ value which corresponds $-0.08\% \:\: (-0.16, 0.002)$ additive decrease in mortality. 
For 7AM-10AM, the treatment effect estimates are significant for smaller $\delta$ values; the largest significant estimated effect is $-0.62\% \:\: (-1.22, -0.03)$ which corresponds to a median expected step count increase of 123 steps under the policy. The treatment effect estimates for 5PM-8PM are significant for almost all $\delta < 0.0225$; the largest significant estimated effect is $	
-0.90 \:\: (-1.66, -0.14)$ which corresponds to a median expected step count increase of 234 steps under the policy. 

\begin{figure}[h!]
    \centering
    \includegraphics[width=0.8\linewidth]{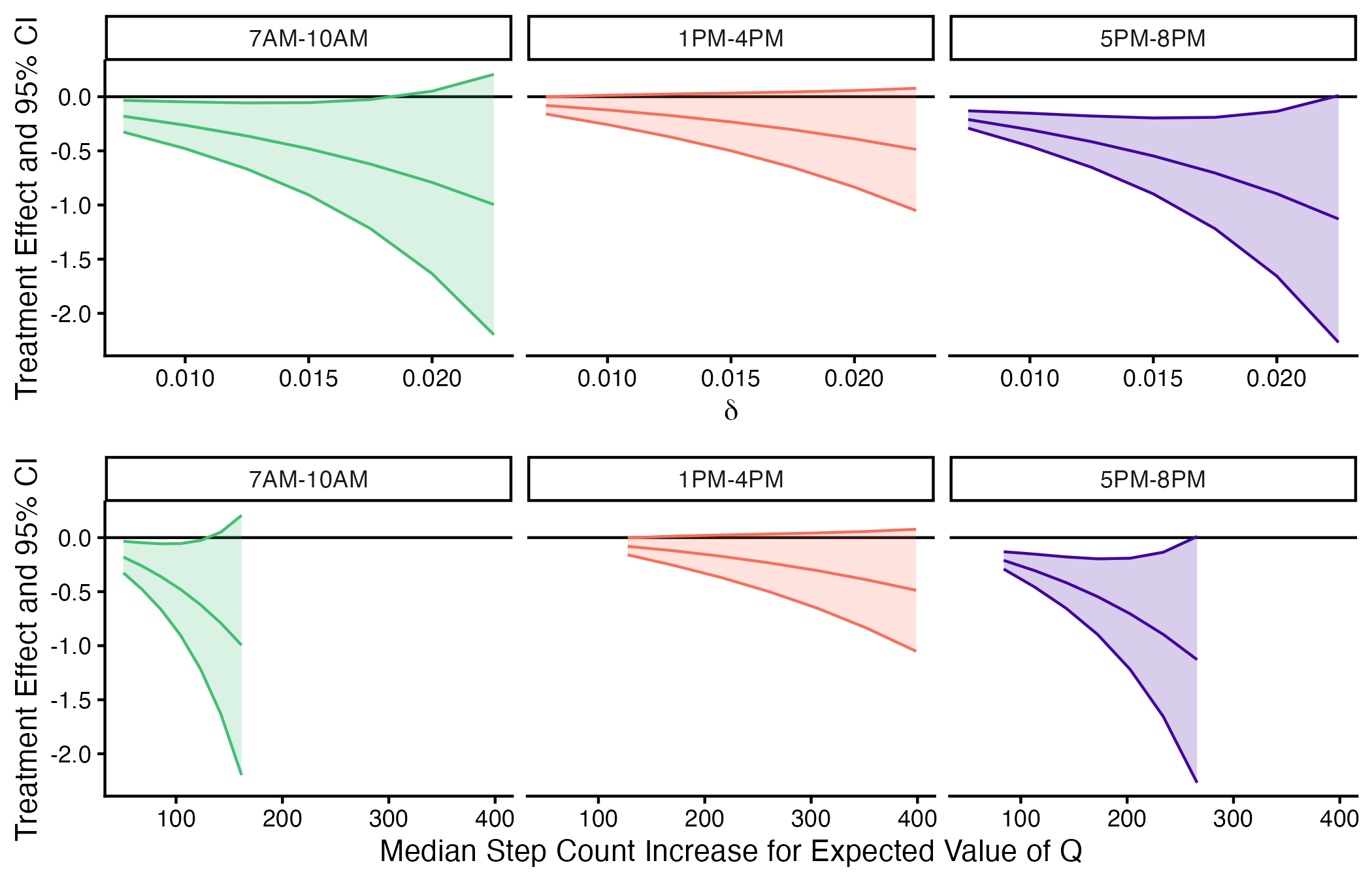}
    \caption{\footnotesize Treatment effect estimates and 95\% confidence bounds for the effect of stochastic policies that define increases in physical activity on 5-year all-cause mortality. The columns indicate the time period the policy is implemented over and the horizontal axis is $\delta$ for the top panel and the median expected step count increase under the policy for the bottom panel.}
    \label{fig:nhanes_te}
\end{figure}

These results do not align with the SoFR basis coefficient, although this is unsurprising as SoFR is an associative method. However, there are a variety of reasons we may see that increasing physical activity over 7AM-10AM or 5PM-8PM corresponds to significant and relatively large decreases in the 5-year all cause mortality rate. One potential reason is that physical activity over 7AM-10AM or 5PM-8PM may tend to be more intentional or vigorous than over 1PM-4PM such and this type of activity may be more effective at decreasing mortality. Another factor may be that meals are often eaten over 7AM-10AM and 5PM-8PM; physical activity shortly after eating is related health benefits such as increased weight loss and better hyperglycemia management \citep{hijikata_walking_2011,hashimoto_positive_2025}. While additional studies are needed to explore the mechanisms through which the effect of physical activity on mortality differs by time period, these results suggest mortality rates can be reduced at a population level through realistic increases in physical activity over 7AM-10AM and 5PM-8PM.

\section{Discussion}
\label{sec:discussion4}

There are substantial challenges in defining and estimating causal effects for functional treatments while accounting for time varying confounding. To define meaningful functional treatment estimands that satisfy positivity, we adapt stochastic policies for continuous treatments to functional treatments. As part of this adaptation, we develop a novel functional basis construction method that allows for one basis function to be explicitly chosen without sacrificing the proportion of total variance explained by the corresponding basis approximation. We leverage discrete-time longitudinal causal inference techniques to adjust for time-varying confounding and define two estimands of interest that reflect different manners of accounting for temporal relationships in the functional data. We then propose double machine learning-style estimators and show that they achieve rate double robustness and perform well in finite-samples across a wide variety of data generating scenarios. By applying our method to NHANES physical activity data, we find that increasing physical activity over 7AM-10AM or 5PM-8PM at the population-level significantly decreases 5 year all-cause mortality rates.

One characteristic of our proposed methodology, and stochastic policies in general, is that the counterfactual of the future treatment (e.g, function over $(t_2, T]$) given the stochastic policy is always equal to the observed future treatment. Thus, to explore the causal effect of both the stochastic policy of interest and potential changes to future treatment, it is necessary to hypothesize additional stochastic policies for that future treatment, i.e., construct sequential stochastic policies. While sequential stochastic policy estimands and estimators have been proposed for binary treatments \citep{kennedy_nonparametric_2019}, additional work is needed to develop sequential stochastic policy estimands and estimators for both continuous and functional treatments. In addition, our proposed estimators require estimating conditional density functions, which tends to be challenging especially at smaller sample sizes. Future work may focus on developing estimators for stochastic policies that do not require conditional density estimation to improve finite-sample estimator performance. 
An additional limitation of our method is that the time period of interest for the stochastic policy must be specified \textit{a priori} and be the same for all individuals in the data. Future work includes developing causal effect estimation methods for stochastic policies applied to different time periods across individuals both when the time periods are selected \textit{a priori} and dynamically.

\section{Competing interests}
No competing interest is declared.


\section{Funding}
The first author was supported by the National Science Foundation Graduate Research Fellowship Program under Grant No. 2237827.

\section{Data availability}
The data supporting this manuscript is publicly available at \url{https://wwwn.cdc.gov/nchs/nhanes/}. The code for replicating the analysis and results is openly accessible at \url{https://github.com/marthabarnard/causal_functional_stochastic}.

\putbib
\end{bibunit}

\pagebreak

\begin{bibunit}

\setcounter{theorem}{0}
\setcounter{figure}{0}
\setcounter{section}{0}

\center{
\textbf{\Large Supplementary Materials for Causal Inference for Functional Treatments with Stochastic Policies} \\}
\justifying

\section{Theoretical Results and Proofs}
\label{sec:supp_theory}
Let $\Vert \cdot \Vert_2$ be the Euclidean norm.
\subsection{Theorem \ref{thm:novel_basis}}
\begin{theorem}
    Let $\bar{A}_{J-1}(\cdot)$ and $\tilde{A}_J(\cdot)$ be, respectively, the $\{\psi_j(\cdot)\}_{j=1}^{J - 1}$ basis (FPCA) and $\{\gamma_j(\cdot)\}_{j=1}^J$ basis approximations of $A(\cdot) - a_0(t)$. For any choice of $\gamma_1(\cdot)$, $\int_0^T \Var\{\tilde{A}_J(t)\}dt \geq \int_0^T \Var\{\bar{A}_{J-1}(t)\}dt$. 
\end{theorem}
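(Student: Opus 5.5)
The plan is to express both integrated variances as traces of the covariance operator compressed onto subspaces, and then use the nesting of those subspaces. Write $X(\cdot) = A(\cdot) - a_0(\cdot)$, which has mean zero, and let $K(s,t) = \Cov\{X(s), X(t)\}$ with covariance operator $\mathcal{K}$. For any finite-dimensional subspace $S \subset L^2[0,T]$, let $P_S$ denote the orthogonal projection onto $S$. The $\{\psi_j\}_{j=1}^{J-1}$ approximation $\bar{A}_{J-1} = \sum_{j=1}^{J-1}\langle X,\psi_j\rangle\psi_j$ is $P_{S_\psi}X$ with $S_\psi=\mathrm{span}\{\psi_1,\ldots,\psi_{J-1}\}$. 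Likewise, because $\{\gamma_j\}_{j=1}^J$ is orthonormal with $B_j=\langle X,\gamma_j\rangle$ (the coefficient of $A$ minus that of $a_0$), the approximation $\tilde{A}_J = \sum_{j=1}^J B_j\gamma_j$ is $P_{S_\gamma}X$ with $S_\gamma = \mathrm{span}\{\gamma_1,\ldots,\gamma_J\}$. By construction of the Gram--Schmidt step, $S_\gamma = \mathrm{span}\{\gamma_1,\psi_1,\ldots,\psi_{J-1}\} \supseteq S_\psi$. The paper already notes this, and I take it as given.

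The key identity is that for any orthonormal basis $\{e_j\}_{j=1}^m$ of $S$,
\[
\int_0^T \Var\{(P_SX)(t)\}\,dt = E\Vert P_S X\Vert_2^2 = \sum_{j=1}^m E\langle X,e_j\rangle^2 = \sum_{j=1}^m \langle \mathcal{K}e_j, e_j\rangle .
\]
The first equality uses that $P_SX$ has mean zero and applies Fubini to the variance. The second is Parseval within $S$. The correlation among the $B_j$ that the paper flags as the difficulty only enters the cross terms of $E\Vert\sum_j B_j\gamma_j\Vert^2$, and these vanish because the $\gamma_j$ are orthonormal. So the integrated variance equals $E\Vert P_SX\Vert^2$ no matter how the coefficients are correlated. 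For the FPCA case this gives $\sum_{j=1}^{J-1}\theta_j$. For the new basis it gives $E\Vert P_{S_\gamma}X\Vert^2$.

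To compare the two, I use $S_\psi\subseteq S_\gamma$, which gives $P_{S_\gamma}X = P_{S_\psi}X + P_{S_\gamma\ominus S_\psi}X$ with the two pieces orthogonal. Hence $E\Vert P_{S_\gamma}X\Vert^2 = E\Vert P_{S_\psi}X\Vert^2 + E\Vert P_{S_\gamma\ominus S_\psi}X\Vert^2 \ge E\Vert P_{S_\psi}X\Vert^2$. Equivalently, one can extend $\{\psi_j\}_{j=1}^{J-1}$ to an orthonormal basis of $S_\gamma$ by one extra vector $e$ and note that the added term $\langle\mathcal{K}e,e\rangle=E\langle X,e\rangle^2\ge 0$. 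In the degenerate case where $\gamma_1\in S_\psi$, the two subspaces coincide and equality holds. This argument works for any continuous unit-norm $\gamma_1$, which is exactly the claimed conclusion.

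The main obstacle is mostly technical: justifying the interchange $\int\Var = E\Vert\cdot\Vert^2$ and the Parseval step. I would handle this by assuming $E\Vert X\Vert_2^2=\sum_j\theta_j<\infty$, as in \textbf{C1}. That makes every $\langle X,e\rangle$ square-integrable and lets Fubini apply. I would also stress that the result does not rely on the basis-invariance of the trace of $P_S\mathcal{K}P_S$, which is the point that resolves the concern about correlated $B_j$.
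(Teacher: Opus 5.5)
Your proof is correct and is essentially the paper's argument, written in coordinate-free form. The paper's key identity $\Vert\tilde{A}_J\Vert_2^2=\Vert\bar{A}_{J-1}\Vert_2^2+\sum_{j=1}^{J}\langle R,\gamma_j\rangle^2$, with $R=X-\bar{A}_{J-1}$ the FPCA tail, is exactly your Pythagorean split, because $\sum_{j=1}^{J}\langle R,\gamma_j\rangle\gamma_j=P_{S_\gamma}R=P_{S_\gamma\ominus S_\psi}X$; working directly with the nesting $S_\psi\subseteq S_\gamma$ spares you the paper's explicit Gram--Schmidt expansion $\psi_j=\sum_{h=1}^{j+1}\langle\gamma_h,\psi_j\rangle\gamma_h$ and the accompanying cross-term bookkeeping.
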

\begin{proof}
We first introduce and prove the following lemma:
\begin{supplemma}
    For any choice of $\gamma_1(\cdot)$,  $||A(\cdot)-\tilde{A}_J(\cdot)||_2^2 \leq ||A(\cdot)-\bar{A}_{J-1}(\cdot)||_2^2$.  
    \label{lemma:novel_basis}
\end{supplemma}
\begin{proof}
Since the basis $\{\gamma\}_{j=1}^\infty$ is constructed through Gram-Schmidt orthonormalization, we can represent the original FPCA basis as $\psi_j(\cdot) = \sum_{h=1}^{j+1} \langle \gamma_h (\cdot), \psi_j (\cdot) \rangle \gamma_h(\cdot)$. Then,
 \begin{align*}
        A(\cdot) &= a_0(\cdot) +\sum_{j=1}^{\infty} \theta_j^{1/2}A_j\psi_j(\cdot) \\
        &= a_0(\cdot) +\sum_{j=1}^{\infty} \theta_j^{1/2}A_j\sum_{h=1}^{j+1} \langle \gamma_h (\cdot), \psi_j (\cdot) \rangle \gamma_h(\cdot) \\
        &= a_0(\cdot) +\sum_{j=1}^{\infty} \left\{\sum_{h=\max(1, j-1)}^\infty \theta_h^{1/2}A_h\langle \gamma_j (\cdot), \psi_h (\cdot) \rangle\right\} \gamma_j(\cdot)
    \end{align*}
where third equality follows from swapping the $h$ and $j$ indices. Then, $\tilde{A}_J(\cdot)$ can be simplified to the following,
\begin{align*}
    \tilde{A}_J(\cdot) &= a_0(\cdot) +\sum_{j=1}^{J} \left\{\sum_{h=\max(1, j-1)}^\infty \theta_h^{1/2}A_h\langle \gamma_j (\cdot), \psi_h (\cdot) \rangle\right\} \gamma_j(\cdot) \\
    &= a_0(\cdot) + \sum_{h=1}^{J-1}\theta_h^{1/2}A_h \sum_{j=1}^{h+1}\langle \gamma_j (\cdot), \psi_h (\cdot) \rangle \gamma_j (\cdot) + \sum_{j=1}^J\sum_{h=J}^\infty \theta_h^{1/2}A_h\langle \gamma_j (\cdot), \psi_h (\cdot) \rangle \gamma_j(\cdot) \\
    &= a_0(\cdot) + \underbrace{\sum_{h=1}^{J-1}\theta_h^{1/2}A_h \psi_h(\cdot)}_{J-1\text{ FPCA truncation}} + \sum_{j=1}^J\sum_{h=J}^\infty \theta_h^{1/2}A_h\langle \gamma_j (\cdot), \psi_h (\cdot) \rangle \gamma_j(\cdot) 
\end{align*}
and we see $ \tilde{A}_J(\cdot)$ is equal to $\bar{A}_{J-1}(\cdot)$ plus an additional term that has expectation $0$. Then,
\begin{align*}
    A(\cdot) - \tilde{A}_J(\cdot) &= \sum_{h=J}^\infty \theta_h^{1/2}A_h \psi_h(\cdot) - \sum_{j=1}^J\sum_{h=J}^\infty \theta_h^{1/2}A_h\langle \gamma_j (\cdot), \psi_h (\cdot) \rangle \gamma_j(\cdot) \\
    &= \sum_{h=J}^\infty \theta_h^{1/2}A_h \left\{\psi_h(\cdot) - \sum_{j=1}^J\langle \gamma_j (\cdot), \psi_h (\cdot) \rangle \gamma_j(\cdot)\right\} \\
    &= \sum_{h=J}^\infty \theta_h^{1/2}A_h \left\{\sum_{j=1}^{h+1} \langle \gamma_j (\cdot), \psi_h (\cdot) \rangle \gamma_j(\cdot) - \sum_{j=1}^J\langle \gamma_j (\cdot), \psi_h (\cdot) \rangle \gamma_j(\cdot)\right\} \\
    &= \sum_{h=J}^\infty \theta_h^{1/2}A_h \left\{\sum_{j=J+1}^{h+1} \langle \gamma_j (\cdot), \psi_h (\cdot) \rangle \gamma_j(\cdot) \right\} \\
    &= \sum_{h=J}^\infty \theta_h^{1/2}A_h \left\{\sum_{j=J+1}^{\infty} \langle \gamma_j (\cdot), \psi_h (\cdot) \rangle \gamma_j(\cdot) \right\} \\
    &= \sum_{j=J+1}^{\infty} \left\langle  \sum_{h=J}^\infty \theta_h^{1/2}A_h \psi_h (\cdot) , \gamma_j(\cdot)\right\rangle \gamma_j(\cdot)
\end{align*}
where the first four equalities follow from simple computation and the fact that $\psi_j(\cdot) = \sum_{h=1}^{j+1} \langle \gamma_h (\cdot), \psi_j (\cdot) \rangle \gamma_h(\cdot)$, the fifth equality follows from the fact that for $\{\gamma_j(\cdot)\}_{j=1}^\infty$ constructed through the Gram-Schmidt process, $\sum_{j=J+1}^{h+1} \langle \gamma_j (\cdot), \psi_h (\cdot) \rangle \gamma_j(\cdot)  = \sum_{j=J+1}^{\infty},\langle \gamma_j (\cdot), \psi_h (\cdot) \rangle \gamma_j(\cdot) $, and the sixth equality follows from the linearity of the inner product and rearranging the two sums. Then,
\begin{align*}
    \Vert A(\cdot) - \tilde{A}_J(\cdot)\Vert_2^2 &= \left\Vert\sum_{j=J+1}^{\infty} \left\langle  \sum_{h=J}^\infty \theta_h^{1/2}A_h \psi_h (\cdot) , \gamma_j(\cdot)\right\rangle \gamma_j(\cdot)\right\Vert_2^2 \\
    &= \sum_{j=J+1}^{\infty} \left\langle  \sum_{h=J}^\infty \theta_h^{1/2}A_h \psi_h (\cdot) , \gamma_j(\cdot)\right\rangle^2 \\
    &\leq \left\Vert\sum_{h=J}^\infty\theta_h^{1/2}A_h \psi_h(\cdot)\right\Vert_2^2 \\
    &= \Vert A(\cdot)-\bar{A}_{J-1}(\cdot)\Vert_2^2
\end{align*}
where the second equality follows from the fact that $\{\gamma_j(\cdot)\}_{j=J+1}^\infty$ is an orthonormal set and the third inequality follows from Bessel's inequality. Thus, $\Vert A(\cdot)-\tilde{A}_J(\cdot)\Vert_2^2 \leq \Vert A(\cdot)-\bar{A}_{J-1}(\cdot)\Vert_2^2$ as desired. 
\end{proof}

We now return to a proof of Theorem \ref{thm:novel_basis}
    First, we have that 
    \begin{align*}
         \int_0^T  \Var\{\tilde{A}_J(t)\}dt &= \int_0^T E[\tilde{A}_J(t)^2]dt \\
        &= E\left[\int_0^T \tilde{A}_J(t)^2 dt\right] \\
        &= E\left[\Vert \tilde{A}_J(\cdot) \Vert_2^2\right]
    \end{align*}
    such that it suffices to show that $E[\Vert \tilde{A}_J(t) \Vert_2^2] \geq \int_0^T \Var\{\bar{A}_{J-1}(t)\}dt$.
    Then, note that since $\{\gamma_j\}_{j=1}^\infty$ is orthonormal,
    \begin{align*}
         \Vert A(\cdot) - a_0(\cdot) \Vert_2^2 &= \Vert A(\cdot) - a_0(\cdot) - \tilde{A}_J(\cdot) + \tilde{A}_J(\cdot) \Vert_2^2  \\
         &= \Vert \sum_{j=1}^{J} B_j\gamma_j(t)  + \sum_{j=J+1}^{\infty} B_j\gamma_j(t) \Vert_2^2 \\
         &= \Vert \sum_{j=1}^{J} B_j\gamma_j(t) \Vert_2^2 + \Vert \sum_{j=J+1}^{\infty} B_j\gamma_j(t) \Vert_2^2\\
         &=  \Vert \tilde{A}_J(\cdot) \Vert_2^2  + \Vert A(\cdot) - a_0(\cdot) - \tilde{A}_J(\cdot)\Vert_2^2.  \\
    \end{align*}
    Next, building upon the proof of Lemma 1 we also have that, 
    \begin{align*}
        &\Vert A(\cdot) - a_0(\cdot) \Vert_2^2 = \Vert A(\cdot) - a_0(\cdot) - \tilde{A}_J(\cdot) + \tilde{A}_J(\cdot) \Vert_2^2 \\
        &= \left\Vert\left\{\sum_{j=J+1}^{\infty} \left\langle  \sum_{h=J}^\infty \theta_h^{1/2}A_h \psi_h (\cdot) , \gamma_j(\cdot)\right\rangle \gamma_j(\cdot)\right\}  \right.\\
        &\quad+ \left. \left \{\sum_{h=1}^{J-1}\theta_h^{1/2}A_h \psi_h(\cdot) + \sum_{j=1}^{J} \left\langle  \sum_{h=J}^\infty \theta_h^{1/2}A_h \psi_h (\cdot) , \gamma_j(\cdot)\right\rangle \gamma_j(\cdot) \right\}\right\Vert_2^2  \\
        &= \sum_{j=J+1}^{\infty} \left\langle  \sum_{h=J}^\infty \theta_h^{1/2}A_h \psi_h (\cdot) , \gamma_j(\cdot)\right\rangle^2 + \sum_{j=1}^{J} \left\langle  \sum_{h=J}^\infty \theta_h^{1/2}A_h \psi_h (\cdot) , \gamma_j(\cdot)\right\rangle^2 + \Vert \sum_{h=1}^{J-1}\theta_h^{1/2}A_h \psi_h(\cdot)\Vert_2^2 \\
        &\quad + 2\left\langle \sum_{h=1}^{J-1}\theta_h^{1/2}A_h \psi_h(\cdot), \sum_{j=1}^{J} \left\langle  \sum_{h=J}^\infty \theta_h^{1/2}A_h \psi_h (\cdot) , \gamma_j(\cdot)\right\rangle \gamma_j(\cdot) + \sum_{j=J+1}^{\infty} \left\langle  \sum_{h=J}^\infty \theta_h^{1/2}A_h \psi_h (\cdot) , \gamma_j(\cdot)\right\rangle 
        \right\rangle \\
        &= \sum_{j=J+1}^{\infty} \left\langle  \sum_{h=J}^\infty \theta_h^{1/2}A_h \psi_h (\cdot) , \gamma_j(\cdot)\right\rangle^2 + \sum_{j=1}^{J} \left\langle  \sum_{h=J}^\infty \theta_h^{1/2}A_h \psi_h (\cdot) , \gamma_j(\cdot)\right\rangle^2 \\
        &\quad+ \Vert \sum_{h=1}^{J-1}\theta_h^{1/2}A_h \psi_h(\cdot)\Vert_2^2  2\left\langle \sum_{h=1}^{J-1}\theta_h^{1/2}A_h \psi_h(\cdot), \sum_{h=J}^{\infty}\theta_h^{1/2}A_h \psi_h(\cdot) \right\rangle \\
        &= \sum_{j=J+1}^{\infty} \left\langle  \sum_{h=J}^\infty \theta_h^{1/2}A_h \psi_h (\cdot) , \gamma_j(\cdot)\right\rangle^2 + \sum_{j=1}^{J} \left\langle  \sum_{h=J}^\infty \theta_h^{1/2}A_h \psi_h (\cdot) , \gamma_j(\cdot)\right\rangle^2 
        + \Vert \sum_{h=1}^{J-1}\theta_h^{1/2}A_h \psi_h(\cdot)\Vert_2^2 
    \end{align*}
    where the second equality follows from the proof of Lemma 1, the third equality follows through expansions of the squared norm and the fact that $\{\gamma_j(\cdot)\}_{j=1}^\infty$ are orthonormal, the fourth equality follows from the definition of  $\sum_{h=J}^{\infty}\theta_h^{1/2}A_h \psi_h(\cdot)$ in terms of the basis $\{\gamma_j(\cdot)\}_{j=1}^\infty$, and the fifth from the fact that $\{\psi_h(\cdot)\}_{h=1}^\infty$ is orthonormal. Then by the proof of Lemma 1, since $\Vert A(\cdot) - a_0(\cdot) - \tilde{A}_J(\cdot)\Vert_2^2 = \sum_{j=J+1}^{\infty} \left\langle  \sum_{h=J}^\infty \theta_h^{1/2}A_h \psi_h (\cdot) , \gamma_j(\cdot)\right\rangle^2$, $\Vert \tilde{A}_J(\cdot) \Vert_2^2 = \sum_{j=1}^{J} \left\langle  \sum_{h=J}^\infty \theta_h^{1/2}A_h \psi_h (\cdot) , \gamma_j(\cdot)\right\rangle^2 
        + \left\Vert \sum_{h=1}^{J-1}\theta_h^{1/2}A_h \psi_h(\cdot)\right\Vert_2^2$.
By the Karhunen–Loève Theorem, $\int_0^T \Var\{\bar{A}_{J-1}(t)\}dt = \sum_{h=1}^{J-1}\theta_h$. Then,
\begin{align*}
    \int_0^T  \Var\{\tilde{A}_J(t)\}dt &= E\left[\Vert \tilde{A}_J(\cdot) \Vert_2^2 \right] \\
    &= E\left[\left\Vert \sum_{h=1}^{J-1}\theta_h^{1/2}A_h \psi_h(\cdot)\right\Vert_2^2 + \sum_{j=1}^{J} \left\langle  \sum_{h=J}^\infty \theta_h^{1/2}A_h \psi_h (\cdot) , \gamma_j(\cdot)\right\rangle^2  \right] \\
    &=  E[\sum_{h=1}^{J-1} \theta_hA_h^2] + E\left[ \sum_{j=1}^{J} \left\langle  \sum_{h=J}^\infty \theta_h^{1/2}A_h \psi_h (\cdot) , \gamma_j(\cdot)\right\rangle^2\right] \\
    &= \sum_{h=1}^{J-1} \theta_h + E\left[ \sum_{j=1}^{J} \left\langle  \sum_{h=J}^\infty \theta_h^{1/2}A_h \psi_h (\cdot) , \gamma_j(\cdot)\right\rangle^2\right] \\
    &\geq \int_0^T \Var\{\bar{A}_{J-1}(t)\}dt 
\end{align*}
as desired. 

\end{proof}

\subsection{Lemma \ref{thm:pop_avg_limit}}

\begin{lemma}
    Given conditions \textbf{C1} and \textbf{C2},  $\lim_{\bm{J} \to \infty} \mu_{\bm{J}}$ exists and $E[Y] = \lim_{\bm{J} \to \infty} \mu_{\bm{J}}$.
    \label{thm:pop_avg_limit}
\end{lemma}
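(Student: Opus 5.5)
Writing $A_{\bm{J}}(\cdot) = a_0(\cdot) + \sum_{k=1}^3\sum_{j=1}^{J_k}\theta_j^{(k)^{1/2}}A_j^{(k)}\psi_j^{(k)}(\cdot)$ for the truncated FPCA approximation built separately on $[0,t_1)$, $[t_1,t_2]$ and $(t_2,T]$, we have $\mu_{\bm{J}} = E\{m(\bm{X}, A_{\bm{J}}(\cdot))\}$. The plan is to compare $\mu_{\bm{J}}$ directly with $E\{m(\bm{X},A(\cdot))\}$ using the Lipschitz condition \textbf{C2}, and then to identify the latter with $E[Y]$ by iterated expectations. Concretely, by \textbf{C2} and Jensen's inequality,
\begin{align*}
|\mu_{\bm{J}} - E\{m(\bm{X},A(\cdot))\}| \leq E|m(\bm{X},A_{\bm{J}}(\cdot)) - m(\bm{X},A(\cdot))| \leq C\, E\Vert A(\cdot)-A_{\bm{J}}(\cdot)\Vert_2 \leq C\left(E\Vert A(\cdot)-A_{\bm{J}}(\cdot)\Vert_2^2\right)^{1/2}.
\end{align*}

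The second step is to compute the residual mean square. Since the three intervals are disjoint, $\Vert A - A_{\bm{J}}\Vert_2^2 = \sum_{k=1}^3 \Vert A^{(k)} - A^{(k)}_{J_k}\Vert_2^2$, where $A^{(k)}_{J_k}$ is the $J_k$-term FPCA truncation on $T_k$. By the Karhunen--Lo\`eve theorem on each interval, and using the orthonormality of $\{\psi^{(k)}_j\}$ together with the uncorrelated, unit-variance scores, $E\Vert A^{(k)} - A^{(k)}_{J_k}\Vert_2^2 = \sum_{j>J_k}\theta_j^{(k)}$. Condition \textbf{C1} makes each $\sum_j\theta_j^{(k)}$ finite, so each tail tends to $0$ as $J_k\to\infty$. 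Hence $E\Vert A-A_{\bm{J}}\Vert_2^2 \le \Delta_{\bm{J}}\to 0$ as $\bm{J}\to\infty$, in any joint manner, so $\lim_{\bm{J}\to\infty}\mu_{\bm{J}}$ exists and equals $E\{m(\bm{X},A(\cdot))\}$. This expectation is finite: \textbf{C2} gives $|m(\bm{X},A)| \le |m(\bm{X},a_0)| + C\Vert A-a_0\Vert_2$, the second term has finite mean by \textbf{C1}, and the first is integrable assuming $E|Y| < \infty$.

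Finally, $E[Y] = E[E\{Y\mid \bm{X},A(\cdot)\}] = E\{m(\bm{X},A(\cdot))\}$ by the tower property, since $m(\bm{x},a(\cdot)) = E(Y\mid \bm{X}=\bm{x},A(\cdot)=a(\cdot))$ is the regular conditional expectation given the $L^2[0,T]$-valued random element $A$. Equivalently, $E[\epsilon]=0$.

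The main obstacle is making the conditional expectation given an infinite-dimensional $A(\cdot)$ rigorous, since there is no density. I would handle this by treating $A$ as a random element of the separable Hilbert space $L^2[0,T]$; regular conditional distributions then exist, and $m$ is Borel measurable because \textbf{C2} makes it Lipschitz, hence continuous, in $a$. A secondary technical point is that the piecewise FPCA expansions converge in $L^2(\Omega\times[0,T])$. This follows from \textbf{C1} via Mercer's theorem applied to each restricted covariance kernel.
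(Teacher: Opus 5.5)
Your proof is correct. It uses the same ingredients as the paper's proof: the Lipschitz condition \textbf{C2}, orthogonality of the three interval-wise expansions, the Karhunen--Lo\`eve identity $E\Vert A^{(k)}-A^{(k)}_{J_k}\Vert_2^2=\sum_{j>J_k}\theta_j^{(k)}$, Jensen, and $E[\epsilon]=0$. What differs is how you organize them.

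The paper proceeds in two steps.
\begin{enumerate}
\item It shows $\{\mu_{\bm J}\}$ is Cauchy by bounding $|\mu_{\bm J'}-\mu_{\bm J}|$ using only finite truncations. This fits its convention that $E\{m(\bm X,A(\cdot))\}$ is \emph{defined} as $\lim_{\bm J}\mu_{\bm J}$.
\item Only then does it identify the limit with $E[Y]$, through the auxiliary quantity $R_{\bm J',\bm J}$. That step silently uses $E\{m(\bm X,A_{\bm J'}(\cdot))\}\to E\{m(\bm X,A(\cdot))\}$, which is exactly your inequality.
\end{enumerate}

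You instead treat $A$ as an $L^2[0,T]$-valued random element and $m(\bm X,A(\cdot))$ as an ordinary integrable random variable. That gives existence and the value of the limit in one step, with an explicit rate $C\Delta_{\bm J}^{1/2}$. The paper itself uses this same direct comparison later, for the approximation-bias term in the proof of Theorem \ref{thm:estimator_asymp}. Your version also bounds the absolute difference directly and states the tacit requirements ($E|Y|<\infty$, measurability of $m$) that the paper leaves implicit. The paper's Cauchy step buys nothing extra once the comparison to the full-function target is available.

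Two small points.
\begin{itemize}
\item Integrability of $m(\bm X,A(\cdot))$ follows directly from $E|Y|<\infty$ by conditional Jensen. The detour through $m(\bm X,a_0)$ is unnecessary, and as written it runs in a circle.
\item Mercer's theorem is not the right citation for the $L^2(\Omega\times T_k)$ convergence of the interval-wise expansions. It requires a continuous covariance kernel on a compact domain, and neither is assumed here; note that $[0,t_1)$ and $(t_2,T]$ are half-open. The convergence follows from \textbf{C1} alone, via the spectral theorem for the trace-class covariance operator on $L^2(T_k)$.
\end{itemize}
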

\begin{proof}
    We will prove that $\{\mu_{\bm{J}}\}_{\bm{J} = 1}^\infty$ is a Cauchy sequence, which implies that $\lim_{\bm{J} \to \infty} \mu_{\bm{J}}$ exists. Let $\bm{N} = (N, N, N)$ We must show that for any $\delta > 0$ there exists an integer $N$ such that for any $\bm{J}' > \bm{J} > \bm{N}$, $|\mu_{\bm{J}'} - \mu_{\bm{J}}| \leq \delta$. Let $A_{\bm{J}}(\cdot) = a_0(\cdot) +  \sum_{k=1}^3\sum_{j=1}^{J_k} \theta^{(k)^{1/2}}_jA_j^{(k)}\psi_j^{(k)}(\cdot)$, $\bm{A}_{d}^{(k)} = \{A_j^{(k)}\}_{j=1}^{d}$ and $\bm{A}_{c:d}^{(k)} = \{A_j^{(k)}\}_{j=c}^{d}$. By the definition of $\mu_{\bm{J}}$ we have that,
    \begin{align*}
        \mu_{\bm{J}} &= \int_{\mathcal{\bm{X}} \times \mathbb{R}^{\bm{J}}} m\left\{\bm{x}, a_{\bm{J}}(\cdot)\right\}f_{\bm{X}}(\bm{x})f_{\bm{A}^{(1)}_{J_1}, \bm{A}^{(2)}_{J_2}, \bm{A}^{(3)}_{J_3}|\bm{X}}(\bm{a}^{(1)}_{J_1}, \bm{a}^{(2)}_{J_2}, \bm{a}^{(3)}_{J_3}|\bm{x}) d\bm{x}d\bm{a}^{(1)}_{J_1}d \bm{a}^{(2)}_{J_2}d\bm{a}^{(3)}_{J_3} \\
        &=  \int_{\mathcal{\bm{X}} \times \mathbb{R}^{\bm{J}}} m\left\{\bm{x}, a_{\bm{J}}(\cdot)\right\}f_{\bm{X}}(\bm{x})f_{\bm{A}^{(1)}_{J_1}, \bm{A}^{(2)}_{J_2}, \bm{A}^{(3)}_{J_3}|\bm{X}}(\bm{a}^{(1)}_{J_1}, \bm{a}^{(2)}_{J_2}, \bm{a}^{(3)}_{J_3}|\bm{x}) \\
        \quad &\times \left[\int_{\mathbb{R}^{\bm{J}' - \bm{J}}} f_{\bm{A}^{(1)}_{J_1+1:J_1'}, \bm{A}^{(2)}_{J_2+1:J_1'}, \bm{A}^{(3)}_{J_3+1:J_3'}| \bm{A}^{(1)}_{J_1}, \bm{A}^{(2)}_{J_2}, \bm{A}^{(3)}_{J_3}, \bm{X}}(\bm{a}^{(1)}_{J_1+1:J_1'}, \bm{a}^{(2)}_{J_2+1:J_2'}, \bm{a}^{(3)}_{J_3+1:J_3'}| \bm{a}^{(1)}_{J_1}, \bm{a}^{(2)}_{J_2}, \bm{a}^{(3)}_{J_3}, \bm{x})d\bm{a}^{(1)}_{J_1+1:J_1'} \right. \\
        &\quad \times \left. d\bm{a}^{(2)}_{J_2+1:J_2'}d\bm{a}^{(3)}_{J_3+1:J_3'})\right]d\bm{x}d\bm{a}^{(1)}_{J_1} d\bm{a}^{(2)}_{J_2}d\bm{a}^{(3)}_{J_3}
    \end{align*}
    where the area in brackets integrates to one. We also have that
    \begin{align*}
        \mu_{\bm{J}'} &= \int_{\mathcal{\bm{X}} \times \mathbb{R}^{\bm{J}'}} m\left\{\bm{x}, a_{\bm{J}'}(\cdot)\right\}f_{\bm{X}}(\bm{x})f_{\bm{A}^{(1)}_{J_1'}, \bm{A}^{(2)}_{J_2'}, \bm{A}^{(3)}_{J_3'}|\bm{X}}(\bm{a}^{(1)}_{J_1'}, \bm{a}^{(2)}_{J_2'}, \bm{a}^{(3)}_{J_3'}|\bm{x}) \\
        \quad &\times d\bm{x}d\bm{a}^{(1)}_{J_1'}d \bm{a}^{(2)}_{J_2'}d\bm{a}^{(3)}_{J_3'}. \\
    \end{align*}

Without loss of generality, we assume that $a_0(t)$ is a known function. From condition \textbf{C1}, we have that $\sum_{j=1}^\infty \theta^{(k)}_j < \infty$ for all $k=1,2,3$. Thus, for any $\delta > 0$ we can find $N$ such that for any $\bm{J}' > \bm{J} > \bm{N}$,
\begin{equation*}
    \sum_{j=J_k + 1}^{J_k'} \theta^{(k)}_j \leq \frac{\delta^2}{3C^2}, \text{ for all } k=1,2,3
\end{equation*}
where $C$ is the constant defined in condition \textbf{C2}. 
Without loss of generality, consider $\psi^{(k)}(\cdot) = \psi^{(k)}(\cdot)$ for interval $k$ and $\psi^{(k)}(\cdot) = 0$ at all other times in $[0, T]$ such that $\psi^{(k)}$ has domain $[0, T]$ for all $k = 1,2,3$. Then since $\{\psi^{(k)}_j\}_{j=1}^\infty$ are orthonormal for $k=1,2,3$, we have that,
\begin{align*}
    \left \Vert \sum_{k=1}^3 \sum_{j = J_k+1}^{J_k'} \theta_j^{(k)^{1/2}}A_j^{(k)}\psi_j^{(k)}(\cdot) \right\Vert_2^2&= \left \Vert \sum_{j = J_1+1}^{J_1'} \theta_j^{(1)^{1/2}}A_j^{(1)}\psi_j^{(1)}(\cdot) \right\Vert_2^2 + \left \Vert \sum_{j = J_2+1}^{J_2'} \theta_j^{(2)^{1/2}}A_j^{(2)}\psi_j^{(2)}(\cdot) \right\Vert_2^2 \\
    &\quad+ \left \Vert \sum_{j = J_3+1}^{J_3'} \theta_j^{(3)^{1/2}}A_j^{(3)}\psi_j^{(3)}(\cdot) \right\Vert_2^2 \\
    &\quad+ 2\left\langle \sum_{j = J_1+1}^{J_1'} \theta_j^{(1)^{1/2}}A_j^{(1)}\psi_j^{(1)}(\cdot), \sum_{j = J_2+1}^{J_2'} \theta_j^{(2)^{1/2}}A_j^{(2)}\psi_j^{(2)}(\cdot)\right\rangle \\
    &\quad+ 2\left\langle \sum_{j = J_1+1}^{J_1'} \theta_j^{(1)^{1/2}}A_j^{(1)}\psi_j^{(1)}(\cdot), \sum_{j = J_3+1}^{J_3'} \theta_j^{(3)^{1/2}}A_j^{(3)}\psi_j^{(3)}(\cdot)\right\rangle  \\
    &\quad+ 2\left\langle \sum_{j = J_3+1}^{J_3'} \theta_j^{(3)^{1/2}}A_j^{(3)}\psi_j^{(3)}(\cdot), \sum_{j = J_2+1}^{J_2'} \theta_j^{(2)^{1/2}}A_j^{(2)}\psi_j^{(2)}(\cdot)\right\rangle  \\
    &= \left \Vert \sum_{j = J_1+1}^{J_1'} \theta_j^{(1)^{1/2}}A_j^{(1)}\psi_j^{(1)}(\cdot) \right\Vert_2^2 + \left \Vert \sum_{j = J_2+1}^{J_2'} \theta_j^{(2)^{1/2}}A_j^{(2)}\psi_j^{(2)}(\cdot) \right\Vert_2^2 \\
    &\quad+ \left \Vert \sum_{j = J_3+1}^{J_3'} \theta_j^{(3)^{1/2}}A_j^{(3)}\psi_j^{(3)}(\cdot) \right\Vert_2^2 
\end{align*}
where the second equality follows from the fact that $\langle \psi_{j_1}^{(k_1)}, \psi_{j_2}^{(k_2)}\rangle = 0$ for all $j_1, j_2$ and $k_1 \neq k_2$. Then,
\begin{align*}
    E\left[\left\Vert \sum_{k=1}^3 \sum_{j = J_k+1}^{J_k'} \theta_j^{(k)^{1/2}}A_j^{(k)}\psi_j^{(k)}(\cdot) \right\Vert_2^2 \right] &= E\left[\left \Vert \sum_{j = J_1+1}^{J_1'} \theta_j^{(1)^{1/2}}A_j^{(1)}\psi_j^{(1)}(\cdot) \right\Vert_2^2\right] + E\left[\left \Vert \sum_{j = J_2+1}^{J_2'} \theta_j^{(2)^{1/2}}A_j^{(2)}\psi_j^{(2)}(\cdot) \right\Vert_2^2\right] \\
    &\quad + E\left[\left \Vert \sum_{j = J_3+1}^{J_3'} \theta_j^{(3)^{1/2}}A_j^{(3)}\psi_j^{(3)}(\cdot) \right\Vert_2^2\right] \\
    &= E\left[\sum_{j = J_1+1}^{J_1'} \theta_j^{(1)}A_j^{(1)^2}\right] + E\left[\sum_{j = J_2+1}^{J_2'} \theta_j^{(2)}A_j^{(2)^2}\right]  +E\left[\sum_{j = J_3+1}^{J_3'} \theta_j^{(3)}A_j^{(3)^2}\right] \\
    &= \sum_{j = J_1+1}^{J_1'} \theta_j^{(1)} + \sum_{j = J_2+1}^{J_2'} \theta_j^{(2)} + \sum_{j = J_3+1}^{J_3'} \theta_j^{(3)} \\
    &\leq \frac{\delta^2}{3C^2} + \frac{\delta^2}{3C^2} + \frac{\delta^2}{3C^2} = \frac{\delta^2}{C^2}.
\end{align*}
Finally, by the Cauchy Schwartz inequality,
\begin{equation*}
    E\left[\left\Vert \sum_{k=1}^3 \sum_{j = J_k+1}^{J_k'} \theta_j^{(k)^{1/2}}A_j^{(k)}\psi_j^{(k)}(\cdot) \right\Vert_2 \right] \leq \sqrt{E\left[\left\Vert \sum_{k=1}^3 \sum_{j = J_k+1}^{J_k'} \theta_j^{(k)^{1/2}}A_j^{(k)}\psi_j^{(k)}(\cdot) \right\Vert_2^2 \right]} \leq \frac{\delta}{C}.
\end{equation*}
Given the above derivations, we have that,
\begin{align*}
    |\mu_{\bm{J}'} &- \mu_{\bm{J}}| = \left\vert \int_{\mathcal{\bm{X}}} \int_{\mathbb{R}^{\bm{J}'}} m\left\{\bm{x}, a_{\bm{J}'}(\cdot)\right\}f_{\bm{A}^{(1)}_{J_1'}, \bm{A}^{(2)}_{J_2'}, \bm{A}^{(3)}_{J_3'}|\bm{X}}(\bm{a}^{(1)}_{J_1'}, \bm{a}^{(2)}_{J_2'}, \bm{a}^{(3)}_{J_3'}|\bm{x})d\bm{a}^{(1)}_{J_1'}d \bm{a}^{(2)}_{J_2'}d\bm{a}^{(3)}_{J_3'} \right. f_{\bm{X}}(\bm{x})d\bm{x} \\
`   &\quad- \int_{\mathcal{\bm{X}}}\int_{\mathbb{R}^{\bm{J}}} m\left\{\bm{x}, a_{\bm{J}}(\cdot)\right\}f_{\bm{A}^{(1)}_{J_1}, \bm{A}^{(2)}_{J_2}, \bm{A}^{(3)}_{J_3}|\bm{X}}(\bm{a}^{(1)}_{J_1}, \bm{a}^{(2)}_{J_2}, \bm{a}^{(3)}_{J_3}|\bm{x})d\bm{a}^{(1)}_{J_1}, d\bm{a}^{(2)}_{J_2}d\bm{a}^{(3)}_{J_3} \\
         & \quad\times \int_{\mathbb{R}^{\bm{J}' - \bm{J}}} f_{\bm{A}^{(1)}_{J_1+1:J_1'}, \bm{A}^{(2)}_{J_2+1:J_1'}, \bm{A}^{(3)}_{J_3+1:J_3'}| \bm{A}^{(1)}_{J_1}, \bm{A}^{(2)}_{J_2}, \bm{A}^{(3)}_{J_3}, \bm{X}}(\bm{a}^{(1)}_{J_1+1:J_1'}, \bm{a}^{(2)}_{J_2+1:J_2'}, \bm{a}^{(3)}_{J_3+1:J_3'}| \bm{a}^{(1)}_{J_1}, \bm{a}^{(2)}_{J_2}, \bm{a}^{(3)}_{J_3}, \bm{x})\\
         & \left. \quad \times d\bm{a}^{(1)}_{J_1+1:J_1'}d \bm{a}^{(2)}_{J_2+1:J_2'}d\bm{a}^{(3)}_{J_3+1:J_3'}f_{\bm{X}}(\bm{x})d\bm{x} \right\vert \\
         &\leq  \int_{\mathcal{\bm{X}}} \int_{\mathbb{R}^{\bm{J}'}}\left\vert m\left\{\bm{x}, a_{\bm{J}'}(\cdot)\right\} - m\left\{\bm{x}, a_{\bm{J}}(\cdot)\right\}\right\vert  \times f_{\bm{A}^{(1)}_{J_1'}, \bm{A}^{(2)}_{J_2'}, \bm{A}^{(3)}_{J_3'}|\bm{X}}(\bm{a}^{(1)}_{J_1'}, \bm{a}^{(2)}_{J_2'}, \bm{a}^{(3)}_{J_3'}|\bm{x})d\bm{a}^{(1)}_{J_1'}d\bm{a}^{(2)}_{J_2'}d\bm{a}^{(3)}_{J_3'} f_{\bm{X}}(\bm{x})d\bm{x}  \\
         &\leq \int_{\mathcal{\bm{X}}} \int_{\mathbb{R}^{\bm{J}'}} C\left\Vert \sum_{k=1}^3\sum_{j=J_k + 1}^{J_k'} \theta^{(k)^{1/2}}_ja_j^{(k)}\psi_j^{(k)} \right\Vert_2 f_{\bm{A}^{(1)}_{J_1'}, \bm{A}^{(2)}_{J_2'}, \bm{A}^{(3)}_{J_3'}|\bm{X}}(\bm{a}^{(1)}_{J_1'}, \bm{a}^{(2)}_{J_2'}, \bm{a}^{(3)}_{J_3'}|\bm{x})d\bm{a}^{(1)}_{J_1'}d \bm{a}^{(2)}_{J_2'}d\bm{a}^{(3)}_{J_3'} f_{\bm{X}}(\bm{x})d\bm{x}  \\
         &= \int_{\mathcal{\bm{X}} \times \mathbb{R}^{\bm{J}'}}C\left\Vert \sum_{k=1}^3\sum_{j=J_k + 1}^{J_k'} \theta^{(k)^{1/2}}_ja_j^{(k)}\psi_j^{(k)} \right\Vert_2 f_{\bm{A}^{(1)}_{J_1'}, \bm{A}^{(2)}_{J_2'}, \bm{A}^{(3)}_{J_3'}, \bm{X}}(\bm{a}^{(1)}_{J_1'}, \bm{a}^{(2)}_{J_2'}, \bm{a}^{(3)}_{J_3'},\bm{x})d\bm{a}^{(1)}_{J_1'}d\bm{a}^{(2)}_{J_2'}d\bm{a}^{(3)}_{J_3'} (\bm{x})d\bm{x} \\
         &= CE\left[\left\Vert \sum_{k=1}^3\sum_{j=J_k + 1}^{J_k'} \theta^{(k)^{1/2}}_jA_j^{(k)}\psi_j^{(k)} \right\Vert_2 \right] \\
         &\leq \delta.
\end{align*}
Thus, $\{\mu_{\bm{J}}\}_{\bm{J} = 1}^\infty$ is a Cauchy sequence and $\lim_{\bm{J} \to \infty} \mu_{\bm{J}}$ exists as desired.

Next, we will prove that $E[Y] = \lim_{\bm{J} \to \infty} \mu_{\bm{J}}$ by showing that $\lim_{\bm{J} \to \infty}E[Y - \mu_{\bm{J}}] = 0$. For $\bm{J}' > \bm{J}$ let 
    \begin{align*}
        R_{\bm{J}', \bm{J}} &= E\left(E\left[m\left\{\bm{X}, A_{\bm{J}'}(\cdot)\right\} - m\left\{\bm{X}, A_{\bm{J}}(\cdot)\right\}\right.\right. + \epsilon \left.\left. | \bm{X}, \bm{A}_{J_1}^{(1)}, \bm{A}_{J_2}^{(2)}, \bm{A}_{J_3}^{(3)}\right]\right).
    \end{align*}
    Note that since $Y = m\{\bm{X}, A(\cdot)\} + \epsilon$,
    \begin{align*}
        \lim_{\bm{J}'\to \infty}R_{\bm{J}', \bm{J}} &= E\left(E\left[Y- m\left\{\bm{X}, A_{\bm{J}}(\cdot)\right\} | \bm{X}, \bm{A}_{J_1}^{(1)}, \bm{A}_{J_2}^{(2)}, \bm{A}_{J_3}^{(3)}\right]\right) \\
        &= E\left[Y- m\left\{\bm{X}, A_{\bm{J}}(\cdot)\right\}\right] \\
        &= E[Y - \mu_{\bm{J}}]
    \end{align*}
    Thus, it suffices to show that $\lim_{\bm{J}\to \infty}\lim_{\bm{J}'\to \infty}R_{\bm{J}', \bm{J}} = 0$. Next,
    \begin{align*}
       R_{\bm{J}', \bm{J}} &= E\left(E\left[m\left\{\bm{X}, A_{\bm{J}'}(\cdot)\right\} - m\left\{\bm{X}, A_{\bm{J}}(\cdot)\right\}\right.\right. \left.\left. | \bm{X}, \bm{A}_{J_1}^{(1)}, \bm{A}_{J_2}^{(2)}, \bm{A}_{J_3}^{(3)}\right]\right). \\
        &\leq E\left(E\left[\left\vert m\left\{\bm{X}, A_{\bm{J}'}(\cdot)\right\} - m\left\{\bm{X}, A_{\bm{J}}(\cdot)\right\}\right\vert\right.\right. \left.\left. | \bm{X}, \bm{A}_{J_1}^{(1)}, \bm{A}_{J_2}^{(2)}, \bm{A}_{J_3}^{(3)}\right]\right). \\
        &\leq E\left(E\left[\left\Vert \sum_{k=1}^3\sum_{j=J_k + 1}^{J_k'} \theta^{(k)^{1/2}}_jA_j^{(k)}\psi_j^{(k)} \right\Vert_2| \bm{X}, \bm{A}_{J_1}^{(1)}, \bm{A}_{J_2}^{(2)}, \bm{A}_{J_3}^{(3)}\right]\right) \\
        &= E\left(E\left[\sqrt{\sum_{k=1}^3\sum_{j=J_k + 1}^{J_k'} \theta^{(k)}_jA_j^{(k)^2}} | \bm{X}, \bm{A}_{J_1}^{(1)}, \bm{A}_{J_2}^{(2)}, \bm{A}_{J_3}^{(3)}\right]\right)  \\
        &= CE\left[\sqrt{\sum_{k=1}^3\sum_{j=J_k + 1}^{J_k'} \theta^{(k)}_jA_j^{(k)^2}}\right] \\
        &\leq C\sqrt{E\left[\sum_{k=1}^3\sum_{j=J_k + 1}^{J_k'} \theta^{(k)}_jA_j^{(k)^2}\right]} = C\sqrt{\sum_{k=1}^3\sum_{j=J_k + 1}^{J_k'} \theta^{(k)}_j}
    \end{align*}
    where the first equality follows from the fact that $E[\epsilon] = 0$ by definition, the second inequality follows from the fact $E[X] \leq E[|X|]$, the third inequality follows from condition \textbf{C2}, the fourth equality follows from earlier derivations in this proof, and the sixth inequality is due to Jensen's inequality. Then by condition \textbf{C1},
    \begin{align*}
        \lim_{\bm{J}\to \infty}\lim_{\bm{J}'\to \infty}C\sqrt{\sum_{k=1}^3\sum_{j=J_k + 1}^{J_k'} \theta^{(k)}_j} = 0
    \end{align*}
    and $\lim_{\bm{J} \to \infty}E[Y - \mu_{\bm{J}}] = 0$ as desired. 
\end{proof}

\subsection{Corollary \ref{thm:corollary_pop_avg}}
\begin{suppcoro}
    Given conditions \textbf{C1} and \textbf{C2},  $\lim_{\bm{J} \to \infty} \tilde{\mu}_{\bm{J}}$ exists and $E[Y] = \lim_{\bm{J} \to \infty} \tilde{\mu}_{\bm{J}}$.\label{thm:corollary_pop_avg}
\end{suppcoro}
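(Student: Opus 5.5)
The plan is to reduce Corollary \ref{thm:corollary_pop_avg} to the argument of Lemma \ref{thm:pop_avg_limit}. The only place the FPCA structure enters that proof is the tail bound
\[
E\left\Vert A_{\bm{J}'}(\cdot)-A_{\bm{J}}(\cdot)\right\Vert_2^2=\sum_{k}\sum_{j=J_k+1}^{J_k'}\theta_j^{(k)},
\]
and this tail vanishes under \textbf{C1}. Everything else is a Lipschitz step via \textbf{C2}, followed by Cauchy--Schwarz/Jensen and an iterated-expectation argument for $E[Y]$. None of that uses the specific basis. So I would write $\tilde A_{\bm{J}}(\cdot)=a_0(\cdot)+\sum_{k=1}^3\sum_{j=1}^{J_k}B_j^{(k)}\gamma_j^{(k)}(\cdot)$, with $\gamma^{(k)}_j$ extended by zero outside $T_k$. (I read the definition of $\tilde\mu_{\bm J}$ as using $\gamma_j^{(k)}$; on intervals $1$ and $3$ the basis is just the FPCA one, so only interval $2$ differs.) The goal is then to establish the analogous tail control for $\tilde A_{\bm{J}}$.

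First, the three pieces have disjoint supports, so their cross terms vanish exactly as in the lemma. Within each interval $\{\gamma^{(k)}_j\}$ is orthonormal, so
\[
\Vert\tilde A_{\bm{J}'}-\tilde A_{\bm{J}}\Vert_2^2=\sum_k\sum_{j=J_k+1}^{J_k'}(B_j^{(k)})^2.
\]
This holds even though the $B_j^{(k)}$ are correlated, since Parseval needs only orthonormality of the basis and not uncorrelated coefficients. Taking expectations gives $\sum_k\sum_{j=J_k+1}^{J_k'}E[(B^{(k)}_j)^2]$.

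Next I need $\sum_{j=1}^\infty E[(B_j^{(k)})^2]<\infty$. Parseval together with Fubini gives
\[
\sum_j E[(B_j^{(k)})^2]=E\Vert A^{(k)}-a_0^{(k)}\Vert_2^2=\int_{T_k}\Var\{A^{(k)}(t)\}dt=\sum_j\theta^{(k)}_j<\infty
\]
by \textbf{C1}. Hence the tails $\sum_{j>J_k}E[(B_j^{(k)})^2]\to0$. This is the key step; the main obstacle is justifying that $\{\gamma_j^{(2)}\}$ is a complete orthonormal basis of $L^2(T_2)$, so that Parseval holds with equality and $A^{(2)}=a_0+\sum_j B_j\gamma_j$ in $L^2$ almost surely. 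I would argue completeness from the Gram--Schmidt construction: the span of $\{\gamma_j\}$ contains every $\psi_j$, and the FPCA eigenfunctions span the support of the process. Alternatively, I would just invoke the stated expansion $A(t)=a_0(t)+\sum_j B_j\gamma_j(t)$ from Section \ref{sec:novel_basis}. Lemma \ref{lemma:novel_basis} gives a complementary sanity check, namely that the $\gamma$-truncation residual is dominated pathwise by the FPCA residual, which already forces $E\Vert A-\tilde A_J\Vert_2^2\le\sum_{h\ge J}\theta_h\to0$.

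With this tail bound in hand, I would repeat the proof of Lemma \ref{thm:pop_avg_limit} verbatim with $\theta^{(k)}_j$ replaced by $E[(B^{(k)}_j)^2]$. For $\delta>0$, choose $N$ so that each interval's tail is at most $\delta^2/(3C^2)$. Then \textbf{C2} and Cauchy--Schwarz give $|\tilde\mu_{\bm{J}'}-\tilde\mu_{\bm{J}}|\le C\,E\Vert\tilde A_{\bm{J}'}-\tilde A_{\bm{J}}\Vert_2\le\delta$, so the sequence is Cauchy and the limit exists. For the identification, the same remainder $R_{\bm J',\bm J}$ argument applies. Alternatively, and more directly, $|E[Y]-\tilde\mu_{\bm J}|=|E[m(\bm X,A)-m(\bm X,\tilde A_{\bm J})]|\le C\sqrt{E\Vert A-\tilde A_{\bm J}\Vert_2^2}\to0$, using $E[\epsilon]=0$. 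This shows $E[Y]=\lim_{\bm J\to\infty}\tilde\mu_{\bm J}$, and by Lemma \ref{thm:pop_avg_limit} it also equals $\lim_{\bm J\to\infty}\mu_{\bm J}$.
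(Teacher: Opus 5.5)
Your proof is correct, and its skeleton (rerun the Cauchy and identification steps of Lemma~\ref{thm:pop_avg_limit} once the tail of the basis expansion is controlled) matches the paper's. What differs is the key input for that tail.

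The paper's one-line proof takes it from Lemma~\ref{lemma:novel_basis}, the pathwise domination $\Vert A-\tilde A_J\Vert_2^2\le\Vert A-\bar A_{J-1}\Vert_2^2$, which gives $E\Vert A-\tilde A_J\Vert_2^2\le\sum_{h\ge J}\theta_h$. You instead use the identity $\sum_j E[(B_j^{(k)})^2]=\sum_j\theta_j^{(k)}$, obtained from Parseval, Tonelli and Fubini.

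Your route is more general. It uses nothing about $\{\gamma_j\}$ beyond orthonormality plus the fact that its closed span contains $A-a_0$. That closed-span fact is what your Gram--Schmidt argument actually shows, and it is all that is needed; completeness in all of $L^2(T_2)$ is not required. The existence of the limit needs even less: Bessel's inequality alone gives $\sum_j E[(B_j^{(k)})^2]\le\sum_j\theta_j^{(k)}<\infty$. So the closed-span property matters only for the identification $E[Y]=\lim_{\bm J\to\infty}\tilde\mu_{\bm J}$.

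The paper's route buys two things in exchange. First, Lemma~\ref{lemma:novel_basis} delivers $\tilde A_J\to A$ in mean square directly, which settles the step you flag as the main obstacle. Second, it is quantitative: it bounds the $\gamma$-truncation residual by the FPCA eigenvalue tail. That is exactly the bound behind $\Delta_{\bm J(n)}$ in \textbf{A7} and in the proof of Theorem~\ref{thm:estimator_asymp}. Your identity shows only that the $\gamma$-tails vanish, at an unspecified rate.

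Finally, your direct identification step, $|E[Y]-\tilde\mu_{\bm J}|\le C\{E\Vert A-\tilde A_{\bm J}\Vert_2^2\}^{1/2}$ via $E[Y]=E\{m(\bm X,A(\cdot))\}$, is a cleaner substitute for the paper's double-limit argument with $R_{\bm J',\bm J}$.
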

\begin{proof}
    This result follows trivially from Lemma \ref{lemma:novel_basis} and the same arguments as Lemma \ref{thm:pop_avg_limit}.
\end{proof}

\subsection{Theorem \ref{thm:outcome_reg}}
\begin{supptheorem}
    \textnormal{(Outcome model identification)} Given conditions \textbf{C1}, \textbf{C2}, \textbf{C$\tilde{\bm{3}}$} and assumption \textbf{A1}, assumption \textbf{A3} (for $\tilde{q} = q$),  or \textbf{A$\bm{3}^*$} (for $\tilde{q} = q^*$)
    \begin{align*}
        \mu^{\tilde{Q}} &=  \lim_{\bm{J} \to \infty} E[E_{\tilde{Q}}[m(\bm{X}, A_{\bm{J}}(\cdot))|\bm{X},  A_{\bm{J}}^{-}(\cdot)]], \:\:\: \text{for}\:\: \tilde{q} = q, q^*.
    \end{align*}
    \label{thm:outcome_reg}
\end{supptheorem}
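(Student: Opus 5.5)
The plan has two steps. First I identify $\mu^{\tilde{Q}}$ with the functional (full-treatment) outcome regression
\[
\mu^{\tilde{Q}}=E\bigl[E_{\tilde{Q}}\{m(\bm{X},A(\cdot))\mid \bm{X},A^{-}(\cdot)\}\bigr].
\]
Then I show that the $\bm{J}$-truncated version converges to it, reusing the Cauchy-type Lipschitz argument from Lemma \ref{thm:pop_avg_limit} and Corollary \ref{thm:corollary_pop_avg}.

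\textbf{Identification at the functional level.} Write $A^{\tilde{Q}}(\cdot)=A^{-}(\cdot)+\tilde{Q}_1^{(2)}\gamma_1^{(2)}(\cdot)$. By iterated expectations,
\[
\mu^{\tilde{Q}}=E\Bigl[\int E\bigl[Y\{A^{-}(\cdot)+b\gamma_1^{(2)}(\cdot)\}\mid \bm{X},A^{-}(\cdot),\tilde{Q}_1^{(2)}=b\bigr]\,d\tilde{Q}\bigl(b\mid \bm{X},A^{-\tilde{Q}}(\cdot)\bigr)\Bigr].
\]
Here $\tilde{Q}_1^{(2)}$ is drawn from $\tilde q$ independently of the potential outcomes given the conditioning set.

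For $\tilde q=q$, the argument is direct. Assumption \textbf{A3} lets me add the event $B_1^{(2)}=b$ to the conditioning, and \textbf{A1} then replaces the potential outcome by $Y$. This gives $E[Y\mid \bm{X},A^{-}(\cdot),B_1^{(2)}=b]=m(\bm{X},A^{-}(\cdot)+b\gamma_1^{(2)}(\cdot))$.

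For $\tilde q=q^*$, the conditioning set of $q^*$ is only $\bm{X},A^{-[0,t_2]}(\cdot)$. I use the fact noted in the main text that $A^{(3)}(\tilde{Q}_1^{(2)})=A^{(3)}$. By \textbf{A$\bm{3}^*$}, the pair $(Y\{a(\cdot)\},A^{(3)})$ is jointly independent of $B_1^{(2)}$ given $\bm{X},A^{-[0,t_2]}(\cdot)$. Conditioning additionally on $A^{(3)}$ therefore preserves $Y\{a(\cdot)\}\indep B_1^{(2)}\mid \bm{X},A^{-}(\cdot)$, and the same substitution goes through. In both cases the inner integral is exactly $E_{\tilde{Q}}\{m(\bm{X},A(\cdot))\mid \bm{X},A^{-}(\cdot)\}$. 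Positivity is not needed, since $\tilde q$ only integrates $m$, although \textbf{A$\tilde{\bm 2}$} guarantees $m$ is defined on the relevant support.

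\textbf{Passage to the limit.} Next I bound
\[
\Bigl|E_{\tilde Q}\{m(\bm{X},A(\cdot))\mid \bm{X},A^{-}(\cdot)\}-E_{\tilde Q}\{m(\bm{X},A_{\bm{J}}(\cdot))\mid \bm{X},A^{-}_{\bm{J}}(\cdot)\}\Bigr|.
\]
I split it via the triangle inequality into two pieces.

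The first piece changes the argument of $m$ from $A^{-}+b\gamma_1^{(2)}$ to $A^{-}_{\bm{J}}+b\gamma_1^{(2)}$ while keeping the $\tilde q$-measure fixed. Condition \textbf{C2} bounds it by $C\Vert A^{-}-A^{-}_{\bm{J}}\Vert_2$, uniformly in $b$.

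The second piece changes the conditioning set of $\tilde q$ from $A^{-\tilde Q}$ to $A^{-\tilde Q}_{\bm J}$ while keeping the function integrated fixed. Condition \textbf{C$\tilde{\bm 3}$} bounds it by $C_{\tilde Q}\Vert A^{-}-A^{-}_{\bm J}\Vert_2$.

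Taking expectations and applying Jensen's inequality gives a bound of
\[
(C+C_{\tilde Q})\bigl(E\Vert A^{-}-A^{-}_{\bm J}\Vert_2^2\bigr)^{1/2}.
\]
This is the main obstacle. Controlling $E\Vert A^{-}-A^{-}_{\bm J}\Vert_2^2$ requires the tail sums on $[0,t_1)$ and $(t_2,T]$, which vanish by \textbf{C1} exactly as in Lemma \ref{thm:pop_avg_limit}. On $[t_1,t_2]$ I must handle the novel-basis tail $\sum_{j>J_2}B^{(2)}_j\gamma^{(2)}_j$. Here I invoke Lemma \ref{lemma:novel_basis}, which dominates its squared norm pathwise by the FPCA tail $\Vert\sum_{h\ge J_2}\theta_h^{(2)1/2}A^{(2)}_h\psi^{(2)}_h\Vert_2^2$. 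That tail has expectation $\sum_{h\ge J_2}\theta^{(2)}_h\to0$ by \textbf{C1}.

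A secondary subtlety is that \textbf{C$\tilde{\bm 3}$} is stated for full functions while $\tilde q$ conditions on $A^{-\tilde Q}$. For $q^*$ I will apply it with $a_1,a_2$ being $A^{-[0,t_2]}$ and its truncation, whose difference has norm at most that of $A^{-}-A^{-}_{\bm J}$. Combining the two steps, the limit exists and equals $\mu^{\tilde Q}$.
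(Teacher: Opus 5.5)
Your argument is correct. Its core coincides with the paper's: the same triangle-inequality split of the $\tilde q$-integral into a \textbf{C2} piece (moving the argument of $m$) and a \textbf{C$\tilde{\bm{3}}$} piece (moving the conditioning argument of $\tilde q$), giving the constant $C+C_{\tilde Q}$, followed by a \textbf{C1} tail bound.

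The routes differ in organization. The paper defines $m^{\tilde Q}$ as the conditional mean of the potential outcome, identifies it, and verifies that it satisfies \textbf{C2}. It then reruns the Cauchy argument of Lemma~\ref{thm:pop_avg_limit} with $Y$ replaced by $Y\{A^{\tilde Q}(\cdot)\}$. You instead identify the untruncated quantity $E[E_{\tilde Q}\{m(\bm{X},A(\cdot))\mid \bm{X},A^{-}(\cdot)\}]$ first and bound the truncation error directly. This gives existence and value of the limit at once. It also makes explicit, via Lemma~\ref{lemma:novel_basis}, how the novel-basis tail on $[t_1,t_2]$ is controlled, which the paper covers only through Corollary~\ref{thm:corollary_pop_avg}.

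The more substantive difference is the $q^*$ case. The paper conditions only on $A^{-[0,t_2]}(\cdot)$ and integrates against the conditional law of $A^{(3)}(\cdot)$. Because $A^{(3)}(\cdot)$ has no density, it must define that integral as a $J_3$-limit and use an $L^1$-Cauchy argument to swap limit and expectation. Your weak-union step avoids this detour entirely: \textbf{A$\bm{3}^*$} implies $Y\{a(\cdot)\}\indep B_1^{(2)}\mid \bm{X},A^{-}(\cdot)$. Combined with the policy draw being independent of $(Y\{\cdot\},A^{(3)}(\cdot))$ given $\bm{X},A^{-[0,t_2]}(\cdot)$, this reduces $q^*$ to the $q$ computation with measure $dQ^*(b\mid \bm{X},A^{-[0,t_2]}(\cdot))$.

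One correction: your remark that positivity is not needed is inaccurate. Replacing the event $\tilde Q_1^{(2)}=b$ by $B_1^{(2)}=b$ and then invoking \textbf{A1} requires $b$ to lie in the conditional support of $B_1^{(2)}$ for $\tilde q$-almost every $b$ (i.e.\ $\tilde q\ll f$). That is what \textbf{A$\tilde{\bm{2}}$} supplies, and it is automatic for exponential tilts. The paper's proof also uses this silently, even though the statement does not list \textbf{A$\tilde{\bm{2}}$}.
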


\begin{proof}
The proofs for $\tilde{Q} = Q, Q^*$ differ, so we will proceed by defining terms specific to each of these distributions. We begin by defining the the following conditional expectations of our potential outcomes of interest,
\begin{align*}
    m^{Q}(\bm{x}, a(\cdot)) &= E[Y(A^{Q})|\bm{X} = \bm{x}, A^{-}(\cdot) = a(\cdot)], \\
    m^{Q^*}(\bm{x}, a(\cdot)) &= E[Y(A^{Q^*})|\bm{X} = \bm{x}, A^{-[0, t_2]}(\cdot) = a(\cdot)].
\end{align*}
Next, we define the expectations of $m^{Q}(\cdot)$ and $m^{Q^*}(\cdot)$ given the truncated basis approximation of $A^{-}(\cdot)$ as $\mu^{Q}_{\bm{J}} = E\left[m^{Q}(\bm{X}, A^{-}_{\bm{J}}(\cdot))\right]$ and $\mu^{Q^*}_{\bm{J}} = E\left[m^{Q^*}(\bm{X}, A^{-[0, t_2]}_{J_1, J_2}(\cdot))\right]$. Given that conditions \textbf{C1} and \textbf{C2} hold for $m^{Q}$ and $ m^{Q^*}$, by replacing $Y$ with $Y\{A^{Q}(\cdot)\}$ or $Y\{A^{Q^*}(\cdot)\}$ in Lemma \ref{thm:pop_avg_limit} and Corollary \ref{thm:corollary_pop_avg} we have that
\begin{align*}
    \lim_{\bm{J} \to \infty}\mu^{Q}_{\bm{J}} &= E[m^{Q}(\bm{x}, a(\cdot))] = E[Y\{A^{Q}(\cdot)\}], \\
    \lim_{J_1, J_2 \to \infty}\mu^{Q^*}_{\bm{J}} &= E[m^{Q^*}(\bm{x}, a(\cdot))] = E[Y\{A^{Q^*}(\cdot)\}]. 
\end{align*}
    Thus, it remains to identify $\mu^{Q}_{\bm{J}}$ and $\mu^{Q^*}_{\bm{J}}$ with the observed data and show their limits exist.

    First, we will identify $\mu^{Q}_{\bm{J}}$ with the observed data. To simplify notation, we write the $Y\{A^{Q}(\cdot)\}$ as the following:
    \begin{align*}
        Y\{A^{Q}(\cdot)\} &= Y(Q^{(2)}_1\gamma_1^{(2)}+ A^{-}(\cdot)) \\
        &= Y(Q^{(2)}_1, A^{-}(\cdot))  \\
        Y(Q^{(2)}_1, a^{-}(\cdot)) &= Y(Q^{(2)}_1)
    \end{align*}
    Next, 
    \begin{align*}
        m^Q(\bm{x}, a(\cdot)) 
        &= \int_{\mathcal{B}^{(2)}_1} E[Y(q_1^{(2)})|Q_1^{(2)} = q_1^{(2)}, \bm{X} = \bm{x}, A^{-}(\cdot) = a(\cdot)]dQ(q_1^{(2)}|\bm{x}, a(\cdot)) \\
        &= \int_{\mathcal{B}^{(2)}_1} E[Y(q_1^{(2)})|B_1^{(2)} = q_1^{(2)}, \bm{X} = \bm{x}, A^{-}(\cdot) = a(\cdot)]dQ(q_1^{(2)}|\bm{x}, a(\cdot)) \\
        &= \int_{\mathcal{B}^{(2)}_1} E[Y(b_1^{(2)})|B_1^{(2)} = b_1^{(2)}, \bm{X} = \bm{x}, A^{-}(\cdot) = a(\cdot)]dQ(b_1^{(2)}|\bm{x}, a(\cdot))  \\
        &= \int_{\mathcal{B}^{(2)}_1} E[Y|B_1^{(2)} = b_1^{(2)}, \bm{X} = \bm{x}, A^{-}(\cdot) = a(\cdot)]dQ(b_1^{(2)}|\bm{x}, a(\cdot))  \\
        &= \int_{\mathcal{B}^{(2)}_1}m\left(\bm{x}, a(\cdot) + b_1^{(2)}\gamma_1^{(2)}(\cdot) \right)dQ(b_1^{(2)}|\bm{x}, a(\cdot))
    \end{align*}
    where the first equality follows from iterated expectations, the second equality follows from the fact that the fact that $Q_1^{(2)} \indep Y\{q^{(2)}_1)\}$ by definition and assumption \textbf{A3}, the third equality follows from replacing $q_1^{(2)}$ with $b_1^{(2)}$, the fourth equality follows from assumption \textbf{A1}, and the fifth equality from the definition of $m(\cdot)$.
    
    Then we have that, 
    \begin{align*}
        &|m^Q(\bm{x}, a_1(\cdot)) - m^Q(\bm{x}, a_2(\cdot))| \\
        &= \left\vert\int_{\mathcal{B}^{(2)}_1}m\left(\bm{x}, a_1(\cdot) + b_1^{(2)}\gamma_1^{(2)}(\cdot) \right)dQ(b_1^{(2)}|\bm{x}, a_1(\cdot))\right. - \left.\int_{\mathcal{B}^{(2)}_1}m\left(\bm{x}, a_2(\cdot) + b_1^{(2)}\gamma_1^{(2)}(\cdot) \right)dQ(b_1^{(2)}|\bm{x}, a_2(\cdot))\right\vert \\
        &= \left\vert\int_{\mathcal{B}^{(2)}_1}m\left(\bm{x}, a_1(\cdot) + b_1^{(2)}\gamma_1^{(2)}(\cdot) \right)\{dQ(b_1^{(2)}|\bm{x}, a_1(\cdot)) - dQ(b_1^{(2)}|\bm{x}, a_2(\cdot))\} \right. \\
        &\quad - \left.\int_{\mathcal{B}^{(2)}_1}\{m\left(\bm{x}, a_2(\cdot) + b_1^{(2)}\gamma_1^{(2)}(\cdot) \right) - m\left(\bm{x}, a_1(\cdot) + b_1^{(2)}\gamma_1^{(2)}(\cdot) \right)\}dQ(b_1^{(2)}|\bm{x}, a_2(\cdot))\right\vert \\
        &\leq \underbrace{\left\vert\int_{\mathcal{B}^{(2)}_1}m\left(\bm{x}, a_1(\cdot) + b_1^{(2)}\gamma_1^{(2)}(\cdot) \right)\{dQ(b_1^{(2)}|\bm{x}, a_1(\cdot)) - dQ(b_1^{(2)}|\bm{x}, a_2(\cdot))\} \right\vert}_{\Circled{1}} \\
        &\quad+ \underbrace{\left\vert\int_{\mathcal{B}^{(2)}_1}\{m\left(\bm{x}, a_2(\cdot) + b_1^{(2)}\gamma_1^{(2)}(\cdot) \right) - m\left(\bm{x}, a_1(\cdot) + b_1^{(2)}\gamma_1^{(2)}(\cdot) \right)\}dQ(b_1^{(2)}|\bm{x}, a_2(\cdot))\right\vert}_{\Circled{2}}
    \end{align*}
    Next by conditions \textbf{C2} and \textbf{C3} we have that,
    \begin{align*}
        \Circled{1} &= \left\vert E_Q\left[m\left(\bm{x}, a_1(\cdot) + B_1^{(2)}\gamma_1^{(2)}(\cdot) \right)|\bm{x}, a_1(\cdot)\right] - E_Q\left[m\left(\bm{x}, a_1(\cdot) + B_1^{(2)}\gamma_1^{(2)}(\cdot) \right)|\bm{x}, a_2(\cdot)\right] \right\vert \\
        &\leq C_Q||a_1(\cdot) - a_2(\cdot)||_2 \\
        \Circled{2} &\leq \int_{\mathcal{B}^{(2)}_1}\left\vert m\left(\bm{x}, a_2(\cdot) + b_1^{(2)}\gamma_1^{(2)}(\cdot) \right) - m\left(\bm{x}, a_1(\cdot) + b_1^{(2)}\gamma_1^{(2)}(\cdot) \right)\right\vert dQ(b_1^{(2)}|\bm{x}, a_2(\cdot)) \\
        &\leq \int_{\mathcal{B}^{(2)}_1} C||a_1(\cdot) - a_2(\cdot)||_2dQ(b_1^{(2)}|\bm{x}, a_2(\cdot)) \\
        &= C||a_1(\cdot) - a_2(\cdot)||_2,
    \end{align*}
    such that,
    \begin{align*}
        |m^Q(\bm{x}, a_1(\cdot)) - m^Q(\bm{x}, a_2(\cdot))| \leq (C + C_Q)||a_1(\cdot) - a_2(\cdot)||_2.
    \end{align*}
    Thus, $m^Q$ satisfies condition \textbf{C2} and $\lim_{\bm{J} \to \infty} \mu^{Q}_{\bm{J}}$ exists through a similar argument to the proof of Lemma 2. 
    Therefore we have that
    \begin{align*}
        E[Y\{A^Q(\cdot)\}] &= \lim_{\bm{J} \to \infty} \mu^{Q}_{\bm{J}} \\
        &= \lim_{\bm{J} \to \infty} E\left[m^{Q}\left\{\bm{X}, A^{-}_{\bm{J}}(\cdot)\right\}\right] \\
        &= \lim_{\bm{J} \to \infty} E\left[ \int_{\mathcal{B}_1^{(2)}} m\left(\bm{X},  b^{(2)}_1\gamma_j^{(2)} + A_{\bm{J}}^{-}(\cdot)\right)dQ(b^{(2)}_1|\bm{X}, A^{-}_{\bm{J}}(\cdot)) \right] 
    \end{align*}
    as desired. \\ 
    \vspace{0.5in} \\
    Now, we will identify $\mu^{Q^*}_{\bm{J}}$ with the observed data. To simplify notation, we write the $Y\{A^{Q^*}(\cdot)\}$ as the following:
    \begin{align*}
        Y\{A^{Q^*}(\cdot)\} &= Y\{A^{-[0, t_2]}(\cdot) + Q^{(2)*}_1\gamma_1^{(2)}+ A^{(3)}(\cdot)\} \\
        &= Y\{A^{-[0, t_2]}(\cdot), Q^{(2)*}_1, A^{(3)}\} \\
        Y\{a^{-[0, t_2]}(\cdot), Q^{(2)*}_1, A^{(3)}(\cdot)\} &= Y\{Q^{(2)*}_1,A^{(3)}(\cdot)\}
    \end{align*}
    Next, 
    \begin{align*}
        m^{Q^*}(\bm{x}, a(\cdot)) 
        &= \int_{\mathcal{B}^{(2)}_1} E[Y\{q^{(2)}_1, A^{(3)}(\cdot)\}|Q_1^{(2)*} = q_1^{(2)}, \bm{X} = \bm{x}, A^{-[0, t_2]}(\cdot) = a(\cdot)]dQ^*(q_1^{(2)}|\bm{x}, a(\cdot)) \\
        &= \int_{\mathcal{B}^{(2)}_1} E[Y\{b^{(2)}_1, A^{(3)}(\cdot)\}|B_1^{(2)} = b_1^{(2)}, \bm{X} = \bm{x}, A^{-[0, t_2]}(\cdot) = a(\cdot)]dQ^*(b_1^{(2)}|\bm{x}, a(\cdot)) \\
        &= \int_{\mathcal{A}^{(3)}(\cdot)}\int_{\mathcal{B}^{(2)}_1} E[Y\{b^{(2)}_1, a^{(3)}(\cdot)\}|A^{(3)}(\cdot) = a^{(3)}(\cdot), B_1^{(2)} = b_1^{(2)}, \bm{X} = \bm{x}, A^{-[0, t_2]}(\cdot)= a(\cdot)]\\
        &\quad\times dF(a^{(3)}(\cdot)| b_1^{(2)},\bm{x}, a(\cdot)))dQ^*(b_1^{(2)}|\bm{x}, a(\cdot)) \\
        &= \int_{\mathcal{A}^{(3)}(\cdot)}\int_{\mathcal{B}^{(2)}_1} E[Y|A^{(3)}(\cdot) = a^{(3)}(\cdot), B_1^{(2)} = b_1^{(2)}, \bm{X} = \bm{x}, A^{-[0, t_2]}(\cdot) = a(\cdot)]\\
        &\quad\times dF(a^{(3)}(\cdot)| b_1^{(2)},\bm{x}, a(\cdot)))dQ^*(b_1^{(2)}|\bm{x}, a(\cdot)) \\
         &:= \lim_{J_3 \to \infty} \int_{\mathcal{A}_{J_3}^{(3)}(\cdot)}\int_{\mathcal{B}^{(2)}_1} E[Y|A_{J_3}^{(3)}(\cdot) = a_{J_3}^{(3)}(\cdot), B_1^{(2)} = b_1^{(2)}, \bm{X} = \bm{x}, A^{-[0, t_2]}(\cdot) = a(\cdot)]\\
        &\quad\times dF(a_{J_3}^{(3)}(\cdot)| b_1^{(2)},\bm{x}, a(\cdot)))dQ^*(b_1^{(2)}|\bm{x}, a(\cdot)) \\
        &=\lim_{J_3 \to \infty}\int_{\mathcal{A}_{J_3}^{(3)}(\cdot)}\int_{\mathcal{B}^{(2)}_1} m(\bm{x}, b_1^{(2)}\gamma_1^{(2)} + a(\cdot) + a_{J_3}^{(3)}(\cdot)) dF(a_{J_3}^{(3)}(\cdot)| \bm{x}, a(\cdot)))dQ^*(b_1^{(2)}|\bm{x}, a(\cdot)) \\
        &:= \lim_{J_3 \to \infty} \tilde{m}_{J_3}(\bm{x}, a(\cdot)) 
    \end{align*}
    where the first equality follows from iterated expectations, the second equality follows from the fact that $Q_1^{(2)} \indep Y\{q^{(2)*}_1, A_{J_3}^{(3)}(\cdot)\}$ by definition, assumption \textbf{A$\bm{3}^*$}, and by replacing $q_1^{(2)}$ with $b_1^{(2)}$, the third equality follows from iterated expectations, the fourth equality follows from assumption \textbf{A1}, 
    and the sixth equality from the definition of $m(\cdot)$ and the fact that $A_{J_3}^{(3)}(\cdot) \indep B_1^{(2)} | \bm{X}, A^{-[0, t_2]}(\cdot)$ from assumption \textbf{A$\bm{3}^*$}. Note that in the fifth equality we define this quantity with the limit as $J_3 \to \infty$ because $A^{(3)}(\cdot)$ does not have a probability density.

     We need to show that $\lim_{J_1, J_2 \to \infty} \mu_{J_1, J_2}^{Q^*}$ exists. To do this, we will first show that $\{\tilde{m}_{J_3}\}_{J_3 = 1}^\infty$ is Cauchy in L1 such that $E[\lim_{J_3 \to \infty} \tilde{m}_{J_3}] = \lim_{J_3 \to \infty}E[ \tilde{m}_{J_3}]$. We must show that for any $\delta > 0$ there exists an integer $N$ such that for any $J' > J > N$, $E[|\tilde{m}_{J'} - \tilde{m}_{J}|] \leq \delta$. 
    By similar arguments to the proof of Lemma 2, 
    \begin{align*}
       E[|\tilde{m}_{J'}(\bm{X}, A(\cdot))  &- \tilde{m}_{J}(\bm{X}, A(\cdot)) |] \\
        &= E\left[\left\vert\int_{\mathcal{B}^{(2)}_1}\int_{\mathcal{A}_{J'}^{(3)}(\cdot)}m\left(\bm{X}, b_1^{(2)}\gamma_1^{(2)} + A(\cdot)+ A_{J'}^{(3)}(\cdot)\right) \right. \right.\\
        &\quad -  \left.\left. m\left(\bm{X}, b_1^{(2)}\gamma_1^{(2)} + A(\cdot)+ A_{J}^{(3)}(\cdot)\right) dF(A_{J'}^{(3)}(\cdot)|\bm{X}, A(\cdot)))dQ^*(b_1^{(2)}|\bm{X}, A(\cdot)) \right\vert \right]\\
        &\leq E\left[\int_{\mathcal{B}^{(2)}_1}\int_{\mathcal{A}_{J'}^{(3)}(\cdot)} C\left\Vert\sum_{j=J+1}^{J'} \theta_j^{(3)^{1/2}}A^{(3)}_j\psi_j^{(3)}(\cdot)\right\Vert_2dF(A_{J'}^{(3)}(\cdot)|\bm{X}, A(\cdot)))dQ^*(b_1^{(2)}|\bm{X}, A(\cdot)) \right]\\
        &= CE\left[E\left[ \left\Vert\sum_{j=J+1}^{J'} \theta_j^{(3)^{1/2}}A^{(3)}_j\psi_j^{(3)}(\cdot)\right\Vert_2 \middle\vert\bm{X}, A(\cdot)\right]\right] \\
        &= CE\left[ \left\Vert\sum_{j=J+1}^{J'} \theta_j^{(3)^{1/2}}A^{(3)}_j\psi_j^{(3)}(\cdot)\right\Vert_2\right] \\
        &\leq \delta/\sqrt{3} < \delta
    \end{align*}
    where the last inequality follows from the proof of Lemma 2. Then, $\{\tilde{m}_{J_3}\}_{J_3 = 1}^\infty$ is Cauchy in L1 and we have that $E[\lim_{J_3 \to \infty} \tilde{m}_{J_3}] = \lim_{J_3 \to \infty}E[ \tilde{m}_{J_3}]$. Next, we return to showing that $\lim_{J_1, J_2 \to \infty} \mu_{J_1, J_2}^{Q^*}$ exists. Given Lemma \ref{thm:pop_avg_limit} and Corollary \ref{thm:corollary_pop_avg}, it suffices show that $m^{Q^*}$ satisfies condition \textbf{C2}. However, note that,
   \begin{align*}
       \lim_{J_1, J_2 \to \infty} \mu_{J_1, J_2}^{Q^*} &= E[m^{Q^*}(\bm{X}, A(\cdot))] \\
       &= E[\lim_{J_3 \to \infty} \tilde{m}_{J_3}(\bm{X}, A(\cdot))] \\
       &= \lim_{J_3 \to \infty} E[\tilde{m}_{J_3}(\bm{X}, A(\cdot))] \\
       &= \lim_{J_3 \to \infty} E\left[\int_{\mathcal{A}_{J_3}^{(3)}(\cdot)}\int_{\mathcal{B}^{(2)}_1} m(\bm{X}, b_1^{(2)}\gamma_1^{(2)} + A(\cdot) + A_{J_3}^{(3)}(\cdot)) dF(A_{J_3}^{(3)}(\cdot)| \bm{X}, A(\cdot)))\right.\\
       &\quad \times\left. dQ^*(b_1^{(2)}|\bm{X}, A(\cdot)) \right] \\
       &= \lim_{J_3 \to \infty} E\left[\int_{\mathcal{B}^{(2)}_1} m(\bm{X}, b_1^{(2)}\gamma_1^{(2)} + A(\cdot) + A_{J_3}^{(3)}(\cdot))dQ^*(b_1^{(2)}|\bm{X}, A(\cdot)) \right] \\
       &:= E\left[\int_{\mathcal{B}^{(2)}_1} m(\bm{X}, b_1^{(2)}\gamma_1^{(2)} + A(\cdot) + A^{(3)}(\cdot))dQ^*(b_1^{(2)}|\bm{X}, A(\cdot)) \right] \\
       &:= E\left[\int_{\mathcal{B}^{(2)}_1} m(\bm{x}, b_1^{(2)}\gamma_1^{(2)} + A(\cdot) + A^{(3)}(\cdot))dQ^*(b_1^{(2)*}|\bm{X}, A(\cdot), A^{(3)}(\cdot)) \right]  \\
       &= E[\tilde{m}^{Q^*}(\bm{X}, A^{-}(\cdot))]
   \end{align*}
   where note that $\tilde{m}^{Q^*}(\bm{x}, a^{-}(\cdot)) = E_{Q^*}[Y|\bm{X} = \bm{x}, A^{-}(\cdot) =  a^{-}(\cdot)]$. The third equality follows because $\{\tilde{m}_{J_3}\}_{J_3 = 1}^\infty$ is Cauchy in L1 and the sixth equality follows from the fact that $q^*(b_1^{(2)}|\bm{X}, A(\cdot)) = q^*(b_1^{(2)}|\bm{X}, A(\cdot), A^{(3)}(\cdot))$ by the definition of $q^*$. Since we have established that $\{\tilde{m}_{J_3}\}_{J_3 = 1}^\infty$ is Cauchy in L1, the limit in fourth and fifth equality exists.
   Thus, it remains to show that $\tilde{m}^{Q^*}$ satisfies condition \textbf{C2}.
     Let $a_1(\cdot)$ and $a_2(\cdot)$ be square integral functions over $[0, t_2]$. Then we have that,
    \begin{align*}
        &|\tilde{m}^{Q^*}(\bm{x}, a_1(\cdot)) - \tilde{m}^{Q^*}(\bm{x}, a_2(\cdot))| \\
        &= \left\vert\int_{\mathcal{B}^{(2)}_1}m\left(\bm{x}, a_1(\cdot) + b_1^{(2)}\gamma_1^{(2)}(\cdot) \right)dQ^*(b_1^{(2)}|\bm{x}, a_1(\cdot))\right. - \left.\int_{\mathcal{B}^{(2)}_1}m\left(\bm{x}, a_2(\cdot) + b_1^{(2)}\gamma_1^{(2)}(\cdot) \right)dQ^*(b_1^{(2)}|\bm{x}, a_2(\cdot))\right\vert \\
        &= \left\vert\int_{\mathcal{B}^{(2)}_1}m\left(\bm{x}, a_1(\cdot) + b_1^{(2)}\gamma_1^{(2)}(\cdot) \right)\{dQ^*(b_1^{(2)}|\bm{x}, a_1(\cdot)) - dQ^*(b_1^{(2)}|\bm{x}, a_2(\cdot))\} \right. \\
        &\quad - \left.\int_{\mathcal{B}^{(2)}_1}\{m\left(\bm{x}, a_2(\cdot) + b_1^{(2)}\gamma_1^{(2)}(\cdot) \right) - m\left(\bm{x}, a_1(\cdot) + b_1^{(2)}\gamma_1^{(2)}(\cdot) \right)\}dQ^*(b_1^{(2)}|\bm{x}, a_2(\cdot))\right\vert \\
        &\leq \underbrace{\left\vert\int_{\mathcal{B}^{(2)}_1}m\left(\bm{x}, a_1(\cdot) + b_1^{(2)}\gamma_1^{(2)}(\cdot) \right)\{dQ^*(b_1^{(2)}|\bm{x}, a_1(\cdot)) - dQ^*(b_1^{(2)}|\bm{x}, a_2(\cdot))\} \right\vert}_{\Circled{1}} \\
        &\quad+ \underbrace{\left\vert\int_{\mathcal{B}^{(2)}_1}\{m\left(\bm{x}, a_2(\cdot) + b_1^{(2)}\gamma_1^{(2)}(\cdot) \right) - m\left(\bm{x}, a_1(\cdot) + b_1^{(2)}\gamma_1^{(2)}(\cdot) \right)\}dQ^*(b_1^{(2)}|\bm{x}, a_2(\cdot))\right\vert}_{\Circled{2}}
    \end{align*}
    Next by conditions \textbf{C2} and \textbf{C4} we have that,
    \begin{align*}
        \Circled{1} &= \left\vert E_{Q^*}\left[m\left(\bm{x}, a_1(\cdot) + B_1^{(2)}\gamma_1^{(2)}(\cdot) \right)|\bm{x}, a_1(\cdot)\right] - E_{Q^*}\left[m\left(\bm{x}, a_1(\cdot) + B_1^{(2)}\gamma_1^{(2)}(\cdot) \right)|\bm{x}, a_2(\cdot)\right] \right\vert \\
        &\leq C_{Q^*}||a_1(\cdot) - a_2(\cdot)||_2 \\
        \Circled{2} &\leq \int_{\mathcal{B}^{(2)}_1}\left\vert m\left(\bm{x}, a_2(\cdot) + b_1^{(2)}\gamma_1^{(2)}(\cdot) \right) - m\left(\bm{x}, a_1(\cdot) + b_1^{(2)}\gamma_1^{(2)}(\cdot) \right)\right\vert dQ^*(b_1^{(2)}|\bm{x}, a_2(\cdot)) \\
        &\leq \int_{\mathcal{B}^{(2)}_1} C||a_1(\cdot) - a_2(\cdot)||_2dQ^*(b_1^{(2)}|\bm{x}, a_2(\cdot)) \\
        &= C||a_1(\cdot) - a_2(\cdot)||_2,
    \end{align*}
    
    Thus,
    \begin{align*}
        |&\tilde{m}^{Q^*}(\bm{x}, a_1(\cdot)) - \tilde{m}^{Q^*}(\bm{x}, a_2(\cdot))| \leq (C + C_{Q^*})\left\Vert a_1(\cdot) - a_2(\cdot)\right\Vert_2
    \end{align*}
    Therefore, $\tilde{m}^{Q^*}$ satisfies condition \textbf{C2} and $\lim_{\bm{J_1, J_2} \to \infty} \mu^{Q^*}_{J_1, J_2}$ exists through a similar argument to the proof of Lemma 2. 
    Thus, we have that
    \begin{align*}
        E[Y\{A^{Q^*}(\cdot)\}] &= \lim_{J_1, J_2 \to \infty} \mu^{Q^*}_{J_1, J_2} \\
        &= \lim_{\bm{J} \to \infty} E[\tilde{m}^{Q^*}(\bm{X}, A_{\bm{J}}^{-}(\cdot))]\\
        &= \lim_{\bm{J} \to \infty} E\left[\int_{\mathcal{B}_1^{(2)}} m\left(\bm{X},  b^{(2)}_1\gamma_j^{(2)} + A_{\bm{J}}^{-}(\cdot)\right)dQ^*(b^{(2)}_1|\bm{X}, A_{\bm{J}}^{-[0, t_2]}(\cdot)) \right] 
    \end{align*}
    as desired. \\ 
\end{proof}

\subsection{Theorem \ref{thm:weighting}}

\begin{supptheorem}
    \textnormal{(Weighting identification)} Given the conditions and assumptions of Theorem \ref{thm:outcome_reg}, condition \textbf{C$\tilde{\bm{4}}$}, and assumption \textbf{A$\tilde{\bm{2}}$},
    \begin{align*}
        \mu^{\tilde{Q}} &= \lim_{\bm{J} \to \infty} E\left[ \frac{\tilde{q}(B_1^{(2)}| \bm{X}, A_{\bm{J}}^{-\tilde{Q}}(\cdot))}{f(B_1^{(2)}| \bm{X}, A_{\bm{J}}^{-\tilde{Q}}(\cdot))}Y\right],\:\:\: \text{for}\:\: \tilde{q} = q, q^*.
    \end{align*}
    \label{thm:weighting}
\end{supptheorem}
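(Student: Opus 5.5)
Since Theorem \ref{thm:outcome_reg} already gives $\mu^{\tilde{Q}} = \lim_{\bm{J}\to\infty} E[E_{\tilde{Q}}\{m(\bm{X}, A_{\bm{J}}(\cdot))|\bm{X}, A_{\bm{J}}^{-}(\cdot)\}]$, I would show that for each fixed $\bm{J}$ the weighted outcome has the same expectation as the outcome-model quantity. Then I would pass to the limit. Fix $\bm{J}$ and write $w_{\bm{J}} = \tilde{q}(B_1^{(2)}|\bm{X}, A_{\bm{J}}^{-\tilde{Q}}(\cdot))/f(B_1^{(2)}|\bm{X}, A_{\bm{J}}^{-\tilde{Q}}(\cdot))$.

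The first step is to condition on $(\bm{X}, A_{\bm{J}}(\cdot))$, i.e.\ on $\bm{X}$, $A_{\bm{J}}^{-}(\cdot)$ and $B_1^{(2)}$. Since $w_{\bm{J}}$ is measurable with respect to this sigma-field, $E[w_{\bm{J}} Y] = E[w_{\bm{J}}\, E(Y|\bm{X}, A_{\bm{J}}(\cdot))]$. I would then replace $E(Y|\bm{X}, A_{\bm{J}}(\cdot))$ by $m(\bm{X}, B_1^{(2)}\gamma_1^{(2)}(\cdot) + A^{-}_{\bm{J}}(\cdot))$. Here I read $m$ evaluated at a truncated function as the regression on the truncated coefficients, as in the definition of $\mu_{\bm{J}}$.

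The second step is to condition further on $(\bm{X}, A_{\bm{J}}^{-\tilde{Q}}(\cdot))$ and integrate over $B_1^{(2)}$ against $f(\cdot|\bm{X}, A_{\bm{J}}^{-\tilde{Q}}(\cdot))$. The density $f$ cancels with the denominator of $w_{\bm{J}}$. Assumption \textbf{A$\tilde{\bm{2}}$} guarantees that the cancellation is legitimate, since $\tilde{q}$ puts no mass where $f=0$. This turns the integral into an integral against $d\tilde{Q}(b|\bm{X}, A_{\bm{J}}^{-\tilde{Q}}(\cdot))$.

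For $\tilde{q}=q$ the conditioning set is $A_{\bm{J}}^{-}$ itself, so the result is exactly $E[E_{Q}\{m(\bm{X}, A_{\bm{J}}(\cdot))|\bm{X}, A_{\bm{J}}^{-}(\cdot)\}]$. For $\tilde{q}=q^*$ the conditioning set $A^{-[0,t_2]}_{\bm{J}}$ omits $A^{(3)}$. There I would use the part of \textbf{A$\bm{3}^*$} giving $A^{(3)}\indep B_1^{(2)}|\bm{X}, A^{-[0,t_2]}$. With it I would factor the joint law of $(B_1^{(2)}, A^{(3)}_{J_3})$ given $(\bm{X}, A^{-[0,t_2]}_{\bm{J}})$, so that the $f$-cancellation happens with $A^{(3)}_{J_3}$ still integrated out against its own conditional law. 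This reproduces the $q^*$ representation obtained in the proof of Theorem \ref{thm:outcome_reg}.

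The remaining steps are integrability and the limit. Condition \textbf{C$\tilde{\bm{4}}$} bounds $E[w_{\bm{J}}^2]$ uniformly in $\bm{J}$. Together with bounded outcomes, or $E[Y^2]<\infty$, Cauchy--Schwarz gives $E|w_{\bm{J}}Y|<\infty$ uniformly, which justifies the iterated expectations and Fubini. Having equality for every $\bm{J}$, the limit of the weighting expressions exists and equals $\mu^{\tilde{Q}}$ by Theorem \ref{thm:outcome_reg}.

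I expect the main obstacle to be the first step: replacing $E(Y|\bm{X}, A_{\bm{J}}(\cdot))$ by the truncated $m$. This is subtle because $Y$ depends on the full $A(\cdot)$, while the weights depend only on truncated coefficients. If $m$ is instead read as the full-function regression, I would bound the discrepancy. The bound would use \textbf{C2}, Cauchy--Schwarz with the uniform second-moment bound from \textbf{C$\tilde{\bm{4}}$}, and the residual variance $\sum_k\sum_{j>J_k}\theta_j^{(k)}\to 0$ from \textbf{C1}, as in Lemma \ref{thm:pop_avg_limit}. That makes the discrepancy vanish in the limit rather than requiring exact equality at each $\bm{J}$.
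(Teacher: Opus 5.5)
Your proposal is essentially the paper's proof. The paper splits $E[w_{\bm{J}}Y]$ into three pieces: an $\epsilon$-term that vanishes by iterated expectations; a truncation term $E[w_{\bm{J}}\{m(\bm{X},A(\cdot))-m(\bm{X},A_{\bm{J}}(\cdot))\}]$, sent to zero using \textbf{C2}, the moment bounds in \textbf{C$\tilde{\bm{4}}$}, and the residual variance from \textbf{C1}; and a term in which $f$ cancels against the weight, giving the outcome-model expression of Theorem \ref{thm:outcome_reg}. The truncation term is your ``fallback'', and it is the route that actually closes the argument, because the paper's $m$ is the full-function regression evaluated at the truncated curve.

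Two small remarks. Your conditioning on $(\bm{X},A_{\bm{J}}(\cdot))$, which contains $B_1^{(2)}$, is the correct sigma-field for the $\epsilon$-term; the paper conditions only on $(\bm{X},A^{-}_{\bm{J}}(\cdot))$ while pulling out a weight that depends on $B_1^{(2)}$. For $\tilde q=q^*$, the exact per-$\bm{J}$ cancellation, in your write-up as in the paper's, tacitly uses $B_1^{(2)}\indep A^{(3)}_{J_3}\mid \bm{X},A^{-[0,t_2]}_{\bm{J}}(\cdot)$ at the truncated level. That does not follow from \textbf{A$\bm{3}^*$}, which conditions on the full $A^{-[0,t_2]}(\cdot)$, without an extra limiting argument.
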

\begin{proof}
    We will start by proving the theorem for $\tilde{q} = q$. 
    We can decompose $E\left[ \frac{q(B_1^{(2)}| \bm{X}, A_{\bm{J}}^{-}(\cdot))}{f(B_1^{(2)}| \bm{X}, A_{\bm{J}}^{-}(\cdot))}Y\right] = \Circled{1} + \Circled{2} + \Circled{3}$ where
    \begin{align*}
        \Circled{1} &= E\left[\frac{q(B_1^{(2)}| \bm{X}, A_{\bm{J}}^{-}(\cdot))}{f(B_1^{(2)}| \bm{X}, A_{\bm{J}}^{-}(\cdot))}\{Y - \epsilon - m(\bm{X}, A_{\bm{J}}(\cdot))\}\right] \\
        \Circled{2} &= E\left[\frac{q(B_1^{(2)}| \bm{X}, A_{\bm{J}}^{-}(\cdot))}{f(B_1^{(2)}| \bm{X}, A_{\bm{J}}^{-}(\cdot))}\epsilon\right] \\
         \Circled{3} &= E\left[\frac{q(B_1^{(2)}| \bm{X}, A_{\bm{J}}^{-}(\cdot))}{f(B_1^{(2)}| \bm{X}, A_{\bm{J}}^{-}(\cdot))}m(\bm{X}, A_{\bm{J}}(\cdot))\right]
    \end{align*}
    It remains to show that $\lim_{\bm{J} \to \infty} \Circled{1} = \lim_{\bm{J} \to \infty} \Circled{2} = 0$ and $\lim_{\bm{J} \to \infty} \Circled{3} = E[Y\{A^Q(\cdot)\}]$.
    First we will show that $\lim_{\bm{J} \to \infty} \Circled{1} = 0$. By the definition of covariance and the Cauchy-Schwartz inequality we have that $E[XY] \leq E[X]E[Y] + \sqrt{\Var(X)\Var(Y)}$ such that 
    \begin{align*}
        \Circled{1} &\leq \underbrace{E\left[\frac{q(B_1^{(2)}| \bm{X}, A_{\bm{J}}^{-}(\cdot))}{f(B_1^{(2)}| \bm{X}, A_{\bm{J}}^{-}(\cdot))}\right]E[Y - \epsilon - m(\bm{X}, A_{\bm{J}}(\cdot))]}_{\Circled{1A}} \\
        &\quad + \underbrace{\sqrt{\Var\left[\frac{q(B_1^{(2)}| \bm{X}, A_{\bm{J}}^{-}(\cdot))}{f(B_1^{(2)}| \bm{X}, A_{\bm{J}}^{-}(\cdot))}\right]\Var[Y - \epsilon - m(\bm{X}, A_{\bm{J}}(\cdot))]}}_{\Circled{1B}}
    \end{align*}
    Then,
    \begin{align*}
        \lim_{\bm{J} \to \infty}\Circled{1A} &\leq C_{Q1}\lim_{\bm{J}\to\infty} E[Y - \epsilon - m(\bm{X}, A_{\bm{J}}(\cdot))] \\
        &= C_{Q_1}\lim_{\bm{J}\to\infty} E[Y - m(\bm{X}, A_{\bm{J}}(\cdot))]  \\
        &= 0
    \end{align*}
    where the first equality follows from condition \textbf{C5}, the second from the fact that $E[\epsilon] = 0$, and the third from Lemma 3.
    Next,
    \begin{align*}
        \lim_{\bm{J} \to \infty}\Circled{1B} &= \lim_{\bm{J} \to \infty}\sqrt{C_{Q2}\Var[Y - \epsilon - m(\bm{X}, A_{\bm{J}}(\cdot))]} \\
        &= \lim_{\bm{J} \to \infty}\sqrt{C_{Q2}E[(Y - \epsilon - m(\bm{X}, A_{\bm{J}}(\cdot)))^2]} \\
        &= \lim_{\bm{J} \to \infty}\sqrt{C_{Q2}E[(m(\bm{X}, A(\cdot)) - m(\bm{X}, A_{\bm{J}}(\cdot)))^2]} \\
        &\leq \lim_{\bm{J} \to \infty}\sqrt{C_{Q2}E\left[C^2\sum_{k=1}^3\sum_{j=J_k + 1}^\infty\theta_j^{(k)^2}A_j^{(k)^2}\right]} \\
        &= \lim_{\bm{J} \to \infty}\sqrt{C_{Q2}C^2\sum_{k=1}^3\sum_{j=J_k + 1}^\infty \theta_j^{(k)^2}}\\
        &= 0
    \end{align*}
    where first equality follows from condition \textbf{C4}, the second equality follows from the fact that $\lim_{\bm{J} \to \infty}E[Y - \epsilon - \mu(\bm{X}, A_{\bm{J}}(\cdot))] = 0$, the third follows from conditioning on $\bm{X}, A(\cdot)$, and the fourth and fifth inequality follows from condition \textbf{C2} and the proof of Lemma 2. Then, we have that $\lim_{\bm{J} \to \infty} \Circled{1} = 0$ as desired.

    Next, we must show that $\lim_{\bm{J} \to \infty} \Circled{2} = 0$. Note that for every $\bm{J}$ we have that,
    \begin{align*}
        \Circled{2} &= E\left[\frac{q(B_1^{(2)}| \bm{X}, A_{\bm{J}}^{-}(\cdot))}{f(B_1^{(2)}| \bm{X}, A_{\bm{J}}^{-}(\cdot))}\epsilon\right] \\
        &= E\left[E\left[\frac{q(B_1^{(2)}| \bm{X}, A_{\bm{J}}^{-}(\cdot))}{f(B_1^{(2)}| \bm{X}, A_{\bm{J}}^{-}(\cdot))}\epsilon |\bm{X}, A_{\bm{J}}^{-}(\cdot) \right]\right] \\
        &= E\left[\frac{q(B_1^{(2)}| \bm{X}, A_{\bm{J}}^{-}(\cdot))}{f(B_1^{(2)}| \bm{X}, A_{\bm{J}}^{-}(\cdot))}E[\epsilon |\bm{X}, A_{\bm{J}}^{-}(\cdot)]\right] \\
        &= 0
    \end{align*}
    where the last equality follows from the fact that 
    $E[\epsilon |\bm{X}, A_{\bm{J}}^{-}(\cdot)] = E[E[\epsilon|\bm{X},A(\cdot)]|\bm{X}, A_{\bm{J}}^{-}(\cdot)] = E[0] = 0$. Then since $\Circled{2} = 0$ for all $\bm{J}$, $\lim_{\bm{J} \to \infty} \Circled{2} = 0$ as desired.
    
    Lastly, we have that,
    \begin{align*}
    \Circled{3} &=E\left[ \int_{\mathcal{B}_1^{(2)}} \frac{q(B_1^{(2)}| \bm{X}, A_{\bm{J}}^{-}(\cdot))}{f(B_1^{(2)}| \bm{X}, A_{\bm{J}}^{-}(\cdot))} m(\bm{X}, b_1^{(2)}\gamma_1^{(2)}(\cdot) + A_{\bm{J}}^{-}(\cdot))f(b_1^{(2)}|\bm{X}, A_{\bm{J}}^{-}(\cdot))db_1^{(2)}\right] \\
    &= E\left[ \int_{\mathcal{B}_1^{(2)}} m(\bm{X}, b_1^{(2)}\gamma_1^{(2)}(\cdot) + A_{\bm{J}}^{-}(\cdot))q(b_1^{(2)}|\bm{X}, A_{\bm{J}}^{-}(\cdot))db_1^{(2)}\right]
    \end{align*}
    such that $\lim_{\bm{J} \to \infty} \Circled{3}$ is equal to the identification in Theorem \ref{thm:outcome_reg} such that $\lim_{\bm{J} \to \infty} \Circled{3} = E[Y\{A^Q(\cdot)\}]$ as desired.

    We will now prove the theorem for $\tilde{q} = q^*$.
    We can similarly decompose $E\left[\frac{q^*(B_1^{(2)}|\bm{X}, A_{\bm{J}}^{-[0, t_2]}(\cdot))}{f(B_1^{(2)}| \bm{X}, A_{\bm{J}}^{-[0, t_2]}(\cdot))}Y\right] = \Circled{1*} + \Circled{2*} + \Circled{3*}$, where
    \begin{align*}
       \Circled{1*} &= E\left[\frac{q^*(B_1^{(2)}| \bm{X},  A_{\bm{J}}^{-[0, t_2]}(\cdot))}{f(B_1^{(2)}| \bm{X},  A_{\bm{J}}^{-[0, t_2]}(\cdot))}\{Y - \epsilon - m(\bm{X}, A_{\bm{J}}(\cdot))\}\right] \\
        \Circled{2*} &= E\left[\frac{q^*(B_1^{(2)}| \bm{X},  A_{\bm{J}}^{-[0, t_2]}(\cdot))}{f(B_1^{(2)}| \bm{X},  A_{\bm{J}}^{-[0, t_2]}(\cdot))}\epsilon\right] \\
        \Circled{3*} &= E\left[\frac{q^*(B_1^{(2)}| \bm{X},  A_{\bm{J}}^{-[0, t_2]}(\cdot))}{f(B_1^{(2)}| \bm{X},  A_{\bm{J}}^{-[0, t_2]}(\cdot))}m(\bm{X}, A_{\bm{J}}(\cdot))\right] 
    \end{align*}
    By similar arguments to the proof of this theorem for $\tilde{q} = q^*$ we have that $\lim_{\bm{J} \to \infty} \Circled{1*} = \lim_{\bm{J} \to \infty} \Circled{2*} = 0$ and it remains to show that $\lim_{\bm{J} \to \infty} \Circled{3} = E[Y\{A^{Q^*}(\cdot)\}]$. Now,
    \begin{align*}
     \Circled{3*} &=E\left[ \int_{\mathcal{B}_1^{(2)}} \frac{q^*(B_1^{(2)}| \bm{X}, A_{\bm{J}}^{-[0,t_2]}(\cdot))}{f(B_1^{(2)}| \bm{X}, A_{\bm{J}}^{-[0,t_2]}(\cdot))} m(\bm{X}, b_1^{(2)}\gamma_1^{(2)}(\cdot) + A_{\bm{J}}^{-}(\cdot))f(b_1^{(2)}|\bm{X}, A_{\bm{J}}^{-[0,t_2]}(\cdot))db_1^{(2)}\right] \\
    &= E\left[ \int_{\mathcal{B}_1^{(2)}} m(\bm{X}, b_1^{(2)}\gamma_1^{(2)}(\cdot) + A_{\bm{J}}^{-}(\cdot))q(b_1^{(2)}|\bm{X}, A_{\bm{J}}^{-[0,t_2]}(\cdot))db_1^{(2)}\right]
    \end{align*}
    such that $\lim_{\bm{J} \to \infty} \Circled{3*}$ is equal to the identification in Theorem \ref{thm:outcome_reg} and $\lim_{\bm{J} \to \infty} \Circled{3*} = E[Y\{A^{Q^*}(\cdot)\}]$ as desired.
\end{proof}

\subsection{Theorem \ref{thm:dr_identification}}
\begin{theorem}
\textnormal{(Augmented identification)} Given conditions \textbf{C1}-\textbf{C2}, \textbf{C$\tilde{\bm{3}}$}-\textbf{C$\tilde{\bm{4}}$} and assumptions \textbf{A1}, \textbf{A$\tilde{\bm{2}}$}, and assumption \textbf{A3} (for $\tilde{q} = q$)  or \textbf{A$\bm{3}^*$} (for $\tilde{q} = q^*$) 
\begin{align*}
    \mu^{\tilde{Q}}  &= \lim_{\bm{J} \to \infty} E\left(\frac{\tilde{q}(B_1^{(2)}|  \bm{X}, A_{\bm{J}}^{-\tilde{Q}}(\cdot))}{f(B_1^{(2)}|  \bm{X}, A_{\bm{J}}^{-\tilde{Q}}(\cdot))}\left[Y - E_{\tilde{Q}}\{m(\bm{X}, A_{\bm{J}}(\cdot))|\bm{X},  A_{\bm{J}}^{-}(\cdot)\}\right] \right.\\
    &\quad+ \left. E_{\tilde{Q}}\{m(\bm{X}, A_{\bm{J}}(\cdot))|\bm{X},  A_{\bm{J}}^{-}(\cdot)\}\right),\:\:\: \text{for}\:\: \tilde{q} = q, q^*.
\end{align*}
\end{theorem}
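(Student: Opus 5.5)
The plan is to reduce the augmented identification to the two identifications already established: Theorem \ref{thm:outcome_reg} (outcome model) and Theorem \ref{thm:weighting} (weighting). For fixed $\bm{J}$, write $w_{\bm{J}} = \tilde{q}(B_1^{(2)}\mid \bm{X}, A_{\bm{J}}^{-\tilde{Q}}(\cdot))/f(B_1^{(2)}\mid \bm{X}, A_{\bm{J}}^{-\tilde{Q}}(\cdot))$ and $\eta_{\bm{J}} = E_{\tilde{Q}}\{m(\bm{X}, A_{\bm{J}}(\cdot))\mid \bm{X}, A_{\bm{J}}^{-}(\cdot)\}$. The expectation inside the limit then splits as
\[
E[w_{\bm{J}}Y] - E[w_{\bm{J}}\eta_{\bm{J}}] + E[\eta_{\bm{J}}].
\]
Under the stated conditions, Theorem \ref{thm:weighting} gives $\lim_{\bm{J}\to\infty}E[w_{\bm{J}}Y] = \mu^{\tilde{Q}}$. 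Theorem \ref{thm:outcome_reg} gives $\lim_{\bm{J}\to\infty}E[\eta_{\bm{J}}] = \mu^{\tilde{Q}}$. So it suffices to show that $E[w_{\bm{J}}\eta_{\bm{J}}] = E[\eta_{\bm{J}}]$ for every $\bm{J}$, or at least that the difference vanishes in the limit. The sum of the three limits is then $\mu^{\tilde{Q}} - \mu^{\tilde{Q}} + \mu^{\tilde{Q}} = \mu^{\tilde{Q}}$. Each of these limits exists, so splitting the limit of the sum is legitimate.

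The key step is the middle term, handled by iterated expectations. Condition on $(\bm{X}, A_{\bm{J}}^{-\tilde{Q}}(\cdot))$.
\begin{itemize}
\item For $\tilde{q} = q$, the conditioning set is $(\bm{X}, A_{\bm{J}}^{-}(\cdot))$. Then $\eta_{\bm{J}}$ is measurable with respect to it, and $E[w_{\bm{J}}\mid \bm{X}, A_{\bm{J}}^{-}(\cdot)] = \int \tilde{q}(b\mid\cdot)\,db = 1$. Assumption \textbf{A$\tilde{\bm{2}}$} guarantees that $\tilde{q}$ is supported where $f>0$, so no mass is lost.
\item For $\tilde{q} = q^*$, the weight conditions only on $A_{\bm{J}}^{-[0,t_2]}(\cdot)$, while $\eta_{\bm{J}}$ also depends on the $(t_2,T]$ coefficients. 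I would condition first on $(\bm{X}, A_{\bm{J}}^{-[0,t_2]}(\cdot), A^{(3)}_{J_3})$. Assumption \textbf{A$\bm{3}^*$} makes $B_1^{(2)}$ independent of $A^{(3)}$ given $(\bm{X}, A^{-[0,t_2]})$. Hence the conditional density of $B_1^{(2)}$ given this enlarged set is still $f(\cdot\mid \bm{X}, A_{\bm{J}}^{-[0,t_2]})$, its conditional expectation of $w_{\bm{J}}$ is again one, and $\eta_{\bm{J}}$ is measurable with respect to the enlarged set.
\end{itemize}

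An alternative route mirrors the decomposition in the proof of Theorem \ref{thm:weighting}. Write $Y = \epsilon + m(\bm{X}, A_{\bm{J}}(\cdot)) + \{m(\bm{X},A(\cdot)) - m(\bm{X},A_{\bm{J}}(\cdot))\}$. The $\epsilon$ term and the truncation remainder are killed, as there, using \textbf{C$\tilde{\bm{4}}$}, \textbf{C2} and \textbf{C1}. What is left is $E[w_{\bm{J}}\{m(\bm{X},A_{\bm{J}}) - \eta_{\bm{J}}\}]$, which equals zero by the change of measure from $f$ to $\tilde{q}$. I would present the first route and cite the second only for the remainder bounds.

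The main obstacle is the $q^*$ case. The approximation $A_{\bm{J}}^{-[0,t_2]}$ is finite-dimensional, and the independence in \textbf{A$\bm{3}^*$} is stated for the full functions. I need the conditional density of $B_1^{(2)}$ given the truncated coefficients plus $A^{(3)}_{J_3}$ to coincide with $f(\cdot\mid \bm{X}, A_{\bm{J}}^{-[0,t_2]})$. If this does not follow exactly, I would bound the discrepancy using \textbf{C$\tilde{\bm{4}}$} (bounded mean and variance of $w_{\bm{J}}$), the Cauchy--Schwarz inequality, and the Lipschitz bounds \textbf{C2}/\textbf{C$\tilde{\bm{3}}$} on $\eta_{\bm{J}}$. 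This controls the error by $\sqrt{\sum_{k}\sum_{j>J_k}\theta_j^{(k)}}$, which vanishes by \textbf{C1}, as in the proof of Lemma \ref{thm:pop_avg_limit}.
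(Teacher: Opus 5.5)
Your reduction is essentially the paper's: both invoke Theorems \ref{thm:outcome_reg} and \ref{thm:weighting}, then remove one cross term for each fixed $\bm{J}$ by iterated expectations over $(\bm{X}, A_{\bm{J}}^{-}(\cdot))$. The difference is only which instance of the change of measure is used. The paper shows $E[w_{\bm{J}}\{m(\bm{X},A_{\bm{J}}(\cdot)) - \eta_{\bm{J}}\}] = 0$ via $E[w_{\bm{J}}\, m(\bm{X},A_{\bm{J}}(\cdot)) \mid \bm{X}, A_{\bm{J}}^{-}(\cdot)] = \eta_{\bm{J}}$. It takes $\lim E[w_{\bm{J}}\{Y - m(\bm{X},A_{\bm{J}}(\cdot))\}] = 0$ from the term-by-term analysis inside the proof of Theorem \ref{thm:weighting}; this is your ``alternative route''. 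Your primary route uses $E[w_{\bm{J}} \mid \bm{X}, A_{\bm{J}}^{-}(\cdot)] = 1$. That shows the augmented functional equals $E[w_{\bm{J}}Y]$ exactly for every $\bm{J}$, so the statement of Theorem \ref{thm:weighting} alone suffices and the appeal to Theorem \ref{thm:outcome_reg} cancels out. For $\tilde q = q$ your argument is complete.

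For $\tilde q = q^*$, the obstacle you flag is real, and the paper's proof passes over it. Integrating $B_1^{(2)}$ against $q^*(b \mid \bm{X}, A_{\bm{J}}^{-[0,t_2]}(\cdot))$ after conditioning on $(\bm{X}, A_{\bm{J}}^{-}(\cdot))$ presumes $B_1^{(2)} \indep A^{(3)}_{J_3} \mid \bm{X}, A_{\bm{J}}^{-[0,t_2]}(\cdot)$; the same step occurs in the $q^*$ part of Theorem \ref{thm:weighting}. \textbf{A$\bm{3}^*$} gives independence only given the untruncated $A^{-[0,t_2]}(\cdot)$, and conditional independence does not pass to a coarser conditioning set. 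So your primary argument is exactly as rigorous as the paper's, but your fallback would not repair it.

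Because $E[w_{\bm{J}} \mid \bm{X}, A_{\bm{J}}^{-[0,t_2]}(\cdot)] = 1$, the discrepancy equals $E[(w_{\bm{J}}-1)\{\eta_{\bm{J}} - E(\eta_{\bm{J}} \mid \bm{X}, A_{\bm{J}}^{-[0,t_2]}(\cdot))\}]$. Cauchy--Schwarz with \textbf{C$\tilde{\bm{4}}$} bounds this only by $\Var(w_{\bm{J}})^{1/2}\,[E\{\Var(\eta_{\bm{J}} \mid \bm{X}, A_{\bm{J}}^{-[0,t_2]}(\cdot))\}]^{1/2}$. That bound does not shrink with $\bm{J}$, because $\eta_{\bm{J}}$ genuinely varies with the $(t_2,T]$ coefficients.

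The eigenvalue tail enters through \textbf{C2} and \textbf{C$\tilde{\bm{3}}$} only as a perturbation of the functional argument of $m$ or $E_{\tilde Q}$. The discrepancy here is driven by something else: $E[w_{\bm{J}} \mid \bm{X}, A_{\bm{J}}^{-}(\cdot)] - 1 = \int q^*(b\mid \bm{X}, A_{\bm{J}}^{-[0,t_2]}(\cdot))\{f(b \mid \bm{X}, A_{\bm{J}}^{-}(\cdot))/f(b\mid \bm{X}, A_{\bm{J}}^{-[0,t_2]}(\cdot)) - 1\}\,db$, with $f(b \mid \bm{X}, A_{\bm{J}}^{-}(\cdot))$ the density given the enlarged set. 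This is a mismatch between two conditional laws of $B_1^{(2)}$, and none of the stated conditions controls it.

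To close the $q^*$ case you have two options:
\begin{itemize}
\item Assume the truncated independence for every $\bm{J}$, which is what the paper implicitly does.
\item Add regularity on the treatment density, e.g.\ Lipschitz dependence of the conditional density of $B_1^{(2)}$ on its functional conditioning argument, together with a lower bound on $f$ and bounded $m$. Then use \textbf{A$\bm{3}^*$} to write both truncated conditional densities as conditional averages of the untruncated one. The discrepancy becomes $O(\{\sum_{k}\sum_{j>J_k}\theta_j^{(k)}\}^{1/2})$ and vanishes by \textbf{C1}.
\end{itemize}
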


\begin{proof}
It follows directly from Theorems \ref{thm:outcome_reg} and \ref{thm:weighting} that
\begin{align*}
    \mu^{\tilde{Q}}  &= \lim_{\bm{J} \to \infty} E\left[\frac{\tilde{q}(B_1^{(2)}|  \bm{X}, A_{\bm{J}}^{-\tilde{Q}}(\cdot))}{f(B_1^{(2)}|  \bm{X}, A_{\bm{J}}^{-\tilde{Q}}(\cdot))}\left\{Y - m(\bm{X}, A_{\bm{J}})\right\} + E_{\tilde{Q}}[m(\bm{X}, A_{\bm{J}}(\cdot))|\bm{X},  A_{\bm{J}}^{-}(\cdot)]\right]
\end{align*}
for $\tilde{q} = q, q^*$. Thus, it remains to show that $E\left[\frac{\tilde{q}(B_1^{(2)}|  \bm{X}, A_{\bm{J}}^{-\tilde{Q}}(\cdot))}{f(B_1^{(2)}|  \bm{X}, A_{\bm{J}}^{-\tilde{Q}}(\cdot))}\left\{m(\bm{X}, A_{\bm{J}}) - E_{\tilde{Q}}[m(\bm{X}, A_{\bm{J}}(\cdot))|\bm{X},  A_{\bm{J}}^{-}(\cdot)]\right\}\right] = 0$ Now,
\begin{align*}
  E&\left[\frac{\tilde{q}(B_1^{(2)}|  \bm{X}, A_{\bm{J}}^{-\tilde{Q}}(\cdot))}{f(B_1^{(2)}|  \bm{X}, A_{\bm{J}}^{-\tilde{Q}}(\cdot))}\left\{m(\bm{X}, A_{\bm{J}}) - E_{\tilde{Q}}[m(\bm{X}, A_{\bm{J}}(\cdot))|\bm{X},  A_{\bm{J}}^{-}(\cdot)]\right\}\right] \\
  &=  E\left[E\left[\frac{\tilde{q}(B_1^{(2)}|  \bm{X}, A_{\bm{J}}^{-\tilde{Q}}(\cdot))}{f(B_1^{(2)}|  \bm{X}, A_{\bm{J}}^{-\tilde{Q}}(\cdot))}\left\{m(\bm{X}, A_{\bm{J}}) - E_{\tilde{Q}}[m(\bm{X}, A_{\bm{J}}(\cdot))|\bm{X},  A_{\bm{J}}^{-}(\cdot)]\right\}\middle\vert\bm{X}, A^{-}_{\bm{J}}(\cdot) \right]\right] \\
  &= E\left[\int_{\mathcal{B}^{(1)_2}}m(\bm{X}, A_{\bm{J}}(\cdot)\tilde{q}(b_1^{(2)}|  \bm{X}, A_{\bm{J}}^{-\tilde{Q}}(\cdot))db_1^{(2)} - E_{\tilde{Q}}[m(\bm{X}, A_{\bm{J}}(\cdot))|\bm{X},  A_{\bm{J}}^{-}(\cdot)]\right] \\
  &= 0
\end{align*}
as desired.
\end{proof}

\subsection{Theorem \ref{thm:estimator_asymp}}
\begin{theorem}
   Assume the conditions and assumptions of Theorem \ref{thm:dr_identification}, assumption \textbf{A7}, and conditions  \textbf{C7}-\textbf{C8} hold. Given that assumptions \textbf{A$\tilde{\bm{4}}$}-\textbf{A$\tilde{\bm{6}}$} and conditions \textbf{C$\tilde{\bm{5}}$}-\textbf{C$\bm{6}$} hold for all $\bm{J}$, $
   \frac{\sqrt{n}}{\sigma^{\tilde{Q}(\delta)}_{\bm{J}(n)}}( \hat{\mu}^{\tilde{Q}(\delta)}_{\bm{J}(n)} - \mu^{\tilde{Q}(\delta)}) \to N(0,1)$,
    where $\sigma^{\tilde{Q}(\delta)^2}_{\bm{J}(n)} = E\left[w^{\tilde{Q}(\delta)}(\bm{X}, A_{\bm{J}(n)}^{-\tilde{Q}(\delta)}(\cdot))^2\left\{v(\bm{X}, A_{\bm{J}(n)}(\cdot))\right.\right. $ $\left.\left.+  d^{\tilde{Q}(\delta)}(\bm{X}, A_{\bm{J}(n)}(\cdot))^2 \right\}\right] $ $+ \Var\left[E_{\tilde{Q}(\delta)}\{m(\bm{X}, A_{\bm{J}(n)}(\cdot))|\bm{X}, A_{\bm{J}(n)}^{-}(\cdot)\}\right]$ for $\tilde{q} = q, q^*$.
\end{theorem}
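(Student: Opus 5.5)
We decompose the estimation error into a truncation bias, an oracle empirical average, and a remainder, and control each separately. Write $\mu^{\tilde{Q}(\delta)}_{\bm{J}}$ for the expectation inside the limit in Theorem \ref{thm:dr_identification}. Let $\varphi_{\bm{J}}(Z)$ be the uncentered influence function
\[
\varphi_{\bm{J}}(Z)=w^{\tilde{Q}(\delta)}\bigl(\bm{X},A_{\bm{J}}^{-\tilde{Q}(\delta)}(\cdot)\bigr)\bigl[Y-E_{\tilde{Q}(\delta)}\{m(\bm{X},A_{\bm{J}}(\cdot))|\bm{X},A_{\bm{J}}^{-}(\cdot)\}\bigr]+E_{\tilde{Q}(\delta)}\{m(\bm{X},A_{\bm{J}}(\cdot))|\bm{X},A_{\bm{J}}^{-}(\cdot)\},
\]
so that $\mu^{\tilde{Q}(\delta)}_{\bm{J}}=E[\varphi_{\bm{J}}]$. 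With $\bm{J}=\bm{J}(n)$ we write
\[
\hat{\mu}^{\tilde{Q}(\delta)}_{\bm{J}}-\mu^{\tilde{Q}(\delta)}=(\mathbb{P}_n-P)\varphi_{\bm{J}}+(\mu^{\tilde{Q}(\delta)}_{\bm{J}}-\mu^{\tilde{Q}(\delta)})+R_n,
\]
where $R_n=\hat{\mu}^{\tilde{Q}(\delta)}_{\bm{J}}-\mathbb{P}_n\varphi_{\bm{J}}$. The goal is to show that the last two terms are $o_p(\sigma_{\bm{J}(n)}/\sqrt n)$ and that the first term, scaled by $\sqrt n/\sigma_{\bm{J}(n)}$, is asymptotically standard normal.

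\textbf{Oracle term.} We apply the Lindeberg--Feller CLT to the triangular array $\{\varphi_{\bm{J}(n)}(Z_i)\}_{i\le n}$. A direct computation, conditioning first on $(\bm{X},A_{\bm{J}}(\cdot))$ and then on $(\bm{X},A^-_{\bm{J}}(\cdot))$, shows that the cross terms vanish because $E[w\,d\mid \bm{X},A^-_{\bm{J}}]=0$. This is the same computation used in the proof of Theorem \ref{thm:dr_identification}. It gives $\Var(\varphi_{\bm{J}})=\sigma^{\tilde{Q}(\delta)^2}_{\bm{J}}$. For the Lindeberg condition, \textbf{C7} bounds $|Y|$ and $|m|$ by $B$. \textbf{A$\tilde{\bm{4}}$} and \textbf{C$\tilde{\bm{5}}$} bound the weight $w$ by a constant of order $e^{2\delta M}\pi_{\max}/\pi_{\min}$. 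Hence $|\varphi_{\bm{J}}|$ is uniformly bounded. \textbf{C6} keeps $\sigma^2_{\bm{J}}$ bounded below. Together these give Lindeberg. Following \cite{schindl_incremental_2026}, we track the $\delta$-dependence so that $\sigma^2_{\bm{J}}$ scales like $1/\delta$, which is where \textbf{C8} enters.

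\textbf{Remainder $R_n$.} Using cross-fitting, $R_n$ splits into an empirical-process term $(\mathbb{P}_n-P)(\hat\varphi-\varphi)$ and a bias term $P(\hat\varphi-\varphi)$. Conditional on the training fold, the empirical-process term has variance at most $\|\hat\varphi-\varphi\|^2_{L^2}/n$. By \textbf{A$\tilde{\bm{5}}$}, the bounds on the weights, and Chebyshev's inequality, this term is $o_p(n^{-1/2})$. For the bias term we follow the von Mises expansion of \cite{schindl_incremental_2026}. The estimate $\hat{\tilde q}_\delta$ is an exponential tilt of $\hat f$, with a normalizer obtained by integrating over the bounded support $[-M,M]$. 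One therefore bounds $|\hat{\tilde q}_\delta-\tilde q_\delta|$ and $|\hat w-w|$ by constants times $\sup_b|\epsilon^f_{\bm{J}}|$ plus $\int|\epsilon^f_{\bm{J}}|$. Integrating out $B_1^{(2)}$ given $(\bm{X},A^-_{\bm{J}})$, the first-order terms cancel. This leaves
\[
|P(\hat\varphi-\varphi)|\lesssim \|\epsilon^f_{\bm{J}}\|\,\|\epsilon^m_{\bm{J}}\|+\|\epsilon^f_{\bm{J}}\|^2
\]
in the mixed $L^\infty_{b}$--$L^2$ norm. By \textbf{A$\tilde{\bm{6}}$} this is $o_p((n/\delta)^{-1/2})$, which is the required order.

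\textbf{Truncation bias (main obstacle).} We must show $\sqrt{n/\delta}\,|\mu^{\tilde{Q}(\delta)}_{\bm{J}(n)}-\mu^{\tilde{Q}(\delta)}|\to 0$; Theorem \ref{thm:dr_identification} alone only gives convergence, with no rate. I would redo the Cauchy-sequence bounds from Lemma \ref{thm:pop_avg_limit} and Theorems \ref{thm:outcome_reg}--\ref{thm:weighting} in quantitative form. The outcome-regression part is Lipschitz in $a^-$ with constant $C+C_{\tilde Q}$, by \textbf{C2} and \textbf{C$\tilde{\bm{3}}$}, so its error is at most $(C+C_{\tilde Q})\,E\|A-A_{\bm{J}}\|_2\le (C+C_{\tilde Q})\Delta_{\bm{J}}^{1/2}$. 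The weighted residual terms are controlled by Cauchy--Schwarz together with \textbf{C$\tilde{\bm{4}}$}, giving the same $\Delta_{\bm{J}}^{1/2}$ order. Assumption \textbf{A7} gives $\Delta_{\bm{J}(n)}=o(\delta/n)$, hence $\Delta^{1/2}_{\bm{J}(n)}=o((n/\delta)^{-1/2})$, which suffices. The delicate point is that the limiting weights condition on the full function while the finite-$\bm{J}$ weights condition on the approximation, so their difference must also be Lipschitz-controlled. If this fails, I would fall back on the Lipschitz bound on the outcome-regression identification, where \textbf{C$\tilde{\bm{3}}$} handles the change in the conditioning set directly. Finally, combining the three pieces with Slutsky's theorem gives the stated normal limit.
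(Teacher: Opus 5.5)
Your argument follows the paper's proof. It uses the same three-way split: the oracle average $\frac1n\sum_i\{\mu_{\bm{J}(n),i}-E[\mu_{\bm{J}(n),i}]\}$ handled by Lindeberg--Feller, the cross-fitting remainder deferred to Theorem 4 of Schindl et al., and a truncation bias made $o(\sqrt{\delta/n})$ by a Lipschitz bound plus \textbf{A7}.

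Your treatment of the truncation bias is more complete than the paper's. The paper asserts $E[\mu_{\bm{J}(n),i}]=E[\int m(\bm{X},b\gamma_1^{(2)}+A^{-}(\cdot))\tilde{q}(b|\bm{X},A^{-\tilde{Q}}_{\bm{J}(n)}(\cdot))\,db]$ and then uses only \textbf{C$\tilde{\bm{3}}$}. You also do not need the weight comparison you worry about:
\begin{itemize}
\item Conditioning on $(\bm{X},A^-_{\bm{J}}(\cdot))$, as in the proof of Theorem~\ref{thm:dr_identification}, gives $E[w\,d]=0$.
\item Since $w$ is a function of $(\bm{X},A(\cdot))$, $E[w\{Y-m(\bm{X},A_{\bm{J}}(\cdot))\}]=E[w\{m(\bm{X},A(\cdot))-m(\bm{X},A_{\bm{J}}(\cdot))\}]$. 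By \textbf{C2}, \textbf{C$\tilde{\bm{4}}$} and Cauchy--Schwarz this is at most a constant times $\Delta_{\bm{J}}^{1/2}$.
\item What remains is the $(C+C_{\tilde{Q}})$-Lipschitz outcome-regression comparison you describe.
\end{itemize}

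The one real error is the $\delta$ bookkeeping in the CLT step. It matters because \textbf{C8} only has content when $\delta$ varies with $n$.
\begin{itemize}
\item In that regime, $e^{2\delta M}$ is not a uniform bound on $w$.
\item Your stated scaling is inverted. Lemma 1 of Schindl et al.\ (via \textbf{A$\tilde{\bm{4}}$}, \textbf{C$\tilde{\bm{5}}$}, \textbf{C6}) gives $\sigma^{2}_{\bm{J}}\gtrsim\delta$, not $1/\delta$.
\item It is $\sigma_{\bm{J}}\gtrsim\sqrt{\delta}$ that makes your remainder rates $o_p((n/\delta)^{-1/2})$ from \textbf{A$\tilde{\bm{6}}$} and \textbf{A7} negligible relative to $\sigma_{\bm{J}}/\sqrt{n}$ when $\delta$ is large.
\end{itemize}

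The paper's route is as follows. The density floor in \textbf{A$\tilde{\bm{4}}$} bounds the tilted weight by a quantity of order $\delta/\pi_{\min}$ (its $\delta e^{\delta}/\{\pi_{\min}(e^{\delta}-1)\}$), so $|\varphi_{\bm{J}}|\lesssim\delta$. Combined with $\sigma_{\bm{J}}\gtrsim\sqrt{\delta}$, the Lindeberg ratio $\max_i|\varphi_{\bm{J}}(Z_i)|/(\sqrt{n}\,\sigma_{\bm{J}})\lesssim\sqrt{\delta/n}$ vanishes precisely by \textbf{C8}.

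For fixed $\delta$ your version is fine, since $\sigma^2_{\bm{J}}\ge\sigma^2_{\min}E[w^2]\ge\sigma^2_{\min}(E[w])^2=\sigma^2_{\min}$. The same $\delta$-tracking is needed in your Chebyshev bound on the empirical-process term, because the weights enter $\Vert\hat{\varphi}-\varphi\Vert_{L^2}$.
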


\begin{proof}

To simplify notation let, 
\begin{align*}
    \hat{\mu}_{\bm{J}(n), i} &= \hat{\mu}^{\tilde{Q}(\delta)}_{\bm{J}(n), i} \\
    &= \frac{\hat{\tilde{q}}_\delta(B_{1,i}^{(2)}|\bm{X}, A_{\bm{J}(n),i}^{-\tilde{Q}(\delta)}(\cdot))}{\hat{f}(B_{1,i}^{(2)}|\bm{X}, A_{\bm{J}(n),i}^{-\tilde{Q}(\delta)}(\cdot))}\left\{Y_i - \int_{\mathcal{B}_{1}^{(2)}}\hat{m}\left(\bm{X}_i, b_1^{(2)}\gamma_j^{(2)}(\cdot) + A_{\bm{J}(n),i}^{-}(\cdot)\right)\hat{\tilde{q}}_\delta\left(b_1^{(2)}|\bm{X},A_{\bm{J}(n),i}^{-\tilde{Q}(\delta)}(\cdot)\right)db_1^{(2)}\right\} \\
    &\quad+ \int_{\mathcal{B}_{1}^{(2)}}\hat{m}\left(\bm{X}_i, b_1^{(2)}\gamma_j^{(2)}(\cdot) + A_{\bm{J}(n),i}^{-}(\cdot)\right)\hat{\tilde{q}}_\delta\left(b_1^{(2)}|\bm{X},A_{\bm{J}(n),i}^{-\tilde{Q}(\delta)}(\cdot)\right)db_1^{(2)} 
\end{align*}
and let $\mu_{\bm{J}(n), i}$ be the corresponding quantity with the true values of $\tilde{q}_\delta, f,$ and $m$. Then, we have the following decomposition:
\begin{align}
    \hat{\mu}^{\tilde{Q}(\delta)}_{\bm{J}(n)} - \mu^{\tilde{Q}(\delta)} &=\frac1n\sum_{i=1}^n  \hat{\mu}_{\bm{J}(n), i} - \mu^{\tilde{Q}(\delta)} \\
    &= \frac1n\sum_{i=1}^n \mu_{\bm{J}(n), i} - E[\mu_{\bm{J}(n), i}]  \label{eq:est_term1} \\
    &\quad + \frac1n\sum_{i=1}^n  E[\mu_{\bm{J}(n), i}]- \mu^{\tilde{Q}(\delta)}  \label{eq:est_term2} \\
    &\quad + \frac1n\sum_{i=1}^n  \hat{\mu}^{\tilde{Q}(\delta)}_{\bm{J}(n)} -   \mu_{\bm{J}(n), i} 
    \label{eq:est_term3}
\end{align}

Term \eqref{eq:est_term3} is equal to the empirical process and bias remainder terms as defined in \cite{schindl_incremental_2026}. Thus, given the assumptions of Theorem \ref{thm:estimator_asymp}, by Theorem 4 of \cite{schindl_incremental_2026} term \eqref{eq:est_term3} is $o_p(\sqrt{\delta/n})$ as desired. Thus it remains to show that term \eqref{eq:est_term1} converges to a normal distribution and that term \eqref{eq:est_term2} is $o(\sqrt{\delta/n})$.

Consider the triangular array $\{\eta_{\bm{J}(n), i} = \mu_{\bm{J}(n), i} - E[\mu_{\bm{J}(n), i}], n \in \mathbb{Z}^+, i = 1, \ldots n\}$ where for every $n$, $\{\eta_{\bm{J}(n), i}\}_{i=1}^n$ are mutually independent. It is clear that $E[\eta_{\bm{J}(n), i}] = 0$ and that term \eqref{eq:est_term1} is equivalent to $\sum_{i=1}^n \eta_{\bm{J}(n), i}$. Next, let  $\sigma_{\bm{J}(n)}^{\tilde{Q}(\delta)^2} = \frac1n\sum_{i=1}^n\Var(\eta_{\bm{J}(n), i})$. Then given that the Lindeberg condition holds, by the Lindeberg-Feller Central Limit Theorem we have that,
\begin{align*}
    \frac{\sqrt{n}}{\sigma_{\bm{J}(n)}^{\tilde{Q}(\delta)}}\sum_{i=1}^n \eta_{\bm{J}(n), i} \to N(0,1)
\end{align*}
It remains to show that the Lindeberg condition,
$\lim_{n\to\infty} \frac{1}{\sigma_{\bm{J}(n)}^{\tilde{Q}(\delta)}}\sum_{i=1}^n E[(\eta_{\bm{J}(n), i})^2I\{|\eta_{\bm{J}(n), i}| > \epsilon\sigma_{\bm{J}(n)}^{\tilde{Q}(\delta)}\}] = 0$, holds for any $\epsilon > 0$. To show that this condition holds, we will use similar techniques to the proof of Theorem 4 in \cite{schindl_incremental_2026}. By Lemma 1 of \cite{schindl_incremental_2026}, we have that,
\begin{align*}
    \sigma_{\bm{J}(n)}^{\tilde{Q}(\delta)} \geq \sqrt{n\delta}\frac{\pi^{1/2}_{\text{min}}\sigma_{\min}}{\pi_{\max}\sqrt{2K}},
\end{align*}
such that,
\begin{align*}
   E\left[(\eta_{\bm{J}(n), i} )^2I\{|\eta_{\bm{J}(n), i} | > \epsilon\sigma_{\bm{J}(n)}^{\tilde{Q}(\delta)} \}\right] \leq E\left[(\eta_{\bm{J}(n), i} )^2I\left\{|\eta_{\bm{J}(n), i} | > \epsilon\sqrt{n\delta}\frac{\pi^{1/2}_{\text{min}}\sigma_{\min}}{\pi_{\max}\sqrt{2K}}\right\}\right].
\end{align*}
Given the assumption that $|Y| \leq B$ with probability 1,
\begin{align*}
   |\eta_{\bm{J}(n), i}| &\leq \underbrace{\left\vert \frac{\tilde{q}(B_{1,i}^{(2)}|\bm{X}_i, A_{\bm{J}(n), i}^{-\tilde{Q}(\delta)}(\cdot))}{f(B_{1,i}^{(2)}| \bm{X}_i,A_{\bm{J}(n), i}^{-\tilde{Q}(\delta)}(\cdot))}\left\{Y_i - \int_{\mathcal{B}_{1}^{(2)}}m(\bm{X}_i, b_1^{(2)}\gamma_j^{(2)}(\cdot) + A_{\bm{J}(n), i}^{-}(\cdot))\tilde{q}(b_1^{(2)}|\bm{X}_i,A_{\bm{J}(n), i}^{-\tilde{Q}(\delta)}(\cdot))db_1^{(2)}\right\} \right\vert}_{\Circled{1}}\\ 
   &\quad + \underbrace{\left\vert\int_{\mathcal{B}_{1}^{(2)}}m(\bm{X}_i, b_1^{(2)}\gamma_j^{(2)}(\cdot) + A_{\bm{J}(n), i}^{-}(\cdot))\tilde{q}(b_1^{(2)}|\bm{X}_i,A_{\bm{J}(n), i}^{-\tilde{Q}(\delta)}(\cdot))db_1^{(2)}\right\vert}_{\Circled{2}}\\
   &\quad + |E[\mu_{\bm{J}(n),i}]|.\:\: \text{Then,}\\
   \Circled{2} &\leq B\int_{\mathcal{B}_{1}^{(2)}}\tilde{q}(b_1^{(2)}|\bm{X}_i,A_{\bm{J}(n), i}^{-\tilde{Q}(\delta)}(\cdot))db_1^{(2)} = B. \:\: \text{We also have that,} \\
    \Circled{1} &\leq 2B\left\vert\frac{\tilde{q}(B_{1,i}^{(2)}| \bm{X}_i,A_{\bm{J}(n), i}^{-\tilde{Q}(\delta)}(\cdot))}{f(B_{1,i}^{(2)}|\bm{X}_i, A_{\bm{J}(n), i}^{-\tilde{Q}(\delta)}(\cdot))}\right\vert \leq \frac{2B}{\pi_{\min}}\left\vert\frac{\exp(\delta s)}{\int_{s}\exp(\delta s)}\right\vert \leq \frac{2B\delta\exp(\delta)}{\pi_{\min}(\exp(\delta)-1)}. \:\: \text{Therefore,} \\
    |\eta_{\bm{J}(n), i}| &\leq \frac{4B\delta\exp(\delta)}{\pi_{\min}(\exp(\delta)-1)} + 2B.
\end{align*}
Then, we have that
\begin{align*}
   E\left[(\eta_{\bm{J}(n), i} )^2I\{|\eta_{\bm{J}(n), i} | > \epsilon\sigma_{\bm{J}(n)}^{\tilde{Q}(\delta)} \}\right] \leq E\left[(\eta_{\bm{J}(n), i} )^2I\left\{\frac{4B\exp(\delta)}{\pi_{\min}(\exp(\delta)-1)} + 2B/\delta > \epsilon\sqrt{n/\delta}\frac{\pi^{1/2}_{\text{min}}\sigma_{\min}}{\pi_{\max}\sqrt{2K}}\right\}\right] 
\end{align*}
where
\begin{align*}
   I\left\{\frac{4B\exp(\delta)}{\pi_{\min}(\exp(\delta)-1)} + 2B/\delta > \epsilon\sqrt{n/\delta}\frac{\pi^{1/2}_{\text{min}}\sigma_{\min}}{\pi_{\max}\sqrt{2K}}\right\} &\to 0 
\end{align*}
because $\sqrt{n/\delta} \to \infty$. We finish the proof by using the dominated convergence theorem. For all $n$, we have that 
\begin{align*}
    \frac{1}{\sigma_{\bm{J}(n)}^{\tilde{Q}(\delta)}} \sum_{i=1}^n (\eta_{\bm{J}(n), i} )^2I\{|\eta_{\bm{J}(n), i} | > \epsilon\sigma_{\bm{J}(n)}^{Q(\delta)} \} &\leq \frac{1}{\sigma_{\bm{J}(n)}^{\tilde{Q}(\delta)}} \sum_{i=1}^n (\eta_{\bm{J}(n), i} )^2, \text{   where} \\
    E\left[ \frac{1}{\sigma_{\bm{J}(n)}^{\tilde{Q}(\delta)}} \sum_{i=1}^n (\eta_{\bm{J}(n), i} )^2\right] &= \frac{1}{\sum_{i=1}^n E[(\eta_{\bm{J}(n), i} )^2]} \sum_{i=1}^n E[(\eta_{\bm{J}(n), i} )^2] < \infty
\end{align*}
such that by the dominated convergence theorem
\begin{align*}
   \lim_{n\to\infty} \frac{1}{\sigma_{\bm{J}(n)}^{\tilde{Q}(\delta)}}\sum_{i=1}^n E[(\eta_{\bm{J}(n), i})^2I\{|\eta_{\bm{J}(n), i}]| > \epsilon\sigma_{\bm{J}(n)}^{\tilde{Q}(\delta)} \}] = 0
\end{align*}
and thus the Lindeberg condition holds. \\

Next, we must show that term \eqref{eq:est_term2} is $o(\sqrt{n/\delta})$. Now, note that since $E[\mu_{\bm{J}(n), 1}] = \cdots = [\mu_{\bm{J}(n), n}]$ for all $n$,
\begin{align*}
    \frac1n\sum_{i=1}^n  E[\mu_{\bm{J}(n), i}] - \mu^{\tilde{Q}(\delta)} &= E[\mu_{\bm{J}(n), i}] - \mu^{\tilde{Q}(\delta)}.
\end{align*}
By Theorems \ref{thm:outcome_reg} and \ref{thm:dr_identification}, we have the following simplification of $E[\mu_{\bm{J}(n), i}]$:
\begin{align*}
    E[\mu_{\bm{J}(n), i}] &= E\left[\frac{\tilde{q}(B_{1,i}^{(2)}|\bm{X}_i, A_{\bm{J}(n), i}^{-\tilde{Q}(\delta)}(\cdot))}{f(B_{1,i}^{(2)}| \bm{X}_i,A_{\bm{J}(n), i}^{-\tilde{Q}(\delta)}(\cdot))}\left\{Y_i - \int_{\mathcal{B}_{1}^{(2)}}m(\bm{X}_i, b_1^{(2)}\gamma_j^{(2)}(\cdot) + A_{\bm{J}(n), i}^{-}(\cdot))\tilde{q}(b_1^{(2)}|\bm{X}_i,A_{\bm{J}(n), i}^{-\tilde{Q}(\delta)}(\cdot))db_1^{(2)}\right\}\right]\\
    &\quad + E\left[\int_{\mathcal{B}_{1}^{(2)}}m(\bm{X}_i, b_1^{(2)}\gamma_j^{(2)}(\cdot) + A_{\bm{J}(n), i}^{-}(\cdot))\tilde{q}(b_1^{(2)}|\bm{X}_i,A_{\bm{J}(n), i}^{-\tilde{Q}(\delta)}(\cdot))db_1^{(2)}\right] \\
    &= E\left[\int_{\mathcal{B}_{1}^{(2)}}m(\bm{X}_i, b_1^{(2)}\gamma_j^{(2)}(\cdot) + A_{i}^{-}(\cdot))\tilde{q}(b_1^{(2)}|\bm{X}_i,A_{\bm{J}(n), i}^{-\tilde{Q}(\delta)}(\cdot))db_1^{(2)}\right]
\end{align*}
Then we have that,
\begin{align*}
  E[\mu_{\bm{J}(n), i}] - \mu^{\tilde{Q}(\delta)} &=  E\left[\int_{\mathcal{B}_{1}^{(2)}}m(\bm{X}_i, b_1^{(2)}\gamma_j^{(2)}(\cdot) + A_{i}^{-}(\cdot))\tilde{q}(b_1^{(2)}|\bm{X}_i,A_{\bm{J}(n), i}^{-\tilde{Q}(\delta)}(\cdot))db_1^{(2)}\right] \\
   &\quad -  E\left[\int_{\mathcal{B}_{1}^{(2)}}m(\bm{X}_i, b_1^{(2)}\gamma_j^{(2)}(\cdot) + A_{i}^{-}(\cdot))\tilde{q}(b_1^{(2)}|\bm{X}_i,A_{i}^{-\tilde{Q}(\delta)}(\cdot))db_1^{(2)}\right] \\
   &= E[E_{\tilde{Q}}[m(\bm{X}_i, B_{1, i}^{(2)}\gamma_j^{(2)}(\cdot) + A_{i}^{-}(\cdot))|\bm{X}_i, A_{\bm{J}(n),i}^{-}(\cdot)]] \\
   &- E[E_{\tilde{Q}}[m(\bm{X}_i, B_{1, i}^{(2)}\gamma_j^{(2)}(\cdot) + A_{i}^{-}(\cdot))|\bm{X}_i, A_{i}^{-}(\cdot)]]  \\
   &\leq E[\vert E_{\tilde{Q}}[m(\bm{X}_i, B_{1, i}^{(2)}\gamma_j^{(2)}(\cdot) + A_{i}^{-}(\cdot))|\bm{X}_i, A_{\bm{J}(n)i}^{-}(\cdot)]] \\
   &- E_{\tilde{Q}}[m(\bm{X}_i, B_{1, i}^{(2)}\gamma_j^{(2)}(\cdot) + A_{i}^{-}(\cdot))|\bm{X}_i, A_{i}^{-}(\cdot)]\vert]  \\
   &\leq C_{\tilde{Q}}E\left[\left\Vert\sum_{k=1}^3\sum_{j=J_k(n)}^\infty \theta^{(k)^{1/2}}A^{(k)}_j\psi^{(k)}_j(\cdot)\right\Vert_2\right] \\
   &\leq C_{\tilde{Q}}\sqrt{\sum_{k=1}^3\sum_{j=J_k(n)}^\infty \theta^{(k)}_j} \\
   &= o(\sqrt{\delta/n})
\end{align*}
by the assumption that $\Delta_{\bm{J}(n)} = o(\delta/n)$. 
Taking this together we have that,
\begin{align*}
    \frac{\sqrt{n}}{\sigma_{\bm{J}(n)}^{\tilde{Q}(\delta)} }\left(\hat{\mu}^{\tilde{Q}(\delta)}_{\bm{J}(n)}- \mu^{Q(\delta)}\right) &= \frac{\sqrt{n}}{\sigma_{\bm{J}(n)}^{\tilde{Q}(\delta)} }\left(\frac1n\sum_{i=1}^n \mu_{\bm{J}(n),i} - E[\mu_{\bm{J}(n),i}]\right) \\ 
    &\quad + \frac{\sqrt{n}}{\sigma_{\bm{J}(n)}^{\tilde{Q}(\delta)} }\left(\frac1n\sum_{i=1}^n  E[\mu_{\bm{J}(n), i}] - \mu^{\tilde{Q}(\delta)}  + \frac1n\sum_{i=1}^n  \hat{\mu}_{\bm{J}(n),i} -  \mu_{\bm{J}(n),i}\right) \\
    &= \frac{\sqrt{n}}{\sigma_{\bm{J}(n)}^{\tilde{Q}(\delta)} }\left(\frac1n\sum_{i=1}^n \mu_{\bm{J}(n),i} - E[\mu_{\bm{J}(n),i}]\right) + o_p(\sqrt{\delta/n}) \\
    &\to N(0,1)
\end{align*}
where 
\begin{align*}
        \sigma^{\tilde{Q}(\delta)^2}_{\bm{J}(n)} &= E\left[\left(\frac{\tilde{q}(B_{1}^{(2)}|\bm{X}, A_{\bm{J}(n)}^{-\tilde{Q}(\delta)}(\cdot))}{f(B_{1}^{(2)}| \bm{X},A_{\bm{J}(n)}^{-\tilde{Q}(\delta)}(\cdot))}\right)^2\left\{\Var[Y|B_{1}^{(2)}, A_{\bm{J}(n)}(\cdot)] \right.\right.\\
        &\quad + \left.\left.\left[m(\bm{X}, A_{\bm{J}(n)}(\cdot)) - E_{\tilde{Q}}[m(\bm{X}, A_{\bm{J}(n)}(\cdot))|\bm{X}_i, A_{\bm{J}(n)}^{-}(\cdot)]\right]^2 \right\}\right] \\
        &\quad + \Var\left(E_{\tilde{Q}}[m(\bm{X}, A_{\bm{J}(n)}(\cdot))|\bm{X}, A_{\bm{J}(n)}^{-}(\cdot)]\right)
    \end{align*}
by Theorem 1 of \cite{schindl_incremental_2026}.
\end{proof}

\subsection{Corollary \ref{thm:coro_estimator_asymp}}
\begin{suppcoro}
    Assume the conditions and assumptions of Theorem \ref{thm:dr_identification} and conditions \textbf{C7}-\textbf{C8} hold. Given that assumptions \textbf{A$\tilde{\bm{4}}$}-\textbf{A$\tilde{\bm{6}}$} and conditions \textbf{C$\tilde{\bm{5}}$}-\textbf{C$\bm{6}$} hold at the limit as $\bm{J} \to \infty$,
    \begin{align*}
        \frac{\sqrt{n}}{\sigma^{\tilde{Q}(\delta)}}( \hat{\mu}^{\tilde{Q}(\delta)} - \mu^{\tilde{Q}(\delta)}) \to N(0,1),\:\:\: \text{for}\:\: \tilde{q} =q, q^*.
    \end{align*}
    \label{thm:coro_estimator_asymp}
\end{suppcoro}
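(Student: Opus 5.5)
The plan is to follow the proof of Theorem \ref{thm:estimator_asymp} line by line. The nuisance functions now condition on the true functional treatment $A^{-\tilde{Q}}(\cdot)$ and on $A(\cdot)$, rather than on $A^{-\tilde{Q}}_{\bm{J}(n)}(\cdot)$ and $A_{\bm{J}(n)}(\cdot)$. Since $B_1^{(2)}$ is univariate, $f(b_1^{(2)}\mid \bm{X}, A^{-\tilde{Q}}(\cdot))$ and $\tilde{q}_\delta(b_1^{(2)}\mid \bm{X}, A^{-\tilde{Q}}(\cdot))$ are genuine conditional densities. Hence every object in the estimator is well defined without truncation.

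Let $\mu_i$ denote the uncentered influence-function term evaluated at the true $\tilde{q}_\delta$, $f$, and $m$ with full conditioning. I will use the same three-term decomposition as in \eqref{eq:est_term1}--\eqref{eq:est_term3}:
\begin{align*}
\hat{\mu}^{\tilde{Q}(\delta)} - \mu^{\tilde{Q}(\delta)} = \frac1n\sum_{i=1}^n \{\mu_i - E[\mu_i]\} + \{E[\mu_i] - \mu^{\tilde{Q}(\delta)}\} + \frac1n\sum_{i=1}^n \{\hat{\mu}_i - \mu_i\}.
\end{align*}

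\textbf{Third term.} This is exactly the empirical-process plus second-order remainder of \cite{schindl_incremental_2026}, now with conditioning set $(\bm{X}, A^{-\tilde{Q}}(\cdot))$. It is $o_p(\sqrt{\delta/n})$ by their Theorem 4, using cross-fitting together with \textbf{A$\tilde{\bm{4}}$}--\textbf{A$\tilde{\bm{6}}$} and \textbf{C$\tilde{\bm{5}}$}, all assumed to hold at the limit $\bm{J}\to\infty$. Their argument never uses finite dimensionality of the conditioning variables, only the mixed $L^\infty$--$L^2$ rates. The same observation carries the Lindeberg step over verbatim. The summands are i.i.d. with no dependence on $n$ except through $\delta$. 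Bounded outcome (\textbf{C7}) and the exponential-tilt bound on the density ratio give $|\mu_i - E\mu_i| \lesssim B\delta e^{\delta}/\{\pi_{\min}(e^\delta-1)\} + 2B$. Lemma 1 of \cite{schindl_incremental_2026}, via \textbf{C6}, lower bounds the variance. With \textbf{C8}, the Lindeberg--Feller theorem then yields normality of the first term, scaled by $\sigma^{\tilde{Q}(\delta)}$. I will identify $\sigma^{\tilde{Q}(\delta)}$ as the $\bm{J}\to\infty$ analogue of $\sigma^{\tilde{Q}(\delta)}_{\bm{J}(n)}$ through Theorem 1 of \cite{schindl_incremental_2026}.

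\textbf{Second term, the main obstacle.} In Theorem \ref{thm:estimator_asymp} this bias term required \textbf{A7}. Here I will show it is exactly zero. Conditioning on $(\bm{X}, A^{-}(\cdot))$, the weighted residual $w^{\tilde{Q}(\delta)}\{m(\bm{X},A(\cdot)) - E_{\tilde{Q}}[m\mid \bm{X}, A^-(\cdot)]\}$ integrates to zero. This is the computation in the proof of Theorem \ref{thm:dr_identification}, but with no truncation. Also, $E[w^{\tilde{Q}(\delta)}\epsilon]=0$ because $E[\epsilon\mid \bm{X},A(\cdot)]=0$. So $E[\mu_i] = E\bigl[E_{\tilde{Q}}\{m(\bm{X},A(\cdot))\mid \bm{X},A^-(\cdot)\}\bigr]$.

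It remains to identify this quantity with $\mu^{\tilde{Q}(\delta)}$. For $\tilde{q}=q$, I will use the untruncated chain of equalities in the proof of Theorem \ref{thm:outcome_reg}, which expresses $m^Q$ as $\int m\, dQ$ under \textbf{A1} and \textbf{A3}. For $\tilde{q}=q^*$, I will use the step of that proof establishing $E[m^{Q^*}] = E[\tilde{m}^{Q^*}(\bm{X},A^-(\cdot))]$, which relies on the $L^1$-Cauchy argument for $\tilde{m}_{J_3}$ and on \textbf{A$\bm{3}^*$}. The delicate point is that these identities were stated only as $\bm{J}$-limits. I will therefore justify them directly at the functional level: the conditional law of $B_1^{(2)}$ given $(\bm{X},A^{-\tilde{Q}}(\cdot))$ exists by regular conditional probability, and the limits converge by \textbf{C1}--\textbf{C$\tilde{\bm{3}}$}. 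Combining the three terms gives the stated convergence to $N(0,1)$.
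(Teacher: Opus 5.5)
Your proposal is correct and follows the argument the paper intends. The paper writes out no separate proof of this corollary; it only notes that when the nuisances condition on the untruncated treatment no assumption on $\Delta_{\bm{J}(n)}$ is needed, so the proof of Theorem \ref{thm:estimator_asymp} goes through with the truncation-bias term \eqref{eq:est_term2}, the only place \textbf{A7} enters, now identically zero, exactly as you argue. The one step worth making explicit is for $\tilde{q}=q^*$: your conditional centering of $w^{\tilde{Q}(\delta)}\{m - E_{\tilde{Q}}(m\mid \bm{X},A^{-}(\cdot))\}$ given $(\bm{X},A^{-}(\cdot))$ uses $f(b_1^{(2)}\mid \bm{X},A^{-}(\cdot)) = f(b_1^{(2)}\mid \bm{X},A^{-[0,t_2]}(\cdot))$, which is the $B_1^{(2)}\indep A^{(3)}(\cdot)$ part of \textbf{A$\bm{3}^*$}.
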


\subsection{Theorem \ref{thm:contrast_asymp}}
\begin{theorem}
   Assume the conditions and assumptions of Theorem \ref{thm:estimator_asymp} hold. Then, $\frac{\sqrt{n}}{\sigma^{\tau^{\tilde{Q}(\delta)}}_{\bm{J}(n)}}( \hat{\tau}^{\tilde{Q}(\delta)}_{\bm{J}(n)} - \tau^{\tilde{Q}(\delta)}) \to N(0,1)$,
    where $\sigma^{\tau^{\tilde{Q}(\delta)^2}}_{\bm{J}(n)} = E\left[\{w^{\tilde{Q}(\delta)}(\bm{X}, A_{\bm{J}(n)}^{-\tilde{Q}(\delta)}(\cdot)) - 1\}^2v(\bm{X}, A_{\bm{J}(n)}(\cdot))\right.$ \\
    $\left. +  w^{\tilde{Q}(\delta)}(\bm{X}, A_{\bm{J}(n)}^{-\tilde{Q}(\delta)}(\cdot))^2d^{\tilde{Q}(\delta)}(\bm{X}, A_{\bm{J}(n)}(\cdot))^2 \right.\left.-2\Var_{\tilde{Q}(\delta)}\{m(\bm{X}, A_{\bm{J}(n)}(\cdot))| \bm{X}, A_{\bm{J}(n)}^{-}(\cdot)\}\right] $ \\ $+ \Var\{d^{\tilde{Q}(\delta)}(\bm{X}, A_{\bm{J}(n)}(\cdot))\}$ for $\tilde{q} = q, q^*$.
\end{theorem}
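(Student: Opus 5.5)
We reduce to Theorem \ref{thm:estimator_asymp} by writing the contrast estimator as a single sample average plus the same remainder terms. Since $\tau^{\tilde{Q}(\delta)} = \mu^{\tilde{Q}(\delta)} - \mu$ and $\mu = E[Y]$ by Lemma \ref{thm:pop_avg_limit} and assumption \textbf{A1}, we have
\begin{align*}
\hat{\tau}^{\tilde{Q}(\delta)}_{\bm{J}(n)} - \tau^{\tilde{Q}(\delta)} = \frac1n\sum_{i=1}^n \left\{\mu_{\bm{J}(n),i} - Y_i - E[\mu_{\bm{J}(n),i} - Y_i]\right\} + \left(E[\mu_{\bm{J}(n),i}] - \mu^{\tilde{Q}(\delta)}\right) + \frac1n\sum_{i=1}^n\left(\hat{\mu}_{\bm{J}(n),i} - \mu_{\bm{J}(n),i}\right).
\end{align*}
Here the subtracted $Y_i$ involves no nuisance estimation, so the decomposition of the proof of Theorem \ref{thm:estimator_asymp} carries over unchanged. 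The last two terms are exactly terms \eqref{eq:est_term2} and \eqref{eq:est_term3} from that proof. The first is $o(\sqrt{\delta/n})$ by \textbf{A7} and \textbf{C$\tilde{\bm{3}}$}, and the second is $o_p(\sqrt{\delta/n})$ by Theorem 4 of \cite{schindl_incremental_2026}. Thus only the centered average of $\zeta_{\bm{J}(n),i} = \mu_{\bm{J}(n),i} - Y_i$ remains.

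For the leading term, we apply the Lindeberg--Feller CLT to the triangular array $\zeta_{\bm{J}(n),i} - E[\zeta_{\bm{J}(n),i}]$, following the argument of Theorem \ref{thm:estimator_asymp}. By \textbf{C7}, $|Y_i|\le B$, so the almost-sure bound on $|\eta_{\bm{J}(n),i}|$ obtained there increases only by $2B$. The Lindeberg condition then follows once the variance of $\zeta$ is bounded below at the rate used there. This lower bound is the step we expect to be the main obstacle, because subtracting $Y$ could in principle cancel variance. We would handle it by first conditioning on $(\bm{X}, A_{\bm{J}(n)}(\cdot))$: the conditional variance of $\zeta$ is $(w-1)^2 v$, and $v \ge \sigma^2_{\min}$ by \textbf{C6}. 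It then suffices to show $E[(w-1)^2]$ is bounded below, which we would do by adapting Lemma 1 of \cite{schindl_incremental_2026}, using \textbf{A$\tilde{\bm{4}}$} and \textbf{C$\tilde{\bm{5}}$} to lower-bound $E[(w-1)^2]$ by a multiple of $\min(1,\delta)$. If that fails for small $\delta$, we fall back on the variance of the $Y$-part itself, which is at least $\sigma^2_{\min}$, whenever $w$ is near one.

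Next we compute $\Var(\zeta)$ and match it to $\sigma^{\tau^{\tilde{Q}(\delta)^2}}_{\bm{J}(n)}$. Write $\zeta = w(Y - E_{\tilde{Q}}m) + E_{\tilde{Q}}m - Y$ and decompose $Y - E_{\tilde{Q}}m = (Y - m) + d$, so that $\zeta = (w-1)(Y-m) + (w-1)d + d - d$. Conditioning on $(\bm{X},A_{\bm{J}(n)}(\cdot))$ gives the term $E[(w-1)^2 v]$. The remaining part is $(w-1)d + E_{\tilde{Q}}m - m + d = wd - d + d$, which simplifies to $wd - (m - E_{\tilde{Q}}m) + \ldots$. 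We would expand $E[(wd - d)^2] = E[w^2 d^2] - 2E[w d^2] + E[d^2]$. Conditioning on $(\bm{X}, A^{-}_{\bm{J}(n)}(\cdot))$ and using $E[w\,g(B_1^{(2)})\mid\cdot] = E_{\tilde{Q}}[g\mid\cdot]$ turns the cross term into $-2E[\Var_{\tilde{Q}}(m\mid\bm{X},A^{-}_{\bm{J}(n)})]$, since $E_{\tilde{Q}}[d\mid\cdot]=0$. Finally we collect the $E[d^2]$ and squared-mean pieces into $\Var(d)$, using $E[w d\mid\cdot]=0$ to kill the mixed terms with the marginal means. This yields the stated $\sigma^{\tau^{\tilde{Q}(\delta)^2}}_{\bm{J}(n)}$. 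Slutsky's lemma then combines the CLT with the $o_p$ remainders, using the same scaling as in Theorem \ref{thm:estimator_asymp}.
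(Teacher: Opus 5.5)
Your outline matches the paper's proof almost step for step. You use the same three-term decomposition and the same handling of the two remainders: \textbf{A7} and \textbf{C$\tilde{\bm{3}}$} for the bias term, and Theorem 4 of Schindl et al.\ for the empirical-process term. You apply Lindeberg--Feller with the almost-sure bound on $|\eta_{\bm{J}(n),i}|$ enlarged by $2B$. Your variance calculation is the paper's expansion $D_{Y-}+D_{\tilde{q},\mu}+D_{\mu^{\tilde{Q}}}-D_{\mu}$ regrouped into two pieces. The bookkeeping there is garbled (the ``$+d-d$'' and ``$wd-d+d$'' lines). It collapses to two lines once you use the exact identity $\zeta_{\bm{J}(n),i}=\mu_{\bm{J}(n),i}-Y_i=(w-1)(Y-E_{\tilde{Q}}m)=(w-1)(Y-m)+(w-1)d$. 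The formula you end with is the stated $\sigma^{\tau^{\tilde{Q}(\delta)^2}}_{\bm{J}(n)}$.

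The one genuine difference is the variance lower bound in the Lindeberg step. The paper uses the reverse triangle inequality $\sigma^{\tau}\ge\sigma^{\tilde{Q}}-\sqrt{\Var(Y)}$ together with Lemma 1 of Schindl et al. That bound is informative only when $\sigma^{\tilde{Q}}$ exceeds $\sqrt{\Var(Y)}$, which fails as the tilt becomes mild.

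Your conditioning bound $\Var(\zeta)\ge E\{(w-1)^2v\}\ge\sigma^2_{\min}E\{(w-1)^2\}$ avoids that problem, but two of your claims about it are wrong.
\begin{itemize}
\item The fallback fails. By the identity above, the $Y$-contributions cancel exactly, so $\Var(\zeta)\to0$ as $w\to1$. There is no $\sigma^2_{\min}$ floor.
\item For small $\delta$, $E\{(w-1)^2\}$ is of order $\delta^2$, because $w-1\approx\delta\{B_1^{(2)}-E(B_1^{(2)}\mid\cdot)\}$. So a multiple of $\min(1,\delta)$ is not a valid lower bound there. For large $\delta$ it is too weak to give Lindeberg under \textbf{C8} alone.
\end{itemize}

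What does work is the following. First, $E\{(w-1)^2\mid\bm{X},A^{-\tilde{Q}}_{\bm{J}}\}=\Var(w\mid\cdot)$. The map $b\mapsto e^{\delta b}/E(e^{\delta B_1^{(2)}}\mid\cdot)$ has slope of absolute value at least $|\delta|e^{-2|\delta|M}$ on $[-M,M]$. Hence $\Var(w\mid\cdot)\ge\delta^2e^{-4|\delta|M}\Var(B_1^{(2)}\mid\cdot)$. A density bounded by $\pi_{\max}$ puts mass at most $1/2$ within $1/(4\pi_{\max})$ of its mean, so by \textbf{C$\tilde{\bm{5}}$} this is at least $\delta^2e^{-4|\delta|M}/(32\pi_{\max}^2)$, uniformly in $\bm{J}$. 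Since $|\zeta|\le2B|w-1|$ is bounded, Lindeberg then holds for any fixed $\delta\neq0$. If $\delta$ grows with $n$, instead use $E\{(w-1)^2\}=E(w^2)-1$ together with a Lemma-1-type bound $E(w^2)\gtrsim\delta$.
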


\begin{proof}
    Using the same notation as the proof of Theorem \ref{thm:estimator_asymp},
    \begin{align*}
        \hat{\tau}^{\tilde{Q}(\delta)}_{\bm{J}(n)} - \tau^{\tilde{Q}(\delta)} &= \hat{\mu}^{\tilde{Q}(\delta)}_{\bm{J}(n)} - \frac1n\sum_{i=1}^n Y_i - (\mu^{\tilde{Q}(\delta)} - \mu) \\
        &= \frac1n\sum_{i=1}^n \mu_{\bm{J}(n), i} - Y_i - (E[\mu_{\bm{J}(n), i}] - \mu)  \\
        &\quad + \frac1n\sum_{i=1}^n  E[\mu_{\bm{J}(n), i}]- \mu^{\tilde{Q}(\delta)}  \\
        &\quad + \frac1n\sum_{i=1}^n  \hat{\mu}^{\tilde{Q}(\delta)}_{\bm{J}(n)} -   \mu_{\bm{J}(n), i} 
    \end{align*}
    where from the proof of Theorem \ref{thm:estimator_asymp} we know these last two terms of $o_p(\delta/n)$. Thus it remains to determine the distribution $\frac1n\sum_{i=1}^n \mu_{\bm{J}(n), i} - Y_i - (E[\mu_{\bm{J}(n), i}] - \mu)$.

Consider the triangular array $\{\upsilon_{\bm{J}(n), i} = \mu_{\bm{J}(n), i} - Y_i - (E[\mu_{\bm{J}(n), i}] - \mu), n \in \mathbb{Z}^+, i = 1, \ldots n\}$ where for every $n$, $\{\upsilon_{\bm{J}(n), i}\}_{i=1}^n$ are mutually independent. It is clear that $E[\upsilon_{\bm{J}(n), i}] = 0$ and that term $\frac1n\sum_{i=1}^n \mu_{\bm{J}(n), i} - Y_i - (E[\mu_{\bm{J}(n), i}] - \mu)$ is equivalent to $\sum_{i=1}^n \upsilon_{\bm{J}(n), i}$. Next, let  $\sigma_{\bm{J}(n)}^{\tau^{\tilde{Q}(\delta)^2}} = \frac1n\sum_{i=1}^n\Var(\upsilon_{\bm{J}(n), i})$. Then given that the Lindeberg condition holds, by the Lindeberg-Feller Central Limit Theorem we have that,
\begin{align*}
    \frac{\sqrt{n}}{\sigma_{\bm{J}(n)}^{\tau^{\tilde{Q}(\delta)}}}\sum_{i=1}^n \upsilon_{\bm{J}(n), i} \to N(0,1)
\end{align*}
It remains to show that the Lindeberg condition,
$\lim_{n\to\infty} \frac{1}{\sigma_{\bm{J}(n)}^{\tau^{\tilde{Q}(\delta)}}}\sum_{i=1}^n E[(\upsilon_{\bm{J}(n), i})^2I\{|\upsilon_{\bm{J}(n), i}| > \epsilon\sigma_{\bm{J}(n)}^{\tau^{\tilde{Q}(\delta)}}\}] = 0$ holds for any $\epsilon > 0$. To show that this condition holds, we will use similar techniques to the proof of Theorem 4 in \cite{schindl_incremental_2026} and our proof of Theorem \ref{thm:estimator_asymp}. By the Cauchy-Schwartz inequality, Lemma 1 of \cite{schindl_incremental_2026}, and the assumption that $|Y| \leq B$ with probability 1, we have that,
\begin{align*}
    \sigma_{\bm{J}(n)}^{\tau^{\tilde{Q}(\delta)^2}} &= \Var(\eta_{\bm{J}(n), i} + (Y_i - \mu)) \\
    &\geq (\sigma_{\bm{J}(n)}^{{\tilde{Q}(\delta)}} - \sqrt{\Var(Y_i)})^2 \\
    \sigma_{\bm{J}(n)}^{\tau^{\tilde{Q}(\delta)}} &\geq \sigma_{\bm{J}(n)}^{{\tilde{Q}(\delta)}} - \sqrt{\Var(Y_i)} \\
    &\geq \sqrt{n\delta}\frac{\pi^{1/2}_{\text{min}}\sigma_{\min}}{\pi_{\max}\sqrt{2K}} - |B|/2
\end{align*}
since $\sigma_{\bm{J}(n)}^{{\tilde{Q}(\delta)}} \geq \sqrt{n\delta}\frac{\pi^{1/2}_{\text{min}}\sigma_{\min}}{\pi_{\max}\sqrt{2K}} $ and $\Var(Y_i) \leq B^2/4$.
Next, given the assumption that $|Y| \leq B$ with probability 1,
\begin{align*}
   |\upsilon_{\bm{J}(n), i}| &\leq \underbrace{\left\vert \frac{\tilde{q}(B_{1,i}^{(2)}| \bm{X}_i,A_{\bm{J}(n), i}^{-\tilde{Q}(\delta)}(\cdot))}{f(B_{1,i}^{(2)}| \bm{X}_i, A_{\bm{J}(n), i}^{-\tilde{Q}(\delta)}(\cdot))}\left\{Y_i - \int_{\mathcal{B}_{1}^{(2)}}m(\bm{X}_i, b_1^{(2)}\gamma_j^{(2)}(\cdot) + A_{\bm{J}(n), i}^{-}(\cdot))\tilde{q}(b_1^{(2)}|\bm{X}_i,A_{\bm{J}(n), i}^{-\tilde{Q}(\delta)}(\cdot))db_1^{(2)}\right\} \right\vert}_{\Circled{1}}\\ 
   &\quad + \underbrace{\left\vert\int_{\mathcal{B}_{1}^{(2)}}m(\bm{X}_i, b_1^{(2)}\gamma_j^{(2)}(\cdot) + A_{\bm{J}(n), i}^{-}(\cdot))\tilde{q}(b_1^{(2)}|\bm{X}_i,A_{\bm{J}(n), i}^{-\tilde{Q}(\delta)}(\cdot))db_1^{(2)}\right\vert}_{\Circled{2}}\\
   &\quad + |E[\mu_{\bm{J}(n),i}]| \\
   &\quad + |Y| +  |\mu|.\:\: \text{Then,} \\
   \Circled{2} &\leq B\int_{\mathcal{B}_{1}^{(2)}}\tilde{q}(b_1^{(2)}|\bm{X}_i,A_{\bm{J}(n), i}^{-\tilde{Q}(\delta)}(\cdot))db_1^{(2)} = B.\:\: \text{We also have that,} \\
    \Circled{1} &\leq 2B\left\vert\frac{\tilde{q}(B_{1,i}^{(2)}| \bm{X}_i,A_{\bm{J}(n), i}^{-\tilde{Q}(\delta)}(\cdot))}{f(B_{1,i}^{(2)}|\bm{X}_i, A_{\bm{J}(n), i}^{-\tilde{Q}(\delta)}(\cdot))}\right\vert \leq \frac{2B}{\pi_{\min}}\left\vert\frac{\exp(\delta s)}{\int_{s}\exp(\delta s)}\right\vert \leq \frac{2B\delta\exp(\delta)}{\pi_{\min}(\exp(\delta)-1)} .\:\: \text{Thus,}\\
    |\upsilon_{\bm{J}(n), i}| &\leq \frac{4B\delta\exp(\delta)}{\pi_{\min}(\exp(\delta)-1)} + 4B.
\end{align*}
Then, we have that
\begin{align*}
   E\left[(\upsilon_{\bm{J}(n), i} )^2I\{|\upsilon_{\bm{J}(n), i} | > \epsilon\sigma_{\bm{J}(n)}^{\tau^{\tilde{Q}(\delta)}} \}\right] &\leq E\left[(\upsilon_{\bm{J}(n), i} )^2I\left\{\frac{4B\exp(\delta)}{\pi_{\min}(\exp(\delta)-1)} + 4B/\delta \right.\right.\\
   &\quad > \left.\left. \epsilon\sqrt{n/\delta}\frac{\pi^{1/2}_{\text{min}}\sigma_{\min}}{\pi_{\max}\sqrt{2K}} - |B|/(2\delta)\right\}\right] 
\end{align*}
where
\begin{align*}
   I\left\{\frac{4B\exp(\delta)}{\pi_{\min}(\exp(\delta)-1)} + 4B/\delta > \epsilon\sqrt{n/\delta}\frac{\pi^{1/2}_{\text{min}}\sigma_{\min}}{\pi_{\max}\sqrt{2K}} - |B|/(2\delta)\right\} &\to 0 
\end{align*}
given that $\sqrt{n/\delta} \to \infty$. We finish the proof by using the dominated convergence theorem. For all $n$, we have that 
\begin{align*}
    \frac{1}{\sigma_{\bm{J}(n)}^{\tau^{\tilde{Q}(\delta)}}} \sum_{i=1}^n (\upsilon_{\bm{J}(n), i} )^2I\{|\upsilon_{\bm{J}(n), i} | > \epsilon\sigma_{\bm{J}(n)}^{\tau^{\tilde{Q}(\delta)}} \} &\leq \frac{1}{\sigma_{\bm{J}(n)}^{\tau^{\tilde{Q}(\delta)}}} \sum_{i=1}^n (\upsilon_{\bm{J}(n), i} )^2, \text{   where} \\
    E\left[ \frac{1}{\sigma_{\bm{J}(n)}^{\tau^{\tilde{Q}(\delta)}}} \sum_{i=1}^n (\upsilon_{\bm{J}(n), i} )^2\right] &= \frac{1}{\sum_{i=1}^n E[(\upsilon_{\bm{J}(n), i} )^2]} \sum_{i=1}^n E[(\upsilon_{\bm{J}(n), i} )^2] < \infty
\end{align*}
such that by the dominated convergence theorem
\begin{align*}
   \lim_{n\to\infty} \frac{1}{\sigma_{\bm{J}(n)}^{\tau^{\tilde{Q}(\delta)}}}\sum_{i=1}^n E[(\upsilon_{\bm{J}(n), i})^2I\{|\upsilon_{\bm{J}(n), i}]| > \epsilon\sigma_{\bm{J}(n)}^{\tau{\tilde{Q}(\delta)} }\}] = 0.
\end{align*}
and the Lindeberg condition holds. 
It remains to determine $\sigma_{\bm{J}(n)}^{\tau^{\tilde{Q}(\delta)^2}}$. Note that since $\sigma_{\bm{J}(n)}^{\tau^{\tilde{Q}(\delta)^2}} = \frac1n\sum_{i=1}^n\Var(\upsilon_{\bm{J}(n), i}) = \Var(\upsilon_{\bm{J}(n), i})$ it suffices to determine $\Var(\upsilon_{\bm{J}(n), i})$. For clarity in notation, we drop the observation index. Then, note that
\begin{align*}
   \upsilon_{\bm{J}(n)} &= \frac{\tilde{q}_\delta(B_{1}^{(2)}|\bm{X}, A_{\bm{J}(n)}^{-\tilde{Q}(\delta)}(\cdot))}{f(B_{1}^{(2)}|\bm{X}, A_{\bm{J}(n)}^{-\tilde{Q}(\delta)}(\cdot))}\left\{Y- E_{\tilde{Q}}[m(\bm{X}, A_{\bm{J}(n)}(\cdot))|\bm{X}, A_{\bm{J}(n)}^{-}(\cdot)]\right\} + E_{\tilde{Q}}[m(\bm{X}, A_{\bm{J}(n)}(\cdot))|\bm{X}, A_{\bm{J}(n)}^{-}(\cdot)] \\
   &\quad - Y - (E[\mu_{\bm{J}(n), i}] - \mu) \\
   &= \frac{\tilde{q}_\delta(B_{1}^{(2)}| \bm{X},A_{\bm{J}(n)}^{-\tilde{Q}(\delta)}(\cdot))}{f(B_{1}^{(2)}|\bm{X}, A_{\bm{J}(n)}^{-\tilde{Q}(\delta)}(\cdot))}\left\{Y- m(\bm{X}, A_{\bm{J}(n)}(\cdot))\right\} \\
    &\quad + \frac{\tilde{q}_\delta(B_{1}^{(2)}| \bm{X},A_{\bm{J}(n)}^{-\tilde{Q}(\delta)}(\cdot))}{f(B_{1}^{(2)}|\bm{X}, A_{\bm{J}(n)}^{-\tilde{Q}(\delta)}(\cdot))}\left\{m(\bm{X}, A_{\bm{J}(n)}(\cdot))- E_{\tilde{Q}}[m(\bm{X}, A_{\bm{J}(n)}(\cdot))|\bm{X}, A_{\bm{J}(n)}^{-}(\cdot)]\right\} \\
    &\quad + E_{\tilde{Q}}[m(\bm{X}, A_{\bm{J}(n)}(\cdot))|\bm{X}, A_{\bm{J}(n)}^{-}(\cdot)] - E[\mu_{\bm{J}(n), i}] \\
    &\quad - (Y - m(\bm{X}, A_{\bm{J}(n)}(\cdot)))  \\
    &\quad - (m(\bm{X}, A_{\bm{J}(n)}(\cdot)) - \mu) \\
    &= D_{Y-} + D_{\tilde{q}, \mu} + D_{\mu^{\tilde{Q}}} - D_{\mu} \\
    D_{Y-} &= \left(\frac{\tilde{q}_\delta(B_{1}^{(2)}|\bm{X}, A_{\bm{J}(n)}^{-\tilde{Q}(\delta)}(\cdot))}{f(B_{1}^{(2)}|\bm{X}, A_{\bm{J}(n)}^{-\tilde{Q}(\delta)}(\cdot))}- 1\right)\left\{Y- m(\bm{X}, A_{\bm{J}(n)}(\cdot))\right\} \\
    D_{\tilde{q}, \mu} &= \frac{\tilde{q}_\delta(B_{1}^{(2)}|\bm{X}, A_{\bm{J}(n)}^{-\tilde{Q}(\delta)}(\cdot))}{f(B_{1}^{(2)}| \bm{X},A_{\bm{J}(n)}^{-\tilde{Q}(\delta)}(\cdot))}\left\{m(\bm{X}, A_{\bm{J}(n)}(\cdot))- E_{\tilde{Q}}[m(\bm{X}, A_{\bm{J}(n)}(\cdot))|\bm{X}, A_{\bm{J}(n)}^{-}(\cdot)]\right\} \\
    D_{\mu^{\tilde{Q}}} &= E_{\tilde{Q}}[m(\bm{X}, A_{\bm{J}(n)}(\cdot))|\bm{X}, A_{\bm{J}(n)}^{-}(\cdot)] - E[\mu_{\bm{J}(n), i}] \\
    D_{\mu} &= m(\bm{X}, A_{\bm{J}(n)}(\cdot)) - \mu
\end{align*}
Since $E[\upsilon_{\bm{J}(n)}] = 0$, we have that  $\Var(\upsilon_{\bm{J}(n)}) = E[(D_{Y-} + D_{\tilde{q}, \mu} + D_{\mu^{\tilde{Q}}} - D_{\mu})^2]$. From the proof of Theorem 1 in \cite{schindl_incremental_2026}, we know that $E[D_{Y-}D_{\tilde{q}, \mu}] = E[D_{Y-}D_{\mu^{\tilde{Q}}}] = E[D_{Y-}D_{\mu}] = E[D_{\tilde{q}, \mu} D_{\mu^{\tilde{Q}}}] = 0$, $E[D_{Y-}^2] = E\left[ \left(\frac{\tilde{q}_\delta(B_{1}^{(2)}|\bm{X}, A_{\bm{J}(n)}^{-\tilde{Q}(\delta)}(\cdot))}{f(B_{1}^{(2)}|\bm{X}, A_{\bm{J}(n)}^{-\tilde{Q}(\delta)}(\cdot))}- 1\right)^2 \Var(Y|\bm{X}, A_{\bm{J}(n)}(\cdot))\right]$, and $E[D_{\mu^{\tilde{Q}}}^2] = \Var(E_{\tilde{Q}}[m(\bm{X}, A_{\bm{J}(n)}(\cdot))|\bm{X}, A_{\bm{J}(n)}^{-}(\cdot)])$. It remains to determine $E[D_{\tilde{q}, \mu}D_{\mu}]$ and $E[D_{\mu^{\tilde{Q}}}D_{\mu}]$. Then,
\begin{align*}
   E[D_{\tilde{q}, \mu}D_{\mu}] &=  E\left[\frac{\tilde{q}_\delta(B_{1}^{(2)}| \bm{X},A_{\bm{J}(n)}^{-\tilde{Q}(\delta)}(\cdot))}{f(B_{1}^{(2)}| \bm{X},A_{\bm{J}(n)}^{-\tilde{Q}(\delta)}(\cdot))}\left\{m(\bm{X}, A_{\bm{J}(n)}(\cdot))- E_{\tilde{Q}}[m(\bm{X}, A_{\bm{J}(n)}(\cdot))|\bm{X}, A_{\bm{J}(n)}^{-}(\cdot)]\right\} \right. \\
   &\quad \left.\times \{m(\bm{X}, A_{\bm{J}(n)}(\cdot)) - \mu \}\right] \\
   &= E\left[\frac{\tilde{q}_\delta(B_{1}^{(2)}| \bm{X},A_{\bm{J}(n)}^{-\tilde{Q}(\delta)}(\cdot))}{f(B_{1}^{(2)}| \bm{X},A_{\bm{J}(n)}^{-\tilde{Q}(\delta)}(\cdot))}m(\bm{X}, A_{\bm{J}(n)}(\cdot))^2 \right] \\
   &- E\left[\frac{\tilde{q}_\delta(B_{1}^{(2)}|\bm{X}, A_{\bm{J}(n)}^{-\tilde{Q}(\delta)}(\cdot))}{f(B_{1}^{(2)}|\bm{X}, A_{\bm{J}(n)}^{-\tilde{Q}(\delta)}(\cdot))}m(\bm{X}, A_{\bm{J}(n)}(\cdot))E_{\tilde{Q}}[m(\bm{X}, A_{\bm{J}(n)}(\cdot))|\bm{X}, A_{\bm{J}(n)}^{-}(\cdot)] \right] \\
   &= E\left[E\left[\frac{\tilde{q}_\delta(B_{1}^{(2)}| \bm{X},A_{\bm{J}(n)}^{-\tilde{Q}(\delta)}(\cdot))}{f(B_{1}^{(2)}| \bm{X},A_{\bm{J}(n)}^{-\tilde{Q}(\delta)}(\cdot))}m(\bm{X}, A_{\bm{J}(n)}(\cdot))^2 \middle\vert \bm{X}, A_{\bm{J}(n)}^{-}(\cdot)\right]\right] \\
   &- E\left[E\left[\frac{\tilde{q}_\delta(B_{1}^{(2)}| \bm{X},A_{\bm{J}(n)}^{-\tilde{Q}(\delta)}(\cdot))}{f(B_{1}^{(2)}| \bm{X},A_{\bm{J}(n)}^{-\tilde{Q}(\delta)}(\cdot))}m(\bm{X}, A_{\bm{J}(n)}(\cdot))E_{\tilde{Q}}[m(\bm{X}, A_{\bm{J}(n)}(\cdot))|\bm{X}, A_{\bm{J}(n)}^{-}(\cdot)] \middle\vert \bm{X}, A_{\bm{J}(n)}^{-}(\cdot)\right]\right] \\
   &= E\left[E_{\tilde{Q}}[m(\bm{X}, A_{\bm{J}(n)}(\cdot))^2|\bm{X}, A_{\bm{J}(n)}^{-}(\cdot)] - E_{\tilde{Q}}[m(\bm{X}, A_{\bm{J}(n)}(\cdot))|\bm{X}, A_{\bm{J}(n)}^{-}(\cdot)]^2\right] \\
   &= E\left[\Var_{\tilde{Q}}(m(\bm{X}, A_{\bm{J}(n)}(\cdot))| \bm{X}, A_{\bm{J}(n)}^{-}(\cdot))\right]. 
\end{align*}
Next,
\begin{align*}
   E[D_{\mu^{\tilde{Q}}}D_{\mu}] &= E\left[\left(E_{\tilde{Q}}[m(\bm{X}, A_{\bm{J}(n)}(\cdot))|\bm{X}, A_{\bm{J}(n)}^{-}(\cdot)] - E[\mu_{\bm{J}(n), i}]\right)(m(\bm{X}, A_{\bm{J}(n)}(\cdot)) - \mu)\right]  \\
   &= \Cov\left(E_{\tilde{Q}}[m(\bm{X}, A_{\bm{J}(n)}(\cdot))|\bm{X}, A_{\bm{J}(n)}^{-}(\cdot)], m(\bm{X}, A_{\bm{J}(n)}(\cdot))\right).
\end{align*}
Then, note that
\begin{align*}
    E[D_{\mu^{\tilde{Q}}}^2 + D_{\mu}^2 - 2D_{\mu^{\tilde{Q}}}D_{\mu} - 2D_{\tilde{q}, \mu}D_{\mu}] &= \Var\left(E_{\tilde{Q}}[m(\bm{X}, A_{\bm{J}(n)}(\cdot))|\bm{X}, A_{\bm{J}(n)}^{-}(\cdot)]\right)  + \Var(m(\bm{X}, A_{\bm{J}(n)}(\cdot))) \\
    &\quad -2\Cov\left(E_{\tilde{Q}}[m(\bm{X}, A_{\bm{J}(n)}(\cdot))|\bm{X}, A_{\bm{J}(n)}^{-}(\cdot)], m(\bm{X}, A_{\bm{J}(n)}(\cdot))\right) \\
    &\quad -2E\left[\Var_{\tilde{Q}}(m(\bm{X}, A_{\bm{J}(n)}(\cdot))| \bm{X}, A_{\bm{J}(n)}^{-}(\cdot))\right] \\
    &= \Var\left(m(\bm{X}, A_{\bm{J}(n)}(\cdot)) - E_{\tilde{Q}}[m(\bm{X}, A_{\bm{J}(n)}(\cdot))|\bm{X}, A_{\bm{J}(n)}^{-}(\cdot)]\right) \\
    &\quad - 2E\left[\Var_{\tilde{Q}}(m(\bm{X}, A_{\bm{J}(n)}(\cdot))| \bm{X}, A_{\bm{J}(n)}^{-}(\cdot))\right].
\end{align*}
Thus, we have that
\begin{align*}
        \sigma^{\tau^{\tilde{Q}(\delta)^2}}_{\bm{J}(n)} &= E\left[\left(\frac{\tilde{q}(B_{1}^{(2)}|\bm{X}, A_{\bm{J}(n)}^{-\tilde{Q}(\delta)}(\cdot))}{f(B_{1}^{(2)}| \bm{X},A_{\bm{J}(n)}^{-\tilde{Q}(\delta)}(\cdot))} - 1\right)^2\Var[Y|\bm{X}, A_{\bm{J}(n)}(\cdot)] \right.\\
        &\quad + \left.\left(\frac{\tilde{q}(B_{1}^{(2)}| \bm{X},A_{\bm{J}(n)}^{-\tilde{Q}(\delta)}(\cdot))}{f(B_{1}^{(2)}|\bm{X}, A_{\bm{J}(n)}^{-\tilde{Q}(\delta)}(\cdot))}\right)^2\left[m(\bm{X}, A_{\bm{J}(n)}(\cdot)) - E_{\tilde{Q}}[m(\bm{X}, A_{\bm{J}(n)}(\cdot))|\bm{X}_i, A_{\bm{J}(n)}^{-}(\cdot)]\right]^2\right] \\
        &\quad + \Var\left(m(\bm{X}, A_{\bm{J}(n)}(\cdot)) - E_{\tilde{Q}}[m(\bm{X}, A_{\bm{J}(n)}(\cdot))|\bm{X}, A_{\bm{J}(n)}^{-}(\cdot)]\right) \\
        &\quad - 2E\left[\Var_{\tilde{Q}}(m(\bm{X}, A_{\bm{J}(n)}(\cdot))| \bm{X}, A_{\bm{J}(n)}^{-}(\cdot))\right].
    \end{align*}

\end{proof}

\subsection{Corollary \ref{thm:coro_contrast_asymp}}
\begin{suppcoro}
   Assume the conditions and assumptions of Corollary \ref{thm:coro_estimator_asymp} hold. Then,
    \begin{align*}
        \frac{\sqrt{n}}{\sigma^{\tau^{\tilde{Q}(\delta)}}}( \hat{\tau}^{\tilde{Q}(\delta)} - \tau^{\tilde{Q}(\delta)}) &\to N(0,1), \:\:\: \text{for}\:\: \tilde{q} =q, q^*.
    \end{align*}
    \label{thm:coro_contrast_asymp}
\end{suppcoro}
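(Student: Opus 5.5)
The proof of Corollary \ref{thm:coro_contrast_asymp} will repeat the argument for Theorem \ref{thm:contrast_asymp}, now with nuisance functions conditioned on the true functional treatment rather than on $A_{\bm{J}(n)}(\cdot)$. Write $\hat{\tau}^{\tilde{Q}(\delta)} - \tau^{\tilde{Q}(\delta)} = \{\hat{\mu}^{\tilde{Q}(\delta)} - \mu^{\tilde{Q}(\delta)}\} - \{\frac1n\sum_{i=1}^n Y_i - \mu\}$. Split the first bracket exactly as in the proof of Theorem \ref{thm:estimator_asymp}, using the three terms \eqref{eq:est_term1}--\eqref{eq:est_term3}, except that every $\mu_{\bm{J}(n),i}$ is replaced by its limiting counterpart $\mu_{i}$. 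This counterpart is built from $\tilde{q}_\delta(B_{1,i}^{(2)}|\bm{X}_i, A_i^{-\tilde{Q}}(\cdot))$, $f(B_{1,i}^{(2)}|\bm{X}_i, A_i^{-\tilde{Q}}(\cdot))$ and $m(\bm{X}_i, A_i(\cdot))$. This is legitimate because the conditional density of the univariate $B_1^{(2)}$ given $A^{-\tilde{Q}}(\cdot)$ is well defined, as noted in the main text.

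The key steps, in order, are these.

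\textbf{Step 1 (bias term).} Show the analogue of \eqref{eq:est_term2} is exactly zero, which removes the need for \textbf{A7}. Taking limits in Theorem \ref{thm:dr_identification}, and using that conditions \textbf{C$\tilde{\bm{3}}$}--\textbf{C$\tilde{\bm{4}}$} make the limit pass inside, gives $E[\mu_i] = E[E_{\tilde{Q}}\{m(\bm{X},A(\cdot))|\bm{X},A^{-}(\cdot)\}] = \mu^{\tilde{Q}(\delta)}$. The mean-zero property of the weighted residual follows by iterated expectations on $(\bm{X}, A^{-}(\cdot))$, just as in the final display of the proof of Theorem \ref{thm:dr_identification}.

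\textbf{Step 2 (remainder term).} The analogue of \eqref{eq:est_term3} is the cross-fit empirical process plus second-order remainder of \cite{schindl_incremental_2026}. Since \textbf{A$\tilde{\bm{4}}$}--\textbf{A$\tilde{\bm{6}}$} and \textbf{C$\tilde{\bm{5}}$}--\textbf{C$\bm{6}$} are assumed at the limit $\bm{J}\to\infty$, their Theorem 4 applies verbatim, with conditioning variable $(\bm{X}, A^{-\tilde{Q}}(\cdot))$ in place of a finite covariate vector, and gives $o_p(\sqrt{\delta/n})$. The sample mean $\frac1n\sum Y_i - \mu$ needs no treatment beyond being included in the linear term.

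\textbf{Step 3 (CLT).} Take $\upsilon_i = \mu_i - Y_i - (\mu^{\tilde{Q}(\delta)} - \mu)$, now i.i.d. with no dependence on $n$ through $\bm{J}$, though still through $\delta$. Apply the Lindeberg--Feller CLT with the same Lindeberg verification as in Theorem \ref{thm:contrast_asymp}:
\begin{itemize}
\item the bound $|\upsilon_i| \le 4B\delta e^\delta/\{\pi_{\min}(e^\delta-1)\} + 4B$, from \textbf{C7} and \textbf{A$\tilde{\bm{4}}$};
\item the lower bound on $\sigma^{\tau^{\tilde{Q}(\delta)}}$ inherited from Lemma 1 of \cite{schindl_incremental_2026} minus $\sqrt{\Var(Y)}$;
\item \textbf{C8}.
\end{itemize}
The variance $\sigma^{\tau^{\tilde{Q}(\delta)^2}}$ is then computed by the same $D_{Y-}, D_{\tilde{q},\mu}, D_{\mu^{\tilde{Q}}}, D_\mu$ decomposition, with $A_{\bm{J}(n)}$ replaced by $A$. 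Combining this with Steps 1--2 via Slutsky gives the result.

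\textbf{Main obstacle.} I expect the hardest step to be Step 2, namely justifying that Theorem 4 of \cite{schindl_incremental_2026} transfers when the conditioning set is infinite dimensional. Their remainder argument only uses pointwise-in-$b_1^{(2)}$ bounds integrated against the law of the conditioning variables, in the mixed $L^\infty_{b_1^{(2)}},L^2$ norm. I would therefore first check that each of their inequalities uses only such integrals with respect to $dF(\bm{x}, a^{-\tilde{Q}}(\cdot))$, which is a genuine probability measure on a function space, and never a Lebesgue density of the covariates. If so, the argument goes through unchanged.
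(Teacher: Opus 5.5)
Your plan matches the argument the paper intends. The paper gives no separate proof of this corollary, and rerunning the proof of Theorem~\ref{thm:contrast_asymp} with $A_{\bm{J}(n)}(\cdot)$ replaced by $A(\cdot)$ is exactly your Steps 1--3: the truncation-bias term \eqref{eq:est_term2} vanishes, so \textbf{A7} is no longer needed.

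One small fix in Step 3: for small fixed $\delta$, as in the application, the bound $\sigma^{\tilde{Q}(\delta)}-\sqrt{\Var(Y)}$ can be negative. Instead note that $D_{Y-}$ is orthogonal to every function of $(\bm{X},A(\cdot))$. This gives $(\sigma^{\tau^{\tilde{Q}(\delta)}})^2\geq\sigma^2_{\min}E[\{w^{\tilde{Q}(\delta)}-1\}^2]>0$ under \textbf{C$\bm{6}$} and \textbf{A$\tilde{\bm{4}}$}, after which the i.i.d.\ CLT and Slutsky finish the argument.
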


\section{Cross-fit estimator form}
\label{sec:supp_cross_fit}
Consider partitioning the covariate set into $K$ where the $k$ fold has $n_k$ observations for $k \in \{1, \ldots, K\}$ . Let $\mathcal{I}_k$ be the observations in fold $k$ and let $\mathcal{I}_{-k}$ be the observations that are not in fold $k$. The estimate in fold $k$ is:
\begin{align*}
    \hat{\mu}^{\tilde{Q}(\delta)}_{\bm{J}, k}&= \frac1{n_k}\sum_{i \in \mathcal{I}_k} \frac{\hat{\tilde{q}}_{\delta, -k}(B_{1,i}^{(2)}|\bm{X}_i, A_{\bm{J},i}^{-\tilde{Q}}(\cdot))}{\hat{f}_{-k}(B_{1,i}^{(2)}|\bm{X}_i, A_{\bm{J},i}^{-\tilde{Q}}(\cdot))}\left\{Y_i \right.\\
    &\quad \left.- \int_{\mathcal{B}_{1}^{(2)}}\hat{m}_{-k}\left(\bm{X}_i, b_1^{(2)}\gamma_1^{(2)}(\cdot) + A_{\bm{J},i}^{-}(\cdot)\right)\hat{\tilde{q}}_{\delta, -k}\left(b_1^{(2)}|\bm{X}_i, A_{\bm{J},i}^{-\tilde{Q}}(\cdot))\right)db_1^{(2)}\right\} \\
    &\quad + \int_{\mathcal{B}_{1}^{(2)}}\hat{m}_{-k}\left(\bm{X}_i, b_1^{(2)}\gamma_1^{(2)}(\cdot) + A_{\bm{J},i}^{-}(\cdot)\right)\hat{\tilde{q}}_{\delta, -k}\left(b_1^{(2)}|\bm{X}_i, A_{\bm{J},i}^{-\tilde{Q}}(\cdot))\right)db_1^{(2)}, 
\end{align*}
where $\hat{m}_{-k}, \hat{\tilde{q}}_{\delta, -k},$ and $\hat{f}_{-k}$ are estimated on $\mathcal{I}_{-k}$. Then, the cross-fit estimator is an average across the folds:
\begin{align*}
    \hat{\mu}^{\tilde{Q}(\delta)}_{\bm{J}} &= \frac1K \sum_{k=1}^K \hat{\mu}^{\tilde{Q}(\delta)}_{\bm{J}, k}.
\end{align*}

\section{Additional simulation methods}
\label{sec:supp_sim_methods}
We drop the observation index for clarity of presentation. We simulate $V_1$ - $V_{10}$ from a truncated normal distribution with $E[V_i] = 0$, $\Var(V_i) = 1$ and $\text{Cov}(V_i, V_j) = 0$ for all $i\neq j = 1, \ldots, 10$. We then take $X_1$ - $X_{5} =$ $V_1$ - $V_{5}$ and $X_i = I[V_i < 0]$ for $i = 6, \ldots 10$ to obtain a mix of categorical and continuous covariates. We simulate $J_1 = 2,5,8$, $J_2 = 2$ and $J_3 = 2,5,8$ true basis functions over the three time intervals $[0, 149]$, $[t_1 = 150, t_2 = 200]$, and $[201, T= 300]$. Each true basis function, $\Psi^{(k)}_j$, was selected as a polynomial of order $j+1$ for $j = 1, \ldots, J_k$ and $k=1,2,3$. We consider $\gamma^{(2)}_1 = \Psi^{(2)}_1$. Let $l(\cdot)$ be a linear combination of $(\cdot)$. For scenarios evaluating $q$, the true basis coefficients, $\{\mathcal{A}^{(k)}\}_j^{J_k}$, are simulated sequentially in the following manner: 1) $\mathcal{A}^{(1)}_j \overset{\text{iid}}{\sim} TN(l^{(1)}_j\left(\{X_i\}_{i=1}^{10}\right),1, -1, 1)$ for $j=1, \ldots, J_1$;
2) $\mathcal{A}^{(3)}_j \overset{\text{iid}}{\sim} TN(l^{(3)}_j\left(\{\mathcal{A}^{(1)}_j\}_{j=1}^{J_1}\right),1, -1, 1)$ for $j=1, \ldots, J_3$; 3) $\mathcal{A}^{(2)}_2 \overset{\text{iid}}{\sim} TN(l^{(2)}_2\left(\{\mathcal{A}^{(3)}_j\}_{j=1}^{J_3}\right),1, -1, 1)$; and 4) $\mathcal{A}^{(2)}_1 \overset{\text{iid}}{\sim} TN(l^{(2)}_1\left(\{\mathcal{A}^{(3)}_j\}_{j=1}^{J_3}\right), \sigma, L, U)$.
We explore $\sigma = 10, 12$ to ensure sufficient residual variance for modeling the conditional density of $B_1^{(2)} = \mathcal{A}^{(2)}_1$. The truncation bounds $L$ and $U$ are selected to based on the minimum and maximum simulated values of $\mathcal{A}^{(2)}_1$. For scenarios evaluating  $q^*$, $\mathcal{A}^{(2)}_1$ is simulated differently to explore estimator performance when assumption \textbf{A$\bm{3}^*$} does and does not hold. For these scenarios, $\mathcal{A}^{(2)}_1 \overset{\text{iid}}{\sim} TN(l^{(2)}_1\left(\{\mathcal{A}^{(1)}_j\}_{j=1}^{J_1}\right) + cA^{(3)}_1, \sigma, L, U)$  for $c = 0,3$ and $SD(A^{(3)}_1) = 1, 3$. 
For all scenarios, nuisance function estimation tends to be more challenging as $J_1$ and $J_3$ increase.
We simulate the observed data as $A(t) + \epsilon_t$ where $A(t)$ is the true function and $\epsilon_t \sim N(0, 0.25)$  for $t = 1, 2, \ldots, 300$. We simulate the outcome as $Y \sim N(\mu(\bm{X}, A(\cdot)), 0.5)$ where $\mu(\bm{X}, A(\cdot)) = l\left(\bm{X}, \{\mathcal{A}^{(1)}_j\}_{j=1}^{J_1}, \{\mathcal{A}^{(3)}_j\}_{j=1}^{J_3}\right) + l_1\left(\{\mathcal{A}^{(2)}_j\}_{j=1}^{2}\right) + I\left[l\left(\bm{X}, \{\mathcal{A}^{(1)}_j\}_{j=1}^{J_1}, \{\mathcal{A}^{(3)}_j\}_{j=1}^{J_3} \right)\geq 0\right]l_2\left(\{\mathcal{A}^{(2)}_j\}_{j=1}^{2}\right)$. We include an interaction between the function at different time periods in $\mu(\bm{X}, A(\cdot))$ to add temporal complexity between the relationship between $Y$ and $A(\cdot)$.

\section{Additional NHANES physical activity data applications methods}
\label{sec:supp_real_data}

We remove any days that have over 14 hours of estimated accelerometer non-wear time or over 14 hours with data quality concerns \citep{leroux_organizing_2019}. We then remove any individuals who did not complete the full 7 days of quality data collection. After aggregating the full 7 days of data to a single 24-hour trajectory, we remove any observations with extreme MIMS values (MIMS $> 60$, less than 1\% of observations) or with missing data for any covariates. We implement our proposed methods to 7AM-10AM, 1PM-4PM, and 5PM-8PM. We choose $\gamma_1^{(2)} = \gamma_{1,3}(t)$ for 7AM-10AM as this function peaks earlier in the time periods which corresponds to the times where the SoFR coefficient is not significant (supplementary Figure \ref{fig:nhanes_sofr}). We choose $\gamma_1^{(2)} = \gamma_{1,1}(t)$ for 1PM-4PM as the SoFR coefficient is relatively constant across this time period. We choose $\gamma_1^{(2)} = \gamma_{1.5,1}(t)$ for 5PM-8PM as this function peaks later in the time period, which corresponds to the times where the SoFR coefficient is closer to zero.  To relate choices of $\delta$ to changes in physical activity, we compute the expected increase in MIMS over the corresponding time period by individual for each $\delta$. To increase the interpretability of the results, we transform the MIMS units to step counts by an individual-specific scaling factor (i.e., the ratio of the total step count over each time period to the total MIMS over each time period). We obtain minute-level step count data computed from the raw accelerometer data from
\cite{PhysioNet-minute-level-step-count-nhanes-1.0.0}. We choose to use step counts computed with the Oak algorithm \citep{straczkiewicz_one-size-fits-most_2023}, as it is designed to be a ``one-size-fits-most'' step counting algorithm. For further details on the step count computation, see \cite{koffman_comparing_2025}. We explore $\delta = 0.0075-0.05$ for all time periods as these $\delta$ values produced expected step count increases that could realistically be achieved. We implement the same estimation procedure as in Section \ref{sec:simulation4} and compute $\hat{\tau}_{\bm{J}(n)}^{Q(\delta)}$ and corresponding 95\% confidence bounds.

\begin{figure}
    \centering
    \includegraphics[width=0.8\linewidth]{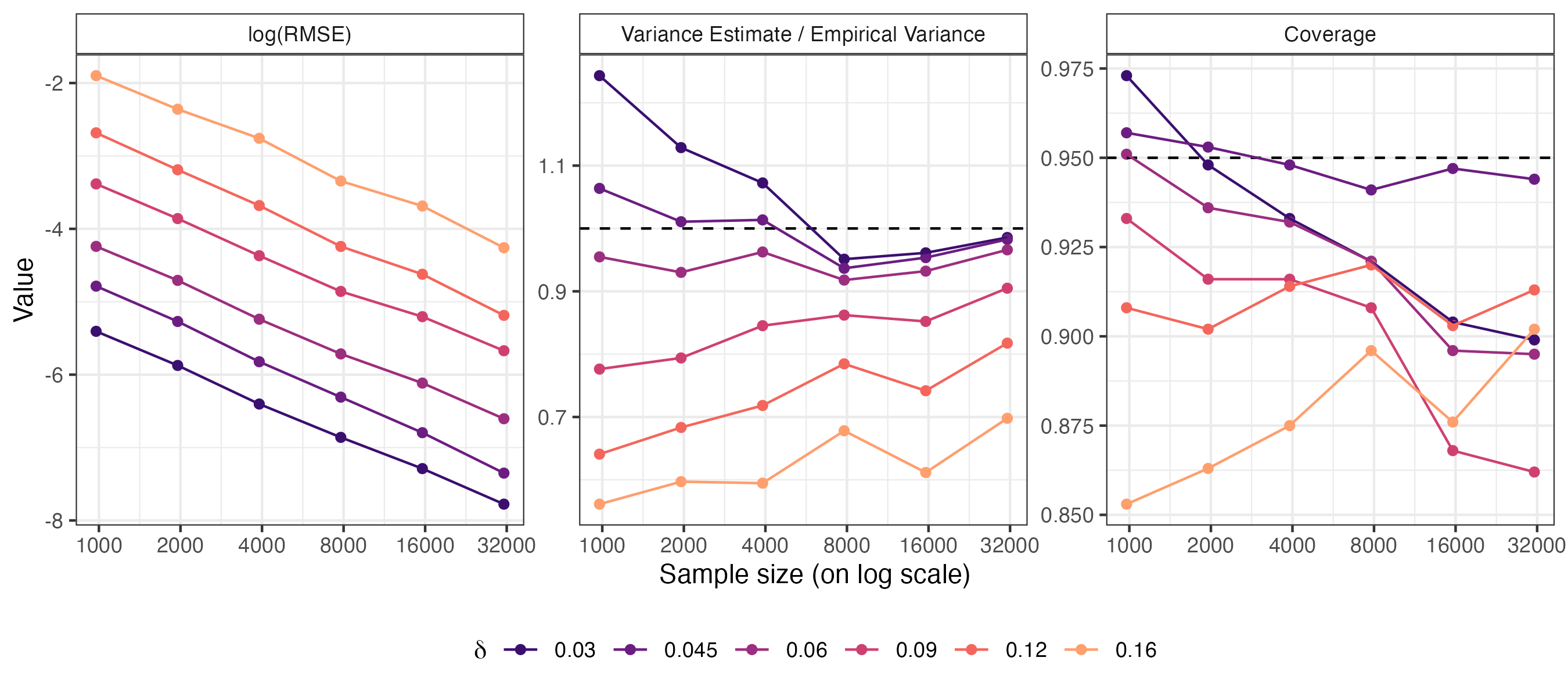}
    \caption{\footnotesize Estimator log(RMSE), the ratio of the average variance estimate and empirical estimator variance, and coverage for $\hat{\tau}^{Q(\delta)}_{\bm{J}(n)}$ by sample size and $\delta$. Dashed lines indicate the desired or nominal values for the right two panels.}
    \label{fig:sim_rate_tau}
\end{figure}

\begin{figure}
    \centering
    \includegraphics[width=0.8\linewidth]{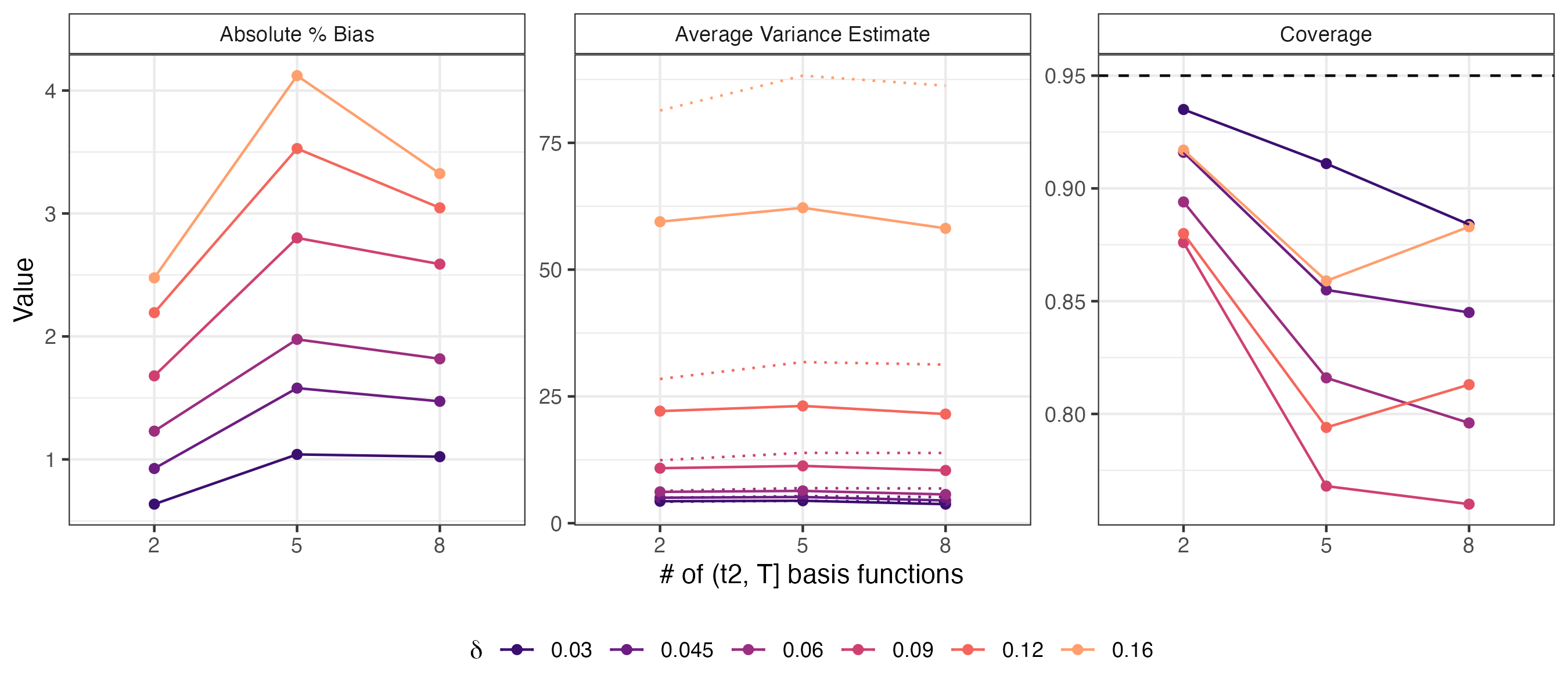}
    \caption{\footnotesize Estimator absolute percent bias, average variance estimate, and coverage by $J_3$ and $\delta$ for $\sigma = 10$, $J_1 = 8$, and 99\% of variance explained by the FPCA approximation for $\hat{\mu}^{Q(\delta)}_{\bm{J}(n)}$. The dotted line in the middle panel is the empirical estimator variance and the dashed line in the right panel is the nominal coverage level.}
    \label{fig:sim_q10_b1_8}
\end{figure}

\begin{figure}
    \centering
    \includegraphics[width=0.8\linewidth]{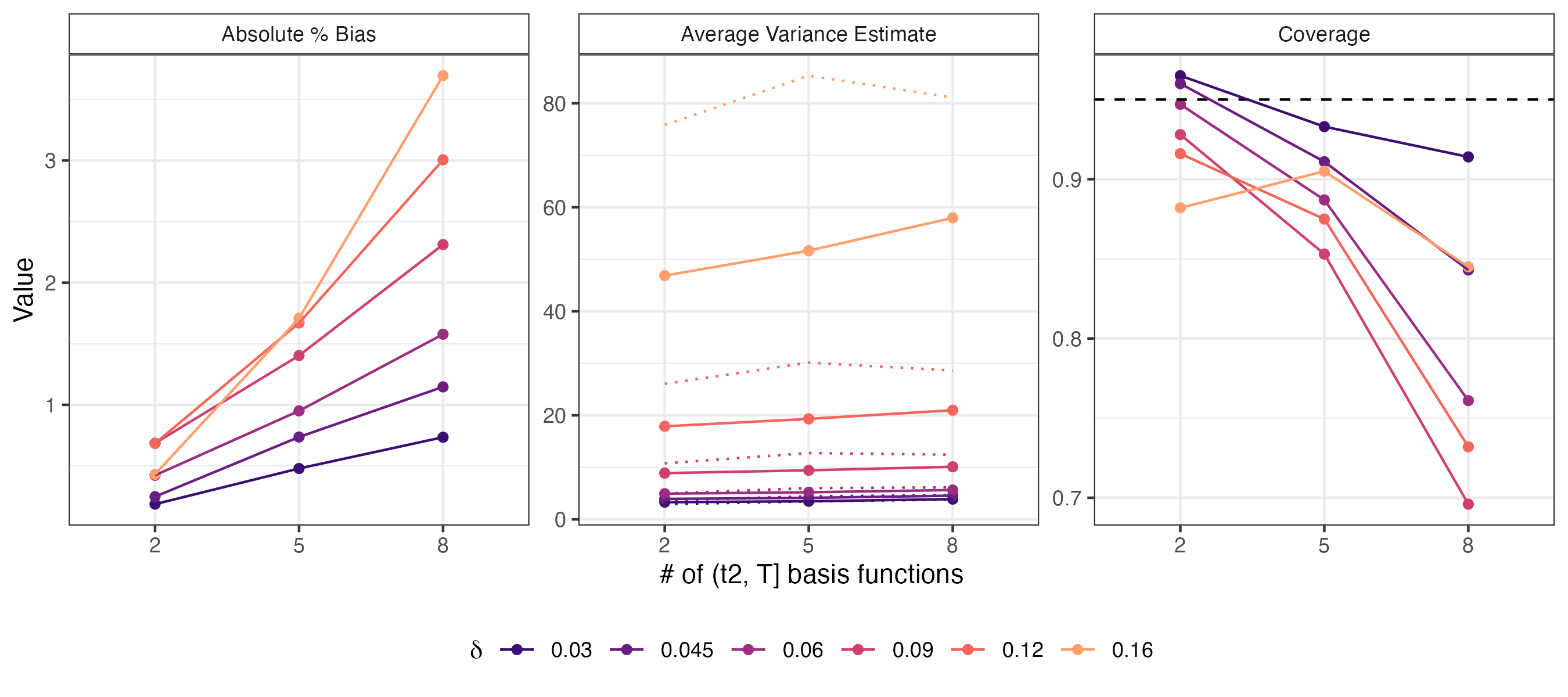}
   \caption{\footnotesize Estimator absolute percent bias, average variance estimate, and coverage by $J_3$ and $\delta$ for $\sigma = 10$, $J_1 = 2$, and 99.9\% of variance explained by the FPCA approximation for $\hat{\mu}^{Q(\delta)}_{\bm{J}(n)}$. The dotted line in the middle panel is the empirical estimator variance and the dashed line in the right panel is the nominal coverage level.}
    \label{fig:sim_q_q10_fpca999}
\end{figure}

\begin{figure}
    \centering
    \includegraphics[width=0.8\linewidth]{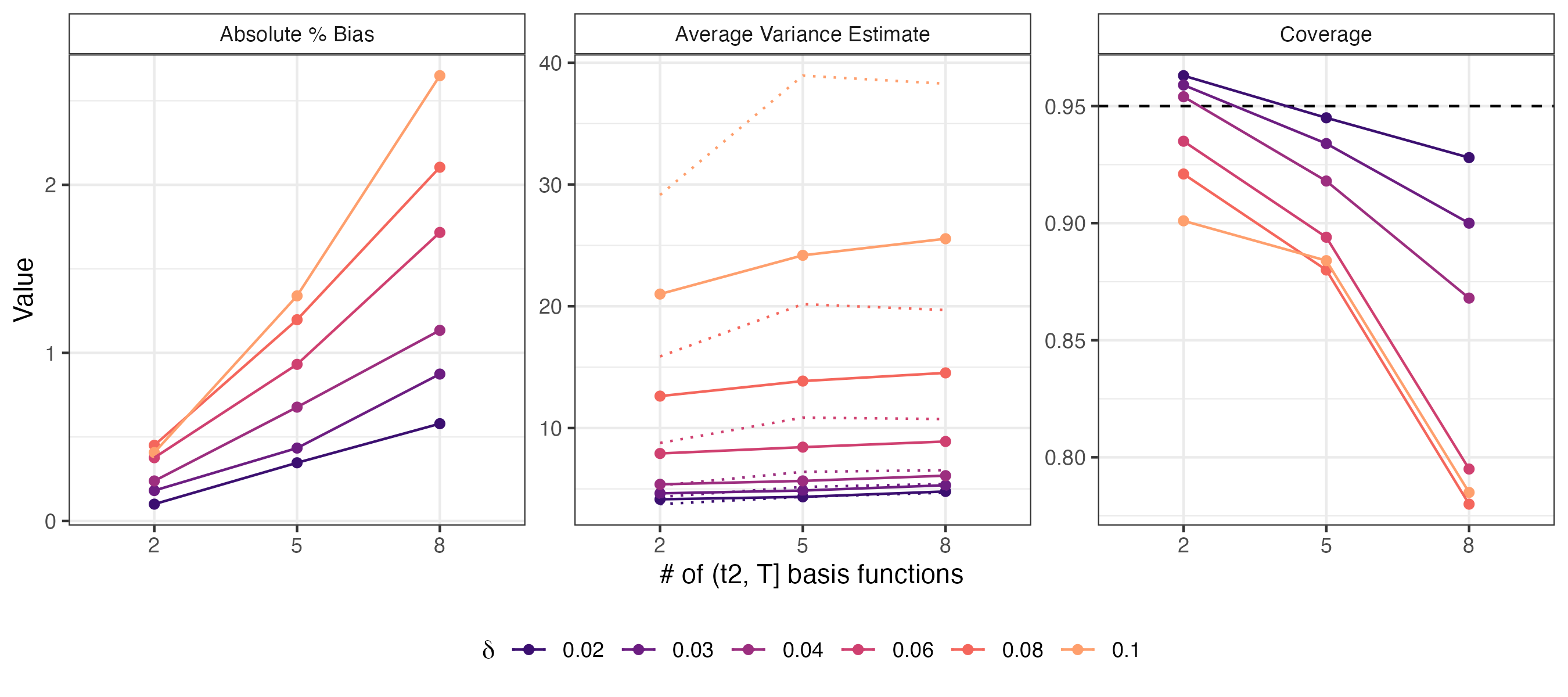}
    \caption{\footnotesize Estimator absolute percent bias, average variance estimate, and coverage by $J_3$ and $\delta$ for $\sigma = 12$, $J_1 = 2$, and 99\% of variance explained by the FPCA approximation for $\hat{\mu}^{Q(\delta)}_{\bm{J}(n)}$. The dotted line in the middle panel is the empirical estimator variance and the dashed line in the right panel is the nominal coverage level.}
    \label{fig:sim_q_q12}
\end{figure}

\begin{figure}
    \centering
    \includegraphics[width=0.8\linewidth]{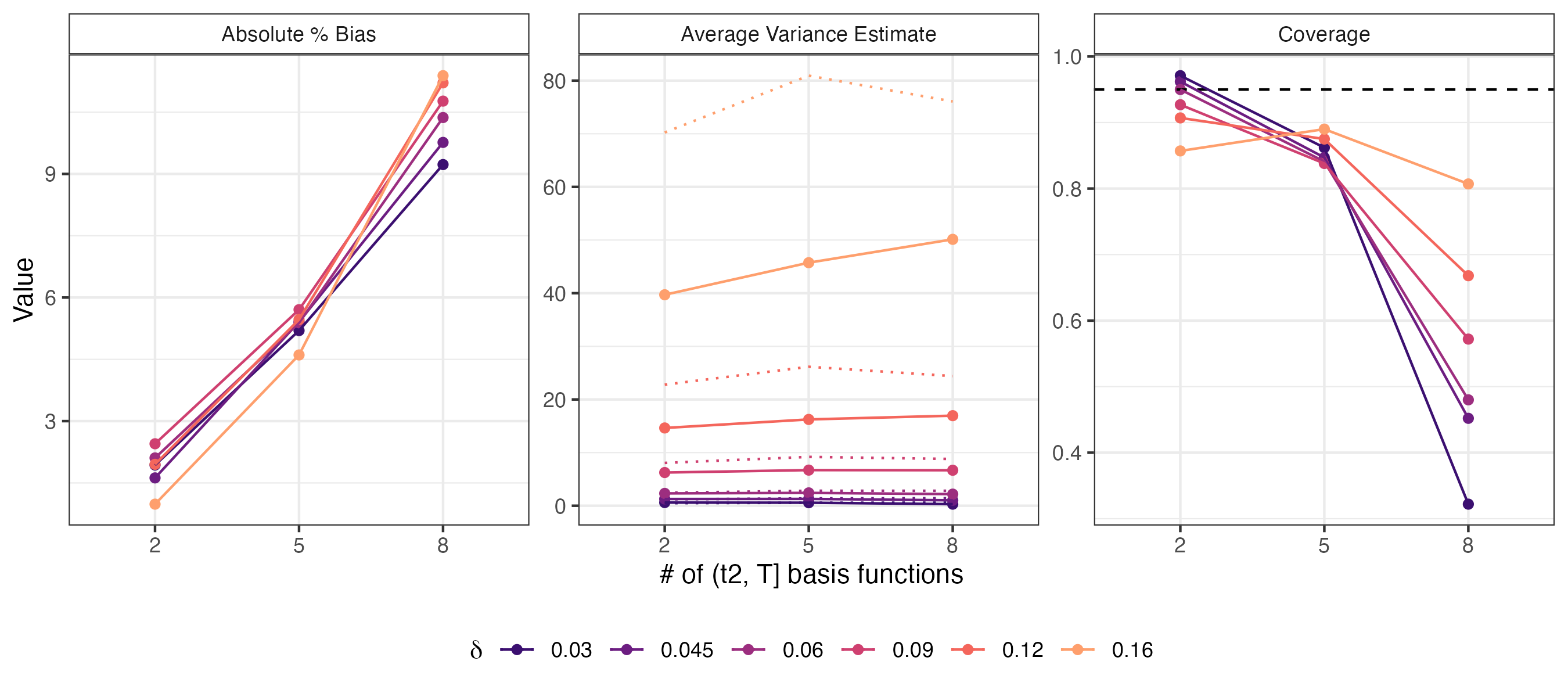}
    \caption{\footnotesize Estimator absolute percent bias, average variance estimate, and coverage by $J_3$ and $\delta$ for $\sigma = 10$, $J_1 = 2$, and 99\% of variance explained by the FPCA approximation for $\hat{\tau}^{Q(\delta)}_{\bm{J}(n)}$. The dotted line in the middle panel is the empirical estimator variance and the dashed line in the right panel is the nominal coverage level.}
     \label{fig:sim_q_q10_tau}
\end{figure}

\begin{figure}
    \centering
    \includegraphics[width=0.8\linewidth]{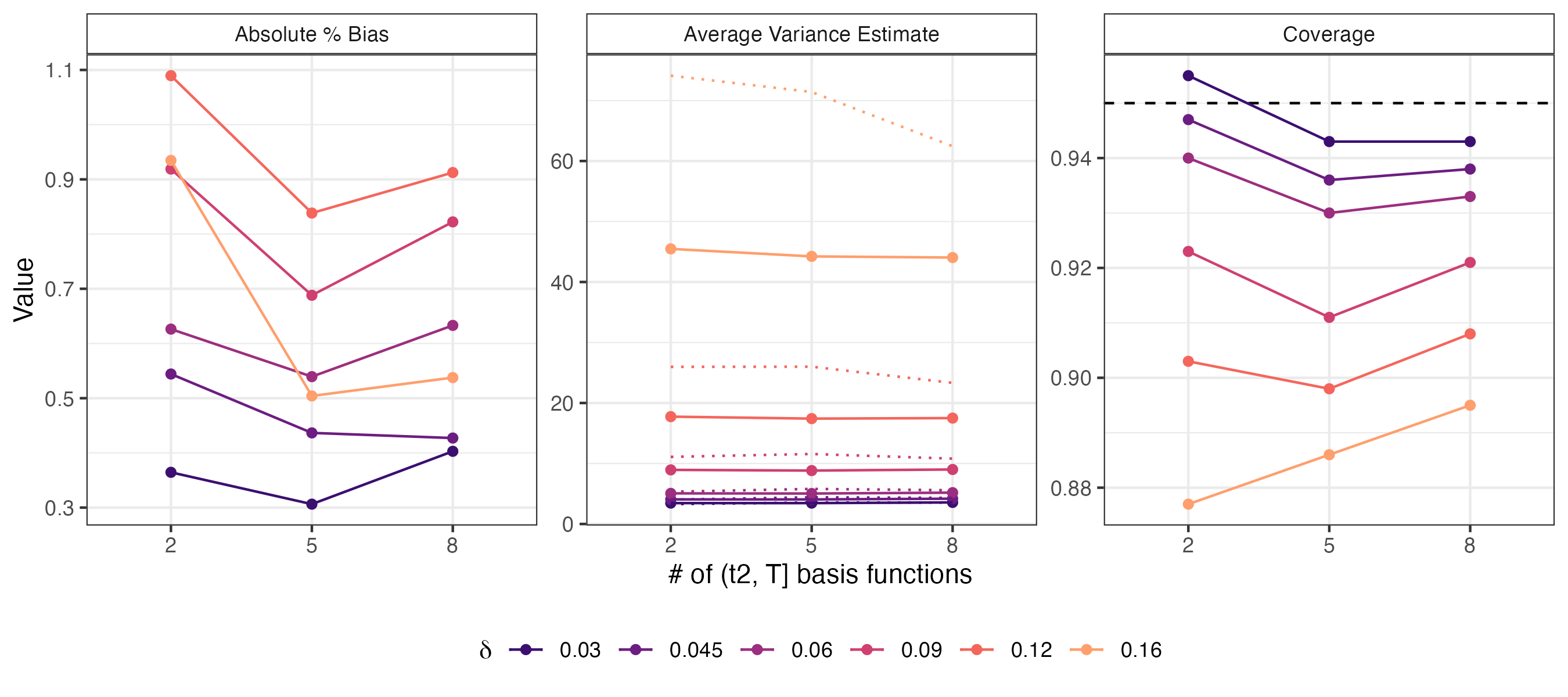}
    \caption{\footnotesize Estimator absolute percent bias, average variance estimate, and coverage by $J_3$ and $\delta$ for $\sigma = 10$, $J_1 = 2$, and 99\% of variance explained by the FPCA approximation for $\hat{\mu}^{Q^*(\delta)}_{\bm{J}(n)}$. The dotted line in the middle panel is the empirical estimator variance and the dashed line in the right panel is the nominal coverage level.}
    \label{fig:sim_qstar_b3_0}
\end{figure}

\begin{figure}
    \centering
    \includegraphics[width=0.8\linewidth]{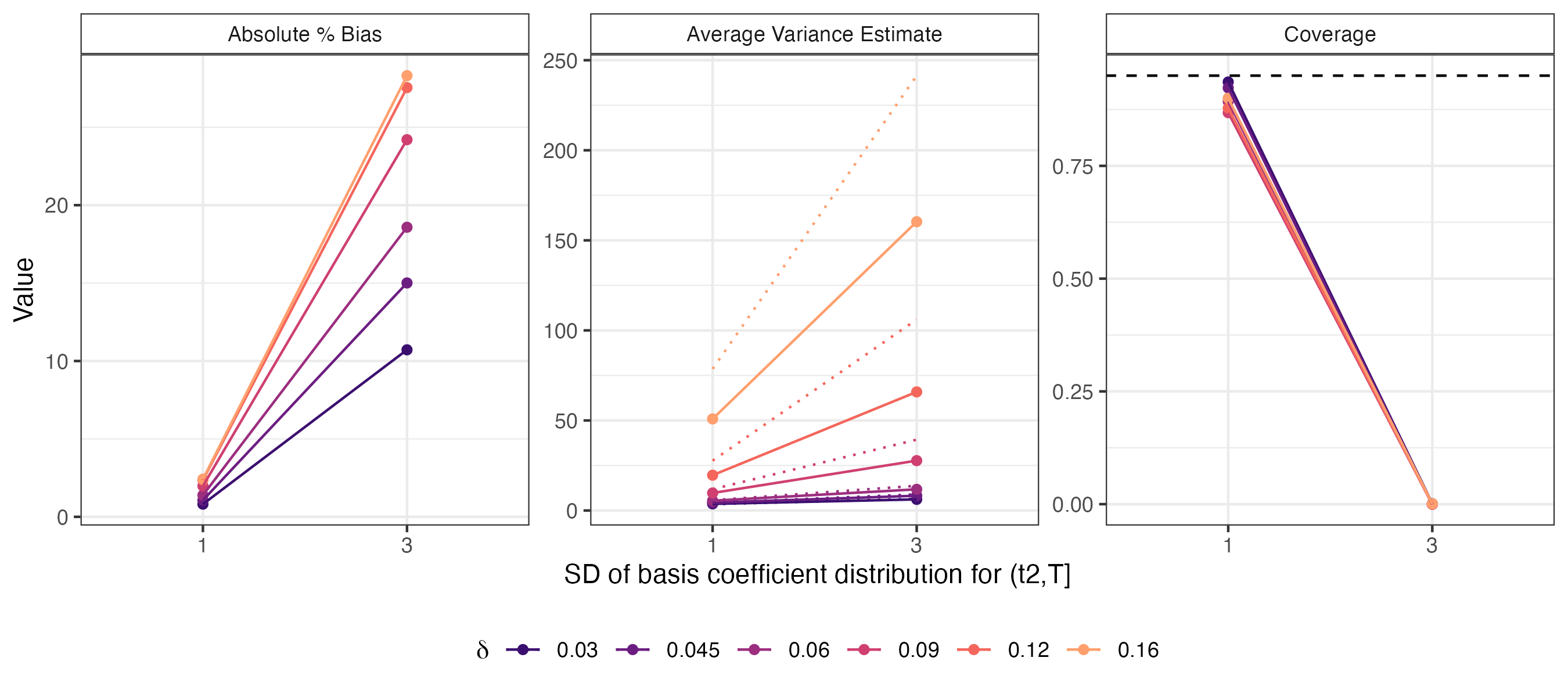}
    \caption{\footnotesize Estimator absolute percent bias, average variance estimate, and coverage by the standard deviation (SD) of distribution of $A^{(3)}_1$ and $\delta$ when assumption \textbf{A$\bm{3}^*$} does not hold for $\hat{\mu}^{Q^*(\delta)}_{\bm{J}(n)}$. Smaller SD indicates the function over $[0, t_1)$ is a stronger predictor of  $A^{(3)}_1$. The dotted line in the middle panel is the empirical estimator variance and the dashed line in the right panel is the nominal coverage level.}
     \label{fig:sim_qstar}
\end{figure}

\begin{figure}[h!]
    \centering
    \includegraphics[width=0.6\linewidth]{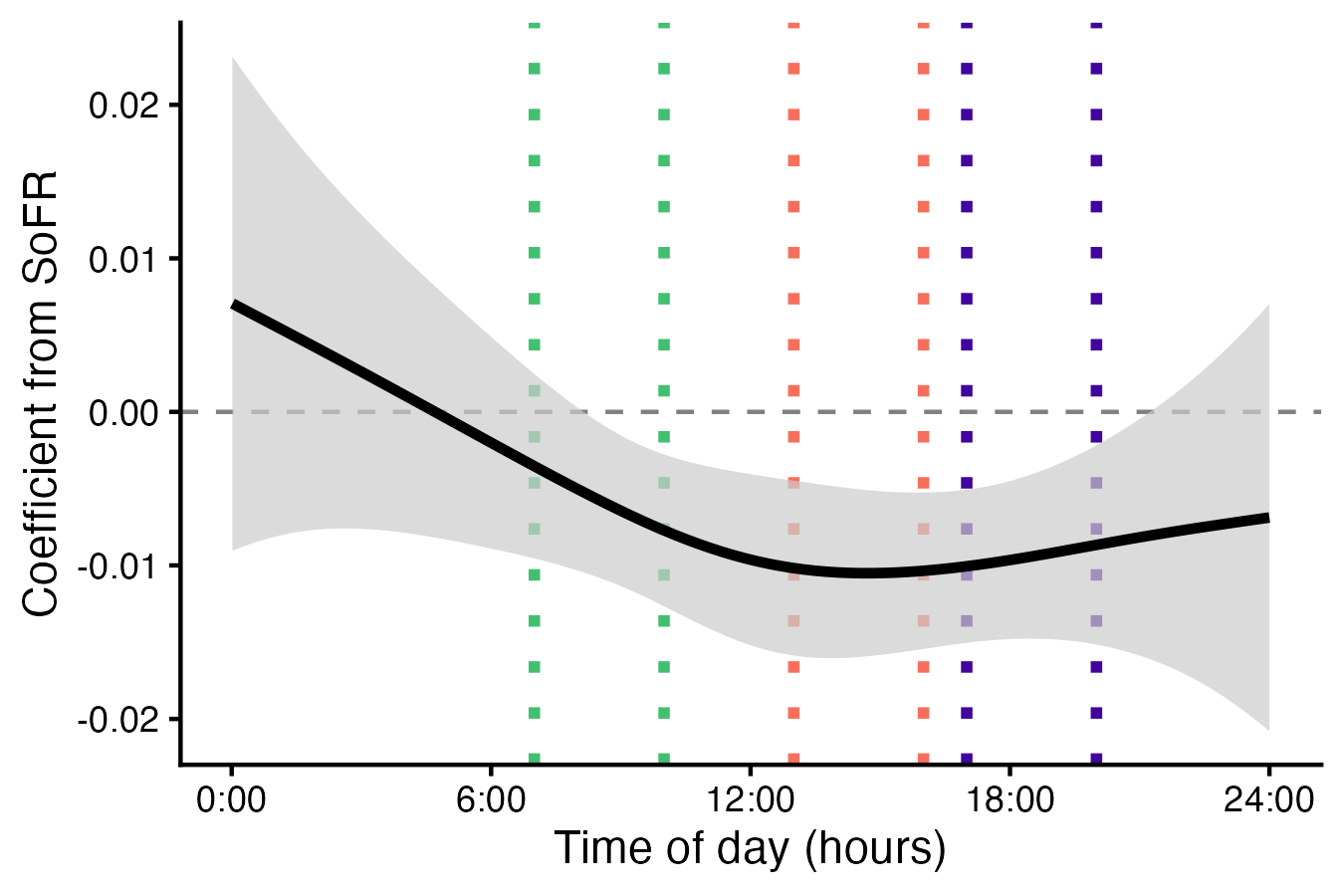}
    \caption{\footnotesize The estimated scalar-on-function regression coefficient associated with minute-level physical activity measured in MIMS. The exponentiated coefficient is the odds ratio of 5-year all-cause mortality for a one-unit increase in physical activity.}
    \label{fig:nhanes_sofr}
\end{figure}

\begin{figure}[h!]
    \centering
    \includegraphics[width=0.8\linewidth]{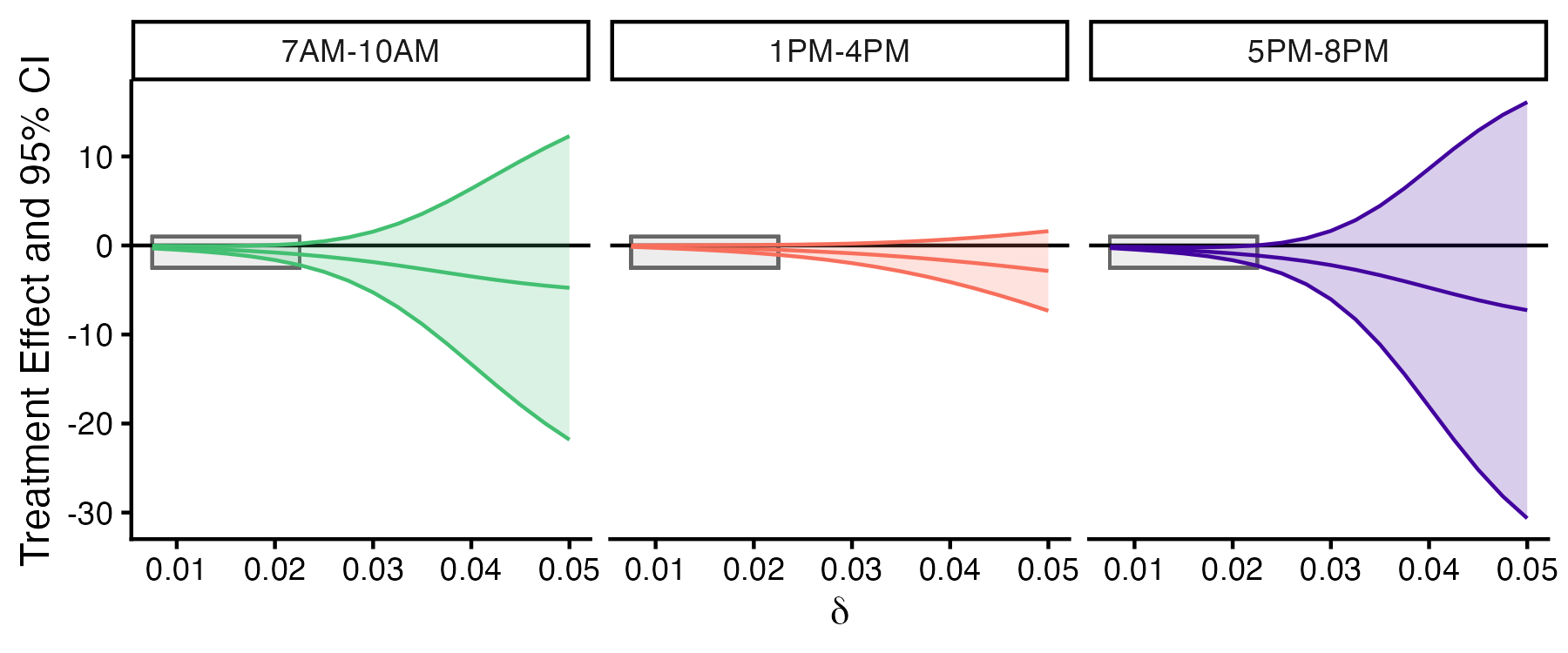}
    \caption{\footnotesize Treatment effect estimates and 95\% confidence bounds for the effect of stochastic policies that define increases in physical activity on 5-year all-cause mortality for the NHANES data. The columns indicate the time period the policy is implemented over.}
    \label{fig:nhanes_te_full_delta}
\end{figure}

\begin{figure}
    \centering
    \includegraphics[width=0.8\linewidth]{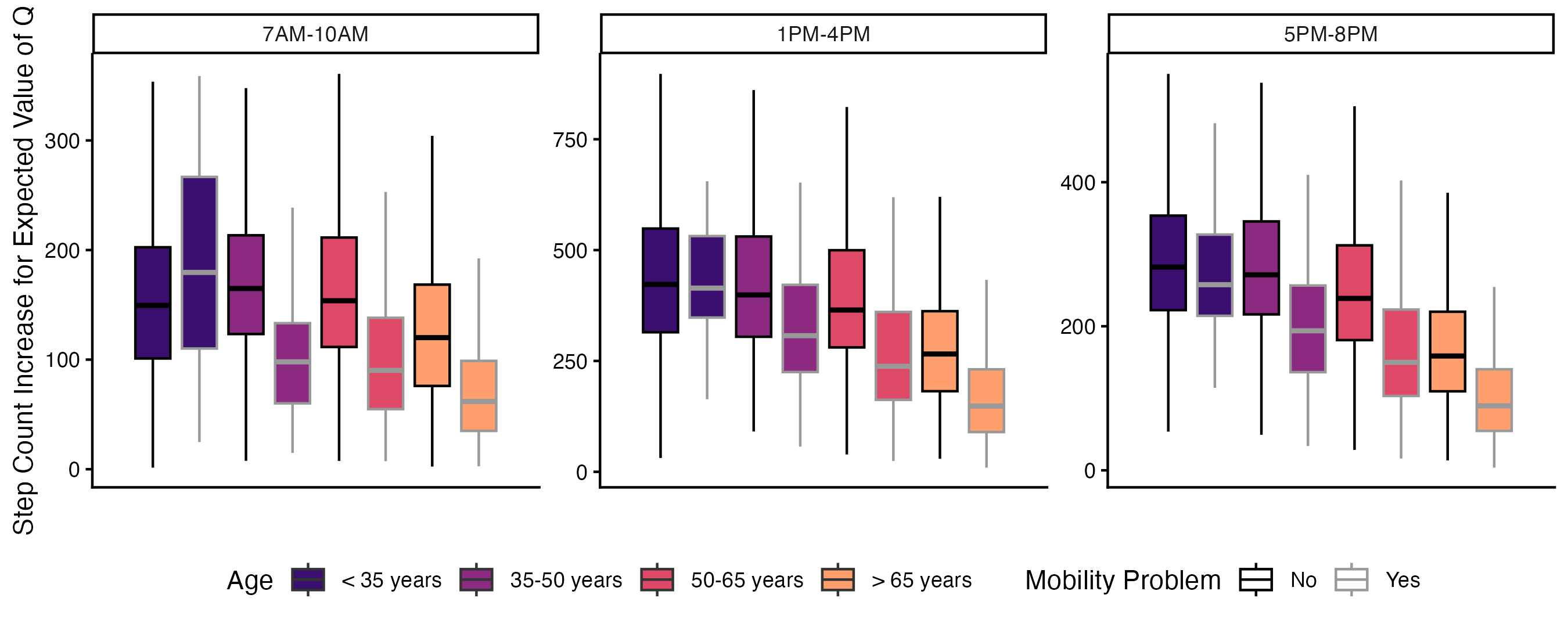}
    \caption{\footnotesize Distribution of the step count increase under the expected value of the stochastic policy for $\delta = 0.020$ by age and mobility status for the NHANES data.}
    \label{fig:nhanes_cov}
\end{figure}

\putbib
\end{bibunit}


\begin{thebibliography}{}

\bibitem[Chen et~al., 2018]{chen_national_2018}
Chen, T.-C., Parker, J.~D., Clark, J., Shin, H.-C., Rammon, J.~R., and Burt, V.~L. (2018).
\newblock National {Health} and {Nutrition} {Examination} {Survey}: {Estimation} {Procedures}, 2011-2014.
\newblock {\em Vital and Health Statistics. Series 2, Data Evaluation and Methods Research}, (177):1--26.

\bibitem[Chernozhukov et~al., 2018]{chernozhukov_doubledebiased_2018}
Chernozhukov, V., Chetverikov, D., Demirer, M., Duflo, E., Hansen, C., Newey, W., and Robins, J. (2018).
\newblock Double/debiased machine learning for treatment and structural parameters.
\newblock {\em The Econometrics Journal}, 21(1):C1--C68.

\bibitem[Crainiceanu et~al., 2024]{crainiceanu_functional_2024}
Crainiceanu, C.~M., Goldsmith, J., Leroux, A., and Cui, E. (2024).
\newblock {\em Functional {Data} {Analysis} with {R}}.
\newblock Chapman and Hall/CRC, New York.

\bibitem[Delaigle and Hall, 2010]{delaigle_defining_2010}
Delaigle, A. and Hall, P. (2010).
\newblock Defining probability density for a distribution of random functions.
\newblock {\em The Annals of Statistics}, 38(2):1171--1193.
\newblock Publisher: Institute of Mathematical Statistics.

\bibitem[Díaz and Hejazi, 2020]{diaz_causal_2020}
Díaz, I. and Hejazi, N.~S. (2020).
\newblock Causal {Mediation} {Analysis} for {Stochastic} {Interventions}.
\newblock {\em Journal of the Royal Statistical Society Series B: Statistical Methodology}, 82(3):661--683.

\bibitem[Díaz et~al., 2023]{diaz_nonparametric_2023}
Díaz, I., Williams, N., Hoffman, K.~L., and Schenck, E.~J. (2023).
\newblock Nonparametric {Causal} {Effects} {Based} on {Longitudinal} {Modified} {Treatment} {Policies}.
\newblock {\em Journal of the American Statistical Association}, 118(542):846--857.
\newblock Publisher: Taylor \& Francis \_eprint: https://doi.org/10.1080/01621459.2021.1955691.

\bibitem[Gao and Hastie, 2022]{gao_lincde_2022}
Gao, Z. and Hastie, T. (2022).
\newblock {LinCDE}: {Conditional} {Density} {Estimation} via {Lindsey}'s {Method}.
\newblock {\em Journal of Machine Learning Research}, 23(52):1--55.

\bibitem[Goldsmith et~al., 2011]{goldsmith_penalized_2011}
Goldsmith, J., Bobb, J., Crainiceanu, C.~M., Caffo, B., and Reich, D. (2011).
\newblock Penalized {Functional} {Regression}.
\newblock {\em Journal of computational and graphical statistics : a joint publication of American Statistical Association, Institute of Mathematical Statistics, Interface Foundation of North America}, 20(4):830--851.

\bibitem[Hall and Hosseini-Nasab, 2006]{hall_properties_2006}
Hall, P. and Hosseini-Nasab, M. (2006).
\newblock On properties of functional principal components analysis.
\newblock {\em Journal of the Royal Statistical Society: Series B (Statistical Methodology)}, 68(1):109--126.
\newblock \_eprint: https://rss.onlinelibrary.wiley.com/doi/pdf/10.1111/j.1467-9868.2005.00535.x.

\bibitem[Hashimoto et~al., 2025]{hashimoto_positive_2025}
Hashimoto, K., Dora, K., Murakami, Y., Matsumura, T., Yuuki, I.~W., Yang, S., and Hashimoto, T. (2025).
\newblock Positive impact of a 10-min walk immediately after glucose intake on postprandial glucose levels.
\newblock {\em Scientific Reports}, 15(1):22662.
\newblock Publisher: Nature Publishing Group.

\bibitem[Hijikata and Yamada, 2011]{hijikata_walking_2011}
Hijikata, Y. and Yamada, S. (2011).
\newblock Walking just after a meal seems to be more effective for weight loss than waiting for one hour to walk after a meal.
\newblock {\em International Journal of General Medicine}, 4:447--450.

\bibitem[Jiang et~al., 2026]{jiang_estimating_2026}
Jiang, Z., Cui, E., and Huling, J.~D. (2026).
\newblock Estimating causal effects of functional treatments with modified functional treatment policies.
\newblock arXiv:2602.09145 [stat].

\bibitem[John et~al., 2019]{john_open-source_2019}
John, D., Tang, Q., Albinali, F., and Intille, S. (2019).
\newblock An {Open}-{Source} {Monitor}-{Independent} {Movement} {Summary} for {Accelerometer} {Data} {Processing}.
\newblock {\em Journal for the measurement of physical behaviour}, 2(4):268--281.

\bibitem[Kennedy, 2019]{kennedy_nonparametric_2019}
Kennedy, E.~H. (2019).
\newblock Nonparametric {Causal} {Effects} {Based} on {Incremental} {Propensity} {Score} {Interventions}.
\newblock {\em Journal of the American Statistical Association}, 114(526):645--656.
\newblock Publisher: Taylor \& Francis \_eprint: https://doi.org/10.1080/01621459.2017.1422737.

\bibitem[Koffman and Muschelli, 2025]{PhysioNet-minute-level-step-count-nhanes-1.0.0}
Koffman, L. and Muschelli, J. (2025).
\newblock {Minute level step counts and physical activity data from the National Health and Nutrition Examination Survey (NHANES) 2011-2014}.
\newblock {\em {PhysioNet}}.
\newblock Version 1.0.0.

\bibitem[Leroux et~al., 2024]{leroux_nhanes_2024}
Leroux, A., Cui, E., Smirnova, E., Muschelli, J., Schrack, J.~A., and Crainiceanu, C.~M. (2024).
\newblock {NHANES} 2011-2014: {Objective} {Physical} {Activity} {Is} the {Strongest} {Predictor} of {All}-{Cause} {Mortality}.
\newblock {\em Medicine and Science in Sports and Exercise}, 56(10):1926--1934.

\bibitem[Leroux et~al., 2021]{leroux_quantifying_2021}
Leroux, A., Xu, S., Kundu, P., Muschelli, J., Smirnova, E., Chatterjee, N., and Crainiceanu, C. (2021).
\newblock Quantifying the {Predictive} {Performance} of {Objectively} {Measured} {Physical} {Activity} on {Mortality} in the {UK} {Biobank}.
\newblock {\em The Journals of Gerontology: Series A}, 76(8):1486--1494.

\bibitem[Robins, 1986]{robins_new_1986}
Robins, J. (1986).
\newblock A new approach to causal inference in mortality studies with a sustained exposure period—application to control of the healthy worker survivor effect.
\newblock {\em Mathematical Modelling}, 7(9):1393--1512.

\bibitem[Robins et~al., 2000]{robins_marginal_2000}
Robins, J.~M., Hernán, M.~A., and Brumback, B. (2000).
\newblock Marginal structural models and causal inference in epidemiology.
\newblock {\em Epidemiology}, 11(5):550--560.

\bibitem[Schindl et~al., 2026]{schindl_incremental_2026}
Schindl, K., Shen, S., and Kennedy, E.~H. (2026).
\newblock Incremental effects for continuous exposures.
\newblock arXiv:2409.11967 [stat].

\bibitem[Smirnova et~al., 2020]{smirnova_predictive_2020}
Smirnova, E., Leroux, A., Cao, Q., Tabacu, L., Zipunnikov, V., Crainiceanu, C., and Urbanek, J.~K. (2020).
\newblock The {Predictive} {Performance} of {Objective} {Measures} of {Physical} {Activity} {Derived} {From} {Accelerometry} {Data} for 5-{Year} {All}-{Cause} {Mortality} in {Older} {Adults}: {National} {Health} and {Nutritional} {Examination} {Survey} 2003–2006.
\newblock {\em The Journals of Gerontology: Series A}, 75(9):1779--1785.

\bibitem[Tan et~al., 2025]{tan_causal_2025}
Tan, R., Huang, W., Zhang, Z., and Yin, G. (2025).
\newblock Causal {Effect} of {Functional} {Treatment}.
\newblock {\em Journal of Machine Learning Research}, 26(91):1--39.

\bibitem[Wang et~al., 2026]{wang_flexible_2026}
Wang, J., Wong, R. K.~W., Zhang, X., and Chan, K. C.~G. (2026).
\newblock Flexible {Functional} {Treatment} {Effect} {Estimation}.
\newblock {\em Journal of Machine Learning Research}, 27(16):1--48.

\bibitem[Ying, 2024a]{ying_causality_2024-1}
Ying, A. (2024a).
\newblock Causality for {Complex} {Continuous}-time {Functional} {Longitudinal} {Studies} with {Dynamic} {Treatment} {Regimes}.
\newblock arXiv:2406.06868 [math] version: 1.

\bibitem[Ying, 2024b]{ying_causality_2024}
Ying, A. (2024b).
\newblock Causality for {Functional} {Longitudinal} {Data}.
\newblock In {\em Proceedings of the {Third} {Conference} on {Causal} {Learning} and {Reasoning}}, pages 665--687. PMLR.
\newblock ISSN: 2640-3498.

\bibitem[Zhang et~al., 2021]{zhang_covariate_2021}
Zhang, X., Xue, W., and Wang, Q. (2021).
\newblock Covariate balancing functional propensity score for functional treatments in cross-sectional observational studies.
\newblock {\em Computational Statistics \& Data Analysis}, 163:107303.

\end{thebibliography}


\begin{thebibliography}{}

\bibitem[Koffman et~al., 2025]{koffman_comparing_2025}
Koffman, L., Crainiceanu, C., and Muschelli, J. (2025).
\newblock Comparing {Step} {Counting} {Algorithms} for {High}-{Resolution} {Wrist} {Accelerometry} {Data} in {NHANES} 2011-2014.
\newblock {\em Medicine and Science in Sports and Exercise}, 57(4):746--755.

\bibitem[Koffman and Muschelli, 2025]{PhysioNet-minute-level-step-count-nhanes-1.0.0}
Koffman, L. and Muschelli, J. (2025).
\newblock {Minute level step counts and physical activity data from the National Health and Nutrition Examination Survey (NHANES) 2011-2014}.
\newblock {\em {PhysioNet}}.
\newblock Version 1.0.0.

\bibitem[Leroux et~al., 2019]{leroux_organizing_2019}
Leroux, A., Di, J., Smirnova, E., Mcguffey, E.~J., Cao, Q., Bayatmokhtari, E., Tabacu, L., Zipunnikov, V., Urbanek, J.~K., and Crainiceanu, C. (2019).
\newblock Organizing and analyzing the activity data in {NHANES}.
\newblock {\em Statistics in biosciences}, 11(2):262--287.

\bibitem[Schindl et~al., 2026]{schindl_incremental_2026}
Schindl, K., Shen, S., and Kennedy, E.~H. (2026).
\newblock Incremental effects for continuous exposures.
\newblock arXiv:2409.11967 [stat].

\bibitem[Straczkiewicz et~al., 2023]{straczkiewicz_one-size-fits-most_2023}
Straczkiewicz, M., Huang, E.~J., and Onnela, J.-P. (2023).
\newblock A “one-size-fits-most” walking recognition method for smartphones, smartwatches, and wearable accelerometers.
\newblock {\em npj Digital Medicine}, 6(1):29.
\newblock Publisher: Nature Publishing Group.

\end{thebibliography}
\end{document}